\newcommand{\1}{{\widehat{\mathds{1}}}}
\newcommand{\W}{\widehat{W}}
\newcommand{\Q}{\widehat{Q}}
\renewcommand{\P}{\widehat{P}}
\newcommand{\BB}{\widehat{\v{b}}}
\newcommand{\CC}{\widehat{\v{c}}}
\newcommand{\DD}{\widehat{d}}
\newcommand{\PP}{\mathbb{P}}
\newcommand{\UCMPO}{\{\W^{(n)}\}}
\newcommand{\Wa}{\widehat{W}^{(1)}}
\newcommand{\Wb}{\widehat{W}^{(2)}}
\newcommand{\Qa}{\widehat{Q}^{(1)}}
\newcommand{\Qb}{\widehat{Q}^{(2)}}
\newcommand{\Pa}{\widehat{P}^{(1)}}
\newcommand{\Pb}{\widehat{P}^{(2)}}
\newcommand{\Z}{\mathbb{Z}}
\newcommand{\N}{\mathbb{N}}
\newcommand{\chiMPO}{D}
\newcommand{\chiMPS}{\chi}
\newcommand{\SM}{$\mathrm{SM}_{y}$}
\newcommand{\tBLG}{tBLG}
\newcommand{\n}[1]{\left| #1 \right|}
\newcommand{\dn}[1]{|| #1 ||}
\newcommand{\st}[1]{\left\{#1\right\}}
\newcommand{\setc}[2]{\{#1\; :  \; #2 \}}
\renewcommand{\v}[1]{\boldsymbol{#1}}
\DeclareMathOperator{\Tr}{Tr}
\DeclareMathOperator{\tr}{tr}
\DeclareMathOperator{\diag}{diag}
\def\IP#1{\mathinner{\langle#1\rangle}}
\newtheorem{thm}{Theorem}
\newtheorem{prop}[thm]{Proposition}
\newtheorem{lemma}[thm]{Lemma}
\theoremstyle{definition}
\newtheorem{defn}[thm]{Definition}
\theoremstyle{remark}
\begin{document}

    \begin{CJK*}{UTF8}{min}

	\title{Efficient simulation of moire materials using the density matrix renormalization group}
	
	\author{Tomohiro Soejima (副島智大)}
	\thanks{These two authors contributed equally}

	\affiliation{Department of Physics, University of California, Berkeley, CA 94720, USA}
	\author{Daniel E. Parker}
	\thanks{These two authors contributed equally}
	\affiliation{Department of Physics, University of California, Berkeley, CA 94720, USA}
	\author{Nick Bultinck}
	\affiliation{Department of Physics, University of California, Berkeley, CA 94720, USA}
	\affiliation{Department of Physics, Ghent university, 9000 Ghent, Belgium}
	\author{Johannes Hauschild}
	\affiliation{Department of Physics, University of California, Berkeley, CA 94720, USA}
	\author{Michael P. Zaletel}
		\affiliation{Department of Physics, University of California, Berkeley, CA 94720, USA}
	\affiliation{Materials Sciences Division, Lawrence Berkeley National Laboratory, Berkeley, California 94720, USA
}
	
		\date{\today}
	
	\begin{abstract}
	We present an infinite density-matrix renormalization group (DMRG) study of an interacting continuum model of twisted bilayer graphene (tBLG) near the magic angle. Because of the long-range Coulomb interaction and the large number of orbital degrees of freedom, tBLG is difficult to study with standard DMRG techniques --- even constructing and storing the Hamiltonian already poses a major challenge. To overcome these difficulties, we use a recently developed compression procedure to obtain a matrix product operator representation of the interacting tBLG Hamiltonian which we show is both efficient and accurate even when including the spin, valley and orbital degrees of freedom.
	To benchmark our approach, we focus mainly on the spinless, single-valley version of the problem where, at half-filling, we find that the ground state is a nematic semimetal. 
    Remarkably, we find that the ground state is essentially a $k$-space Slater determinant, so that Hartree-Fock and DMRG give virtually identical results for this problem. Our results show that the effects of long-range interactions in magic angle graphene can be efficiently simulated with DMRG, and opens up a new route for numerically studying strong correlation physics in spinful, two-valley tBLG, and other moire materials, in future work.

	\end{abstract}
	\maketitle
	
	\end{CJK*}

	\tableofcontents
	\section{Introduction}

	Magic angle twisted bilayer graphene (\tBLG{}) hosts a diverse array of correlated insulating and superconducting phases \cite{Cao,Cao2,Yankowitz,Kerelsky,RutgersSTM,efetov,YazdaniSpectroscopic,Efetovscreening,YoungScreening,Choi,yoo2019Atomic,Sharpe,YoungAH,Tomarken,CascadeYazdani,CascadeShahal,Harpreet,YazdaniChern,AndreiChern,Andreananosquid,EfetovFragile,VafekLi,PabloNematicity}. This rich system has inspired intensive theoretical efforts to understand the origin and mechanism(s) behind these phases, and a large number of theories already have been proposed. One way to assess these proposals --- especially when they are not associated to clear experimental signatures --- is numerical calculation. To that end, this work presents a proof-of-concept density matrix renormalization group (DMRG) \cite{white1992density} study of a microscopically realistic, strongly interacting model of \tBLG{}.\cite{KangVafek}

\subsection{Challenges of \tBLG{} Numerics}

	\begin{figure}
	    \centering
	    \includegraphics{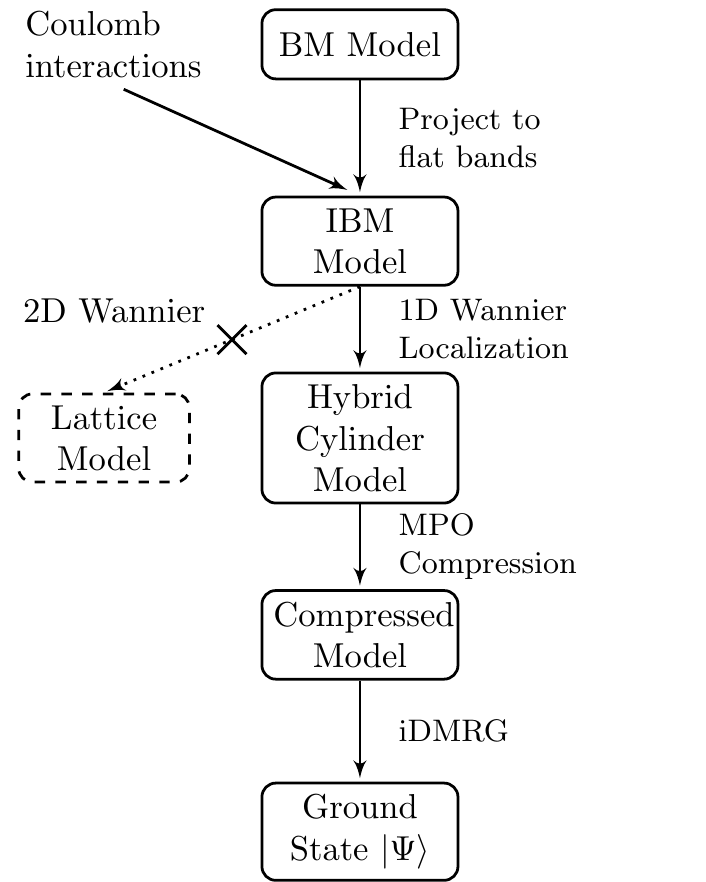}
	    \caption{
	    Flowchart of our approach to DMRG for \tBLG{}. First we start from a continuum BM model, and add Coulomb interactions, projected to the flat bands to reduce the number of degrees of freedom down to a manageable level. Second, we perform hybrid Wannier localization, which maps the model to a cylinder in mixed-$xk$ space, thereby avoiding a topological obstruction and allowing all symmetries to act locally. Third, we use a compression procedure to represent the long-range interactions with a reasonable bond dimension to make DMRG numerically tractable. This allows us to perform DMRG with all $N_B = 8$ components at moderate cylinder radius ($L_y = 6$).
	    }
	    \label{fig:flowchart}
	\end{figure}

	Let us review what makes the \tBLG{} problem so numerically challenging, and identify a viable path around the obstacles. 
    The first obstacle is the separation in scales between the graphene lattice constant $a$ and the moir\'e length scale $L_M$; at the magic angle $L_M/a \sim 1/\theta_M \sim 50$, so the moir\'e unit cell contains over $10,000$ carbon atoms, and consequently the superlattice band structure contains $N_B \sim 10,000$ bands.  
	Fortunately, various treatments of the band structure \cite{bistritzer2011Moire,NamKoshino,CarrKaxiras,Bernevig} (including the Bitzritzer-MacDonald (BM) continuum model \cite{bistritzer2011Moire} used here) reveal that the flat bands of interest  are separated from the tower of ``remote'' bands by gaps of order \SIrange{20}{25}{\milli\electronvolt} (see Fig.~\ref{fig:IBM_model}(c)). 
	Since these gaps are larger than the Coulomb scale  $E_C = \frac{e^2}{4 \pi \epsilon_0\epsilon_r L_M} \sim$ \SIrange{10}{20}{\milli\electronvolt} (using a relative permittivity $\epsilon_r = 12$ -- $6$), it is a reasonable starting point to project the Coulomb interaction $V(r)$ into the flat bands.\footnote{Hartree-Fock studies which include the remote bands do find that they have a quantitative effect (for example, on the magnitudes of the symmetry-broken gaps), but there are some discrepancies regarding their qualitative importance \cite{bultinck2019Ground,XieMacDonald}.} 
	Each spin and valley of the graphene has \emph{two} flat bands, for a total of eight,  winning us a reduction from $N_B = 10,000 \searrow 8$. We refer to this as the ``Interacting Bitzritzer-MacDonald (IBM) model,'' although our method works just as well for improved continuum models of \tBLG{} which take into account effects like lattice relaxation. The touching of these two bands is locally protected by a crucial $C_2 \mathcal{T}$ symmetry (a 180-degree rotation combined with time-reversal), which distinguishes \tBLG{} from other moir\'e materials. The Coulomb scale $E_C$ is much larger than the bandwidth $t \sim $ \SI{5}{\milli\electronvolt}, so \textit{a priori} unbiased, strongly-interacting numerical approaches such as exact diagonalization \cite{repellin2020Ferromagnetism}, determinantal quantum Monte Carlo,  or DMRG are required. 

	Most strongly-interacting approaches proceed from a real space lattice model, so a natural next step is to construct a lattice model via 2D Wannier localization of the continuum Bloch bands.
	In real space, the density of states of the flat bands is predominantly located on the AA-stacking regions of the moir\'e unit cell, which form a triangular lattice (see Fig.~\ref{fig:IBM_model}d). So one might hope that the physics is then well-described by an 8-component triangular lattice Hubbard model.
	However, there is a topological obstruction which complicates this approach: the flat bands possess ``fragile topology'' which makes their Wannier localization very subtle \cite{Po,Po2,ZouPo,Bernevig,Hejazi,LiuDai,ahn2019Failure}. In particular, the presence of $C_2\mathcal{T}$, valley conservation $U_v(1)$ and translation make it impossible to Wannier localize the flat bands in a manner where $U_v(1)$ and $C_2\mathcal{T}$ both act in a strictly local fashion. This is somewhat analogous to the obstruction to finding a Wannier basis for a 2D topological insulator under the requirement that $\mathcal{T}$ acts as a permutation of the orbitals \cite{VanderbiltTI}.
	
	Two resolutions to the Wannier obstruction issue have been proposed in the literature. The conceptually simplest is to include some number of remote bands, at minimum $N_B = 8 \to 20$, which removes the topological obstruction and allows for a local symmetry action \cite{Po2}. But from a DMRG standpoint, a model with 20 orbitals per unit cell, all strongly-interacting, appears to be numerically intractable.
	The other approach is to simply ignore the symmetry considerations and Wannier-localize in a basis which hybridizes different valleys or $C_2\mathcal{T}$ sectors. In this approach, for example, valley number conservation $U_v(1)$ becomes slightly non-local, and the associated charge takes the form $Q_V = \sum_{i,j,m,n} Q^{ij}_{mn} \hat{c}_{m,i}^\dagger \hat{c}_{n,j}$ where the sum runs over all sites $i,j$ and internal degrees of freedom $m,n$. The matrix elements $Q^{ij}_{mn}$ fall off with distance $|r_i - r_j|$ \cite{Po}. Intriguingly, the Wannier orbitals then take the shape of three-lobed ``fidget spinners" connecting three nearby AA regions \cite{Po,KangVafekPRX,KoshinoYuan}. In this basis, the Coulomb interaction is not dominated by a $U \hat{n}^2$ Hubbard interaction, but instead contains a profusion of all allowed $V^{ijk\ell}_{mnop} \hat{c}_{m, i}^\dagger \hat{c}_{n, j}^\dagger \hat{c}_{o, k} \hat{c}_{p,\ell}$ terms which decay exponentially over a few moir\'e sites \cite{KangVafekPRX,KoshinoYuan,KangVafekPRL}. Numerically, however, the interactions must be cut off at some finite range, which will spuriously break either the $U_v(1)$ or $C_2 \mathcal{T}$ symmetry due to the non-local form they take. This runs the risk of biasing the results by explicitly breaking a symmetry which should be preserved, and would require careful extrapolation of the tails to ensure the correct results. While not necessarily unworkable (in particular, see Ref. \cite{WangVafek}), in our estimation this approach makes numerical results  delicate  to interpret.

    Fortunately, DMRG is a 1D algorithm, which allows us to avoid the construction of 2D Wannier orbitals altogether.
    When DMRG is applied to the cylinder geometry, the model must be in a localized basis along the length of the cylinder, to ensure favorable entanglement properties, but it does not need to be in a localized basis around its circumference.
    Therefore, we can consider  ``hybrid'' real-space/momentum-space Wannier states which are maximally localized along the length of the cylinder $x$, but $T_y$-eigenstates around its circumference $y$ (see Fig \ref{fig:wannier_real_space}).
    There is no topological obstruction to the construction of hybrid Wannier states, making them an attractive basis for the flat bands of magic angle graphene, as was also recognized by the authors of Refs. \cite{bultinck2020Mechanism, KangVafek,HejaziHybridWannier,KwanDomainwalls}. 
	Geometrically, this defines a model on a cylinder, with real-space in the $x$ direction and $k$-space around the circumference\cite{motruk2016Density}.
	The hybrid approach allows the  $U_v(1)$, $C_2 \mathcal{T}$, and translation symmetries to all act locally without adding extraneous degrees of freedom. This is exactly the approach used by Kang and Vafek in their recent DMRG study of \tBLG{} \cite{KangVafek}, and it is the approach we take as well. 
	
	The hybrid approach is not without challenge, however, because upon mapping the orbitals to a 1D fermion chain for input into the DMRG, the effective Hamiltonian is quite long-ranged.
	The localization width of the Wannier orbitals is comparable to the moir\'e scale, so all the sites in a single column of the cylinder are strongly overlapping, generating a panoply of couplings $V^{ijk\ell}_{mnop} \hat{c}_{m,i}^\dagger \hat{c}_{n,j}^\dagger \hat{c}_{o,k} \hat{c}_{p,\ell}$. Though these decay exponentially with distance, a cylinder with circumference $L_y = 6$ with both spin and valley has on the order of $850,000$ non-negligible (i.e. above $10^{-2}$ \si{\milli\electronvolt}) matrix elements per unit cell.
	
	A similar problem is encountered in the context of cylinder-DMRG for the fractional quantum Hall effect \cite{ZaletelMulticomponent}, or finite-DMRG simulations for quantum chemistry problems \cite{ChanKeselman}. There, as here, it is essential to use tensor network methods to ``compress'' the $V^{ijkl}$ as a matrix product operator (MPO). To do so, we leverage a recent algorithm for black-box compression of Hamiltonian MPOs with various optimality properties \cite{parker2019local}.
	We find that for a circumference $L_y = 6$ cylinder, the spinless / single-valley problem ($N_B = 2$) requires an MPO bond dimension of $D \sim 100$ for physical observable to obtain a relative precision of $10^{-2}$, while in the spinful / valleyful $N_B = 8$ case, we estimate the required bond dimension to be $D \sim 1000$. While large, these values are tractable, especially when exploiting the charge, spin, valley, and $k_y$ quantum numbers.

	\subsection{Overview of DMRG Results}
	
	After presenting details of the interacting \tBLG{} Hamiltonian and its MPO compression, we apply our approach in detail to a ``toy'' $N_B = 2$ problem in which we keep only valley $K$ and spin $\uparrow$; more physical models are reserved for future work.
    When filling 1 of the 2 bands, this scenario is conceptually similar to fillings $\nu = -3, 3$ of \tBLG{} under the assumption that these fillings are spin and valley polarized. However, we caution the reader that our results are not a quantitative prediction for these fillings because the toy model differs from $|\nu| = 3$ of \tBLG{} by a 4$\times$ difference in the magnitude of the Hartree potential generated relative to neutrality. We've made this choice so that we can quantitatively compare with Refs.~\cite{KangVafek} prior results; the ``physical'' $|\nu| = 3$ result, which differs in some interesting respects, will be presented in a future work.
	
	Following Ref.~\cite{KangVafek}, we fix $\theta = 1.05^\circ$ and vary the ratio of the AA and AB inter-layer tunneling hopping strengths $w_0 / w_1$ from $0$ to $0.9$.
	While physically $w_0 / w_1 \sim 0.8$ \cite{NamKoshino,KoshinoYuan,CarrKaxiras}, the resulting phase diagram is conceptually interesting
	because the dominant effect of $w_0 / w_1$ is to redistribute the Berry curvature of the flatbands, rather than changing their bandwidth, revealing that  the former is crucial to the physics. As a pr\'ecis of our findings, 
	\begin{enumerate}
	\item In agreement with Ref.~\cite{KangVafek}, we find that below a critical value  $w_0 / w_1 \lesssim 0.8 $, the ground state spontaneously breaks the $C_2\mathcal{T}$ symmetry, forming a quantum anomalous Hall state (Chern insulator), with $C = \pm 1$.
	
	\item  In agreement with Ref.~\cite{KangVafek}, above $w_0 / w_1 \gtrsim 0.8 $, the $C_2 \mathcal{T}$ is restored. In this region the DMRG results of Ref.~\cite{KangVafek} did not reliably converge, but their mean-field calculations suggested either a  ``nematic $C_2 \mathcal{T}$ symmetric semimetal'' first proposed in Ref.~\cite{liu2020Nematic}, or a gapped $C_2 \mathcal{T}$-symmetry stripe \cite{KangVafek}.
	Our DMRG numerics reliably converge to a  state in excellent agreement with  the nematic $C_2 \mathcal{T}$-semimetal, with two band touchings near the $\Gamma$ point.
	
    \item  We analyze the $k$-space  electron correlation function $P_{mn}(\mathbf{k}) = \langle c^\dagger_{n, \mathbf{k}} c_{m, \mathbf{k}} \rangle$ of the DMRG ground state, which can be directly compared with Hartree-Fock calculations. We find that both phases are extremely well captured by a single $k$-space Slater determinant (to within $\approx $ 1\%), strongly supporting the validity of recent Hartree-Fock studies \cite{bultinck2019Ground,liu2020Nematic,XieMacDonald,KangVafek,Choi,YazdaniSpectroscopic,Cea,HejaziHybridWannier,KwanDomainwalls,KwanExciton}. 

    \item Finally, we compare the energy of the DMRG ground state with various competing variational ansatz such as the $C_2 \mathcal{T}$-stripe ansatz proposed in Ref.~\cite{KangVafek}. In agreement with their result, we find that the nematic semimetal and the $C_2 \mathcal{T}$ stripe compete at the order of 0.1 meV per unit cell.
	 \end{enumerate}

\begin{figure*}
    \centering
    \includegraphics[width=\textwidth]{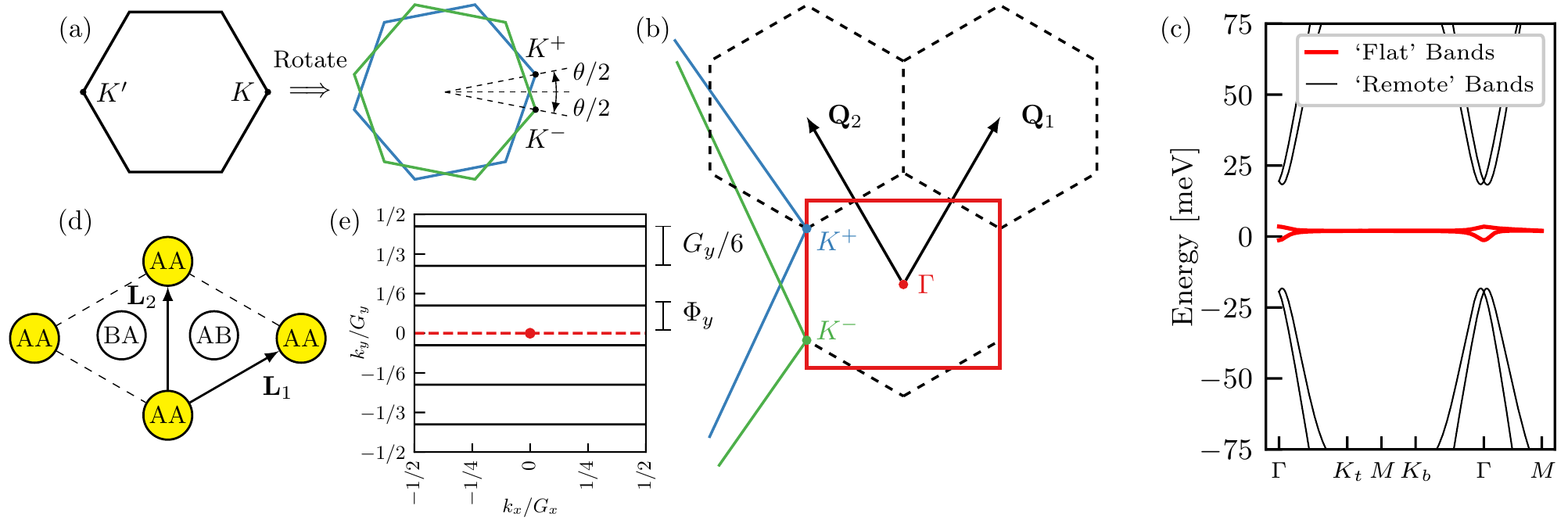}
    \caption{(a) The BM model for bilayer graphene is constructed from two regular graphene Brillouin zones, rotated by $\pm \theta/2$. (b) Zoomed view of (a) showing the mini (or moir\'e) Brillouin zone. We choose a square mBZ (thick red lines) for numerical convenience. (c) The band structure of the BM model over the mBZ showing the flat bands (red lines). The interacting BM model is defined by adding Coulomb interactions to the BM model and projecting to the flat bands. (d) Schematic of the real space moir\'e unit cell with Bravais lattice vectors $\v{L}_{1,2}$. (e) The IBM model on a cylinder has $N_y$ discrete momentum cuts at $k_y$ values given by Eq.~\eqref{eq:ky_cuts}, offset  by $\Phi_y$.
    }
    \label{fig:IBM_model}
\end{figure*}

	The remainder of this work is organized as follows. Section \ref{sec:IBM_model} introduces our model: an interacting Bistritzer-MacDonald model, equipped with long range Coulomb interactions, and projected to the flat bands. Section \ref{sec:MPOs} discusses how the model may be expressed as a Matrix Product Operator and both why and how it must be compressed to perform DMRG. Section \ref{sec:half_filling_physics} provides the results of DMRG calculations, and shows that Hartree-Fock accurately captures the ground state physics in this model. Section \ref{sec:SM_phase} discusses the nature of the nematic $C_2\mathcal{T}$-semimetal. We conclude in Section \ref{sec:conclusion}. Extensive Appendices describe all details needed to reproduce our results. Appendix \ref{app:IBM_model} details the IBM model. Appendix \ref{app:gauge_fixing} deals with the Wannier localization and the gauge choice we make. App.~\ref{app:uncompressed_MPO_construction} constructs the pre-compression MPO: an infinite MPO with arbitrary long range 4-body interactions. App.~\ref{app:UCMPO_compression} provides the algorithm for MPO compression, as well as rigorous error bounds.  Finally, App.~\ref{app:numerical_details} explains the extensive numerical cross-checks we performed to ensure the accuracy of our results.

\section{The IBM Model}
\label{sec:IBM_model}

This section describes the interacting generalization of the Bistritzer-MacDonald (BM) model we use in this work. We first briefly recall the BM model and the geometry of the mini-Brillouin zone (mBZ), then discuss how interactions are added. We then show how the model can be placed on a cylindrical geometry, and conclude with the symmetries of the model.

\subsection{Continuum Model}

Our starting point is the single-particle Bistritzer-MacDonald (BM) model \cite{bistritzer2011Moire}, composed of two layers of graphene, with relative twist angle $\theta$, coupled together by a spatially-varying moir\'e potential. The potential is governed by two parameters, $w_0$ and $w_1$, which specify the AA / AB interlayer tunneling respectively. 
DFT calculations which account for lattice relaxation find that $w_1 = \SI{109}{\milli\electronvolt}$ and $w_0 / w_1 \approx 0.8$ \cite{NamKoshino,KoshinoYuan,CarrKaxiras}, but here we will treat $w_0/w_1$ as an axis of the phase diagram. 
We maximize the ratio of band gap to band width for the flat bands by setting $\theta_{BM} \sim \ang{1.05}$. Figure \ref{fig:IBM_model} details our choice of conventions. In particular, we work with a rectangular mBZ grid for numerical convenience.

We now define an interacting Bistritzer-MacDonald (IBM) model where double-gate screened Coulomb interactions are added to the single-particle model.
As the interactions are much larger than the spectral width of the flat bands, but smaller than the gap to nearby bands, we expect interactions to act quite non-perturbatively inside the flat bands and perturbatively between seperated bands. We therefore project the interactions to the two flat bands, akin to models of the fractional quantum hall effect \cite{qi2011Generic}. Our presentation will focus on a single spin and valley, but their inclusion is conceptually identical: we promote $2 \to 8$. Consider a vector of fermions $\v{f}^{\dagger}_{\v{k}} = \left(f^\dagger_{1,\v{k}},f^\dagger_{2,\v{k}} \right)$ running over the two nearly flat bands. The Hamiltonian is then given by

\begin{equation}\label{eq:IBM}
    \hat{H} = \sum_{\v{k}\in \text{mBZ}} \v{f}^{\dagger}_{\v{k}} h(\v{k}) \v{f}_{\v{k}}  + \frac{1}{2A} \sum_{\v{q}} V_{\v{q}} : \rho_{\v{q}} \rho_{-\v{q}} :\, .
\end{equation}
The single-particle term $h(\v{k})$ contains not only the flat band energies of the BM model, but also band renormalization terms coming from the interaction with the filled remote bands, and a subtraction to avoid double counting of Coulomb interaction effects.
We refer to Appendix \ref{app:IBM_model} for more details.

The second term in Eq.~\eqref{eq:IBM} corresponds to the dual gate-screened Coulomb interaction, with $A$ being the sample area and $V_{\v{q}} = e^2 \tanh(\n{\v{q}} d)/ (2\epsilon_r\epsilon_0 \n{\v{q}})$. The screened Coulomb potential depends on two parameters $\epsilon_r$ and $d$, which respectively are the relative permittivity and the distance between the twisted bilayer graphene device and the metallic gates. 
While the effective dielectric constant of the typical substrate, hBN, is $\epsilon_r \approx 4.4$, here we use $\epsilon_r = 12$ in order to phenomenologically account for screening from the remote bands of the tBLG \ref{app:IBM_model}. This sets the typical interaction energy scale to be several \si{\milli\electronvolt}.
For the gate distance we choose $d = 10$ nm to facilitate comparison with Ref.~\cite{KangVafek}.
The Fourier components of the flat-band projected charge density operator are given by
\begin{equation}
    \rho_{\v{q}} = \sum_{\v{k} \in \mathrm{mBZ}} \v{f}_{\v{k}}^{\dagger} \Lambda_{\v{q}}(\v{k}) \v{f}_{\v{k}+ \v{q}}\, ,
\end{equation}
where the $2\times 2$ form factor matrices $[\Lambda_{\v{q}}(\v{k})]_{ab} =   \braket{\psi_{a,\v{k}}|e^{-i \v{q} \cdot \v{r} }| \psi_{b,\v{k}+\v{q}}}$ are defined in terms of overlaps between the Bloch states of the BM model.

The model enjoys several global symmetries: time reversal followed by in-plane rotation $C_2 \mathcal{T}$, out-of-plane $C_{2x}$ rotation, and $C_3$ rotation. We will describe their action on the basis states explicitly below. In summary, the spinless, single-valley IBM model we have described is a strongly interacting many-body problem defined in momentum space over the mini-Brillouin Zone.

\subsection{Cylinder Model}
\label{sec:cylinder_model}

Our goal is to perform quasi-2D DMRG on Eq.~\eqref{eq:IBM}. To this end, we work in an infinite cylinder geometry of circumference $N_y$ with a mixed real and momentum space representation of the model. In the momentum space, this corresponds to having $N_y$ momentum cuts through the mBZ at
\begin{equation}
    k_y/G_y = \frac{n + \Phi_y/(2\pi)}{N_y} \pmod 1, \quad 0 \le n \le N_y-1
    \label{eq:ky_cuts}
\end{equation}
where $\Phi_y$ is the amount of flux threaded through the cylinder, which offsets the $y$ momentum as shown in Fig.~\ref{fig:IBM_model}. We will Fourier transform each of these momentum cuts in the $x$ direction, such that our basis states are hybrid Wannier orbitals, periodic in $y$ direction and localized in $x$ direction.\footnote{A further advantage of this mixed representation over ``snaking'' around a real space cylinder is that $k_y$ becomes a good quantum number, reducing the resource cost for a given radius\cite{motruk2016Density,ehlers2017hybrid}.}  We will sometimes call this mixed $xk$ representation.

The choice of real-space basis in the $x$ direction is not unique, but we choose the basis of maximally localized Wannier orbitals. Using the maximally-localized orbitals ensures that the interactions are as short-ranged as possible and hence minimizes the range of the interaction terms in the Hamiltonian and the entanglement of the ground state. 
Due to the relation between maximal Wannier localization and the Bloch Berry connection $\mathbf{A}_{\v{k}}$, this basis will also make manifest their topology. We perform the change of basis: 
\begin{equation}
    \begin{split}
        \hat{c}_{\pm,k_x,k_y} ^{\dagger} &:= U_{\pm,b}(\v{k}) \hat{f}_{b, k_x, k_y}^{\dagger},\\
        \hat{c}_{\pm,n,k_y} ^{\dagger} &:= \int \frac{dk_x}{\sqrt{G_x}} e^{i \v{k} \cdot \v{R}_n} \hat{c}_{\pm,k_x,k_y} ^{\dagger},
    \end{split}
    \label{eq:wannier_change_of_basis}
\end{equation}
where $\hat{c}^\dagger_{\pm, n, k_y}$ is the creation operator for the Wannier orbital for unit cell $n$ in the $x$ direction, $\v{R}_n = n \v{L}_1$ with $\v{L}_1$ the Bravais lattice vector, and $U(\v{k})$ is a $2\times 2$ change-of-basis matrix for the internal (band index) degrees of freedom. The non-trivial topology of the \tBLG{} flat bands \cite{Po,Po2,ZouPo,Bernevig,Hejazi,LiuDai,ahn2019Failure} is made explicit in the hybrid Wannier basis by the fact that the states with subscripts $\pm$ are constructed from bands with Chern numbers $\pm 1$. We will explain this in more detail below. We choose the internal rotations $U(\v{k})$ so that the Wannier orbitals are maximally localized (i.e., their spread in the $x$ direction is minimized). Since the problem is effectively 1D for each $k_y$ cut, we can employ a well-known algorithm \cite{marzari1997Maximally} to deterministically calculate the unique $U(\v{k})$ (up to ($k_y$, $\pm$) dependent phases). Fig.~\ref{fig:wannier_real_space} shows examples of the Wannier orbitals. One can see they are localized in the $x$ direction but extended and periodic in $y$. The charge density is also not uniform in the $y$ direction, but is concentrated in certain regions corresponding to the AA region \cite{bistritzer2011Moire}. For later notational convenience, we also define $\ket{w(\pm, n, k_y)} = \hat{c}^\dagger_{\pm, n, k_y}\ket{0}$.
We emphasize that the $\pm$ basis is not the energy eigenbasis of the single-particle Hamiltonian. In the $\pm$ basis with the gauge convention described later,  the $k.p$ expansion of the two band Hamiltonian around $K^\pm$ points takes the form $h(K^\pm +\v{q}) \propto \mp (q_x \cdot \sigma_x + q_y \cdot \sigma_y)$.

\begin{figure}
    \centering
    \includegraphics[width=\linewidth]{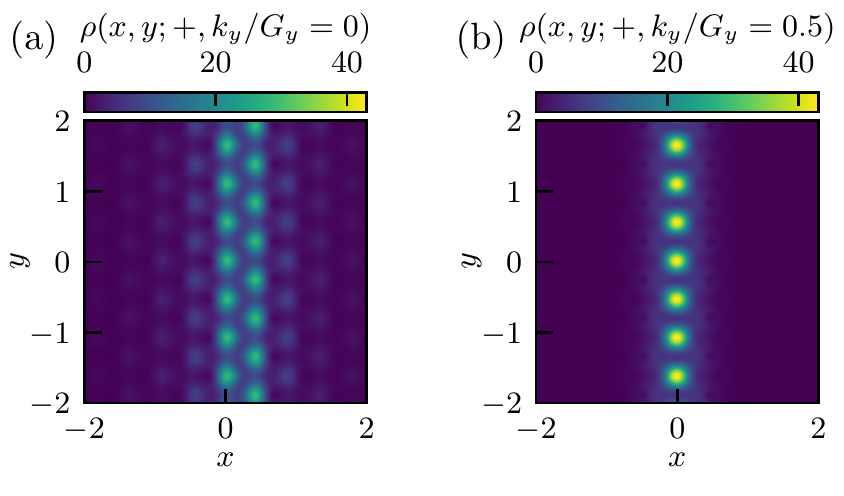}
    \caption{Real space charge density of Wannier orbitals. The orbitals are maximally localized in the $x$ direction and periodic along $y$, with charge densities concentrated in AA stacked regions. The Wannier center and the character changes with $k_y$. }
    \label{fig:wannier_real_space}
\end{figure}

A key physical property of Wannier orbitals is their polarization $P_x(\pm, k_y)$, which can be derived via modern theory of polarization \cite{king-smith1993Theory, resta1993Macroscopic, vanderbilt1993Electric, marzari2012Maximally}. They can be thought of as the center of Wannier orbital inside the zeroth unit cell:  
\begin{equation}
    P_x(\pm,k_y) = \frac{\braket{w(\pm,0,k_y)|\hat{x}|w(\pm,0,k_y)}}{\v{L_1} \cdot \v{e}_1},
\end{equation}
where we normalize the polarization by the $x$ extent of the unit cell.
The polarization is related to the Berry phase along each momentum cut via the Wilson loop $e^{2 \pi i P_x(k_y)} = e^{i  \int A_x(k_x, k_y) dk_x / G_x }$, and is only defined modulo $1$ \cite{marzari2012Maximally}. Therefore $P_x$ returns to itself as $k_y$ sweeps across the  (mini)BZ.
Furthermore, the Chern number of a band is conveniently expressed in terms of the total winding of the polarization,  $C = \int dk_y \frac{dP_x}{d k_y}$. In Fig.~\ref{fig:w0_polarization}, we plot the polarization versus $k_y$ momentum at various different $w_0$. We make two observations: first, we see that the polarization of the plus (minus) band winds from $0$ to $1$ ($0$ to $-1$) as momentum goes from $-0.5$ to $0.5$.
We may therefore identify these bands as having Chern numbers $\pm 1$ --- hence our index convention. On the other hand, the profile of the polarization changes as $w_0/w_1$ increases from $0$ to $0.85$. At $w_0/w_1 = 0$, the slope is constant, and the Wannier orbitals are almost equally spaced in the $x$ direction, reminiscent of the lowest Landau level of a 2D electron gas in a magnetic field. At $w_0/w_1 = 0.85$, however, $P_x$ is constant for most $k_y$ values, and suddenly changes around the $\Gamma$ point.

There is a subtle issue relating our convention for polarization to our choice of gauge for single-particle wavefunctions in the mBZ. Since the polarization increases by $\pm1$ as $k_y$ increases by $2\pi$, we must choose a $k_y$ where the polarization wraps around. We pick the convention that the wrapping $P_x \to P_x \pm 1$ occurs at $k_y =0$, as shown in Fig.~\ref{fig:w0_polarization}. In terms of the Wannier orbitals, this means that their centers of charge move continuously with $k_y$, except at $k_y = 0$ where they ``exit" the unit cell and ``enter" the neighboring unit cell. In terms of the momentum space creation operator $\hat{c}^\dagger_{\pm. \v{k}}$ this corresponds to a a choice of gauge that is smooth in the upper and lower halves of the Brillouin zone, but \textit{discontinuous} across $k_y = 0$. This discontinuity will appear in several figures below.

\begin{figure}
    \centering
    \includegraphics[width = \linewidth]{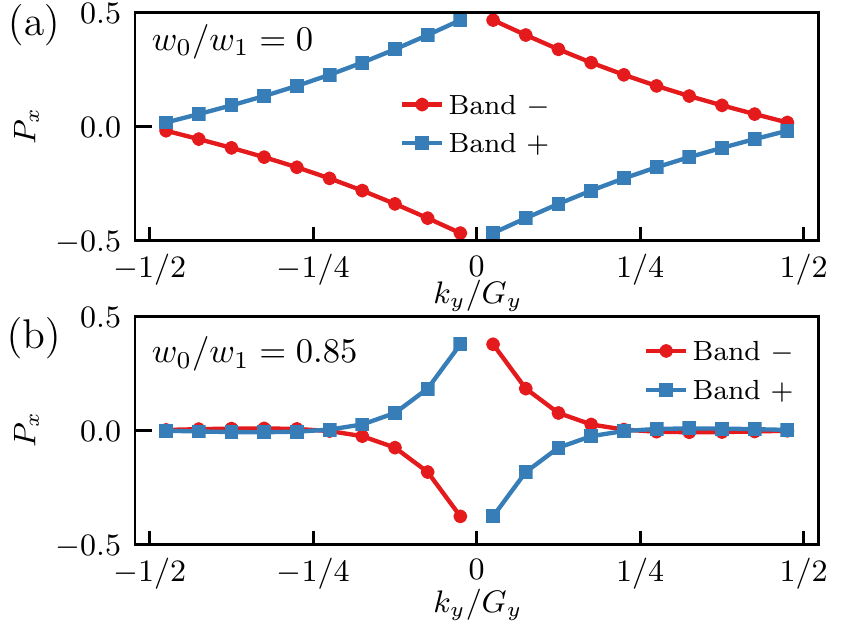}
    \caption{Polarization $P_x$ as a function of $k_y$ and $w_0/w_1$. At each value of $w_0/w_1$, there are two bands with Chern number $\pm 1$, which wraps once as $k_y/G_y$ increases by $1$. This requires a single discontinuity in $P_x$, which we have chosen to place at $k_y = 0$.  One can see that $P_x$ is more linear for $w_0/w_1 = 0$, reflecting flatter Berry curvature.}
    \label{fig:w0_polarization}
\end{figure}

Finally, let us give the explicit action of global symmetries on our basis states. We first note the $C_3$ symmetry of the continuum model is weakly broken by the cylindrical geometry and is no longer an explicit symmetry of the model. We also note that for flux values $\Phi_y \neq 0 \mod \pi$, the $C_{2x}$ symmetry is not present.

Similar to Ref.~\cite{KangVafek}, we partially fix the gauge of the flat band Bloch states such that the symmetries act in a simple way on the hybrid Wannier orbitals:
\begin{equation}
\begin{split}
    T_{L_1}\ket{w(\pm, n, k_y)} & = \ket{w(\pm, n + 1, k_y)}\\
    T_{L_2}\ket{w(\pm, n, k_y)} & = e^{i 2\pi k_y}\ket{w(\pm, n, k_y)}\\
    C_2\mathcal{T} \ket{w(\pm, n, k_y)} & = \ket{w(\mp, -n, k_y)} \\
    C_{2x} \ket{w(\pm, n, k_y)} & = \mp i e^{-i2\pi k_y n}\ket{w(\mp, n, -k_y)}
\end{split}
\label{eq:symmetry_action_cylinder}
\end{equation}
where the last equation holds only at $C_{2x}$ symmetric flux values.
The first two definitions are the consequence of Eq.~\eqref{eq:wannier_change_of_basis}, while the latter two come from demanding the following actions in momentum space:

\begin{equation}
\begin{split}
    C_2\mathcal{T} \hat{c}^\dagger_{\pm, \v{k}} (C_2\mathcal{T})^{-1} &= \sigma^x K \hat{c}^\dagger_{\pm, \v{k}}, \\ 
    C_{2x} \hat{c}^\dagger_{\pm, k_x, k_y} (C_{2x})^{-1} &= \sigma^y \hat{c}^\dagger_{\pm, k_x, -k_y},
\end{split}
\end{equation}
where $\sigma^x$ acts on $\pm$ indices, and $K$ is the complex conjugation operator. This, together with a continuity criterion such that the Wannier functions are smooth function of $k_y$, fixes the phase ambiguity up to an overall minus sign (App. \ref{app:gauge_fixing}).\footnote{In the absence of $C_{2x}$ symmetry, we use a heuristic such that the gauge is continuous as a function of $\Phi_y$.}

Now that we have described the interacting Bistritzer-MacDonald model in detail, we proceed to discuss how we will solve for its ground state using DMRG.

\section{MPO Compression and DMRG}
\label{sec:MPOs}

\begin{figure*}
    \centering
    \includegraphics[width=\linewidth]{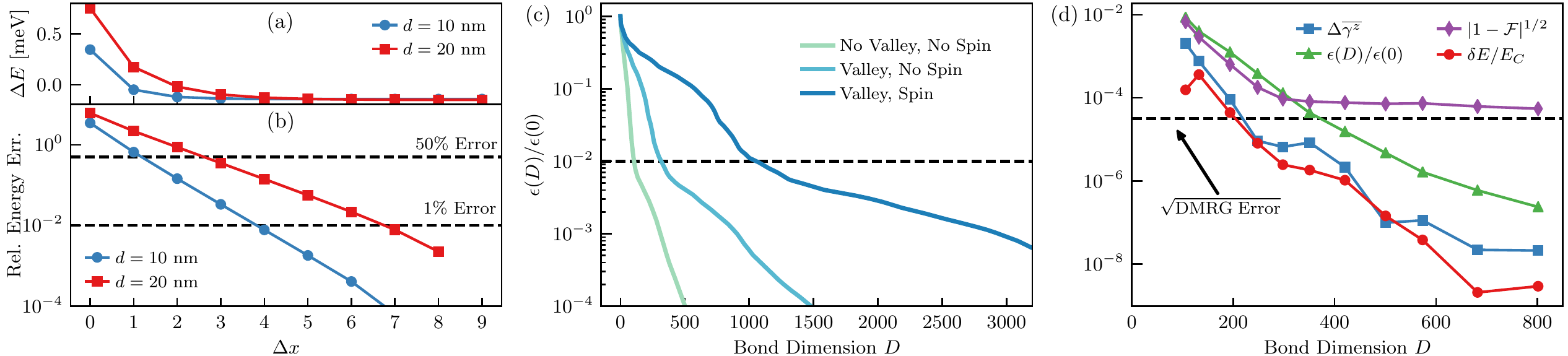}
    \caption{
    (a) Energy difference between the QAH and the SM ansatz at different gate distances at $w_0/w_1 \sim 0.8$.
    (b) The relative error in energy difference between the QAH and SM ansatz. The black dashed lines indicate $50\%$ error and $1\%$ error.
    (c) Precision of the compressed MPO as a function of bond dimension for the IBM model with and without spin and valley degrees of freedom.  The Hamiltonian is the IBM model at the chiral limit $w_0 = 0$ with parameters given by Table~\ref{tab:parameters}. For (b), the cutoff range is reduced to $\Delta x=3$.
    (d) Relative precision of MPO compression as a function of the post-compression bond dimension $D$. The precision is controlled by $\epsilon(D) = (\sum_{a=D+1} s_a^2)^{1/2}$, as described in Eq.~\eqref{eq:compressed_H}. Here $\Delta E(D) = E(D=1000) - E(D)$ is the energy error in the ground state, $\mathcal{F}$ is the fidelity per unit cell between the ground state at $D$ and the ground state at $D=1000$, and $\Delta \overline{\gamma^z}$ is the error in the polarization versus $D=1000$, described in Sec. \ref{sec:half_filling_physics}. One can see that the precision improves roughly in proportion to $\epsilon$, except for $\n{1-\mathcal{F}}^{1/2}$, which is limited by the precision of DMRG (black dashed line).
    }
    \label{fig:MPOfidelity}
\end{figure*}

In this section we consider the practical details of performing infinite DMRG on the IBM model defined in the last section, and the necessity of MPO compression.

To perform infinite DMRG, we must express the Hamiltonian (Eq.~\eqref{eq:IBM}) as an infinite 1D Matrix Product Operator (MPO) whose size $D$ is called the bond dimension\footnote{The MPO bond dimension is always denoted by $\chiMPO$, and $\chiMPS$ is reserved for the MPS bond dimension.} \cite{pirvu2010matrix}. To map from 2D to a 1D chain, we order the Wannier orbitals $ \ket{w(\pm, n, k_y)} $ by the positions $P_x(\pm, k_y) + n$ of their Wannier centers. Translation along $\v{L}_1$ simply increments $\ket{w(\pm,n,k_y)} \to \ket{w(\pm, n+1, k_y)}$, so the 1D chain is periodic with a unit cell of size $N_B N_y$ sites ($N_B = 8$ with spin and valley). Once the MPO is obtained, we can in principle find its ground state with DMRG.

However, the long-range nature of the Coulomb interaction complicates matters. Although the screened Coulomb interaction decays exponentially in real space, truncating it at short range can lead to physically incorrect results. To demonstrate this,  Fig.~\ref{fig:MPOfidelity} (a) examines the energy of two ground state wavefunction ans\"atze ``QAH" and ``\SM{}"  as a function of the truncation distance of the interaction $\Delta x$ \footnote{We define $\Delta x$ as the distance between the first and last field operators along the cylinder} (these physical states are defined and used in Sec.~\ref{sec:SM_phase} below). In particular, we examine the energy difference $\Delta E = E_{QAH} - E_{SM}$ in Panel (a), and the relative energy difference in Panel (b). The true energy difference between these states is $|\Delta E(\Delta x \to \infty)| \approx \SI{0.1}{\milli\electronvolt}$, yet the energy difference achieves \SI{0.1}{\milli\electronvolt} precision only at $\Delta x \ge 3$. Going from $\Delta x = 1 \to 2$, for example, their energies change by almost \SI{0.1}{\milli\electronvolt}.

More importantly, the relative error in energy difference $(\Delta E(\Delta x) - \Delta E(\infty)) / \Delta E(\infty)$ reaches $1\%$ only at the cutoff $\Delta x \ge 4$. This means that in order to resolve closely competing ground state candidates --- which we will encounter in practice in Section \ref{sec:cylinder_model} below --- we require a relatively large cutoff $\Delta x$.

Furthermore, the required cutoff is highly dependent on the model parameters. For example, if we increase the gate distance to \SI{20}{\nano\meter}, then the screening distance is increased, and the relative energy gap does not achieve \emph{$50 \%$} precision until $\Delta x \approx 3$ (Fig. \ref{fig:MPOfidelity} (b)).
\footnote{Note, however, that larger \textit{precompression} MPO bond dimension does not necessarily mean larger \textit{postcompression} MPO bond dimension. The relationship between Hamiltonian parameters and the postcompression bond dimension is a subject of future work.}
Together, these results suggest that premature truncation may lead to physically incorrect results, and we are forced to retain relatively long-range interactions in the Hamiltonian.

After mapping to a 1D chain, this means we must keep track of interactions up to range $R = N_B N_y \Delta x$ orbitals. An exact representation has an optimal bond dimension which scales as $D = O(R^2)$ (App.~\ref{app:uncompressed_MPO_construction}).
However, this still produces an MPO of size $D \sim 1 \times 10^4$ for $\Delta x \le 4$ without spin and valley, and if we were to add in spin and valley it would be $D \sim 1 \times 10^5$. 
As the computational complexity of DMRG increases as $O(D^2)$, and $D$ is usually a few hundred at most, the Hamiltonian for BLG is far too large for DMRG to be practical. 
The DMRG results of Ref.~\cite{KangVafek} considered a single spin and valley with  interactions  truncated at $\Delta x = 2$, resulting in an MPO of $D \sim 2000$ at $L_y = 6$.
But increasing $\Delta x$, or adding  spin and valley, makes the problem impractical.

On a finite system, the MPO can be viewed as a 2-sided MPS and compressed by SVD truncation (this approach is implemented in the \texttt{AutoMPO} feature of the \texttt{iTensor} library \cite{fishman2020ITensor}). However, in the infinite limit we wish to take here, this naive SVD truncation  is unstable and actually destroys the locality of the Hamiltonian. To avoid this, Ref.~\cite{parker2019local}  developed a modification of SVD compression which guarantees that the compressed Hamiltonian remains Hermitian and local in the thermodynamic limit. 

As in finite SVD compression, an intermediate step of the algorithm produces a singular value spectrum $s_a$,  and the bond dimension can be reduced by discarding the lowest values of the spectrum. For an appropriate notion of distance this truncation is optimal, and when applied to a single cut, the discarded weight  $\epsilon(\chiMPO) = \sqrt{ \sum_{i=\chiMPO+1}^{D'} s_a^2 }$  upper bounds the error in $\widehat{H}$ with respect to the Frobenius norm. In the Appendix \ref{app:UCMPO_compression} we present efficient algorithms for finite-length unit cells and derive error bounds for various quantities. When exploiting quantum numbers, the algorithm is capable of compressing MPOs with bond dimensions $5 \times 10^4$ or larger on a cluster node.

With the bond dimension thus reduced to a reasonable value, we may perform DMRG. We use the standard \texttt{TeNPy} library \cite{hauschild2018efficient}, written by one of us, taking full advantage of symmetries. Careful checks  guaranteeing the accuracy and precision of our code,  benchmarks,  and other numerical details are given in App.~\ref{app:numerical_details}. 

Figure \ref{fig:MPOfidelity} showcases the precision of our DMRG results. We performed DMRG at the chiral limit $w_0=0$ and computed the relative error in the ground state energy, ground state fidelity, and expectation values as a function of post-compression bond dimension $D$, relative to $D=1000$. The relative precision $\epsilon(D)/\epsilon(0)$ improves quickly with $D$, dipping below $10^{-6}$ by $D = 800$. In accordance with the error bound on $H$, the ground state energy, wavefunction, and expectation values converge quickly as $\epsilon \to 0$.

As a proof of principle, we also performed MPO compression for the IBM model with spin and valley at $L_y = 6$ and $w_0 = 0$. Due to constraints on the size of the uncompressed MPO we can handle, we chose a cutoff range of $\Delta x = 3$, which resulted in a $D \sim 35,000$ uncompressed MPO.  The singular value spectrum of the MPO is shown in Fig.~\ref{fig:MPOfidelity}. If we define $\mathcal{F}$ as the fidelity per unit cell\footnote{We define the fidelity per unit cell in the thermodynamic limit as $\mathcal{F} = \lim_{N\rightarrow \infty}|\langle\psi_{N,D}|\psi_{N,D=1000}\rangle|^{2/N}$ , where $N$ is the number of unit cells.} between the ground state of the compressed MPO with bond dimension $D$ and the ground state of the MPO with $D=1000$, then we see from Fig. \ref{fig:MPOfidelity} that in the spinless, single-valley calculation, $\epsilon(D)/\epsilon(0)$ and $|1-\mathcal{F}|^{1/2}$ have roughly the same order of magnitude. Using this fact as a guide, we can estimate the bond dimension where $|1-\mathcal{F}|^{1/2} \sim 10^{-2}$ by  looking at the value of $\epsilon(D)/\epsilon(0)$. This gives us bond dimensions $D = \{106, 317, 1057\}$ for spinless/single-valley, spinless/valley, spin/valley MPO. While still relatively large, such bond dimensions are tractable with a standard workstation or cluster node when exploiting quantum numbers.

Of course the IBM model itself is only an approximation to the physical system, neglecting  effects such as lattice relaxation, phonons and twist angle disorder which, though small, are expected to enter at the \SI{1}{\milli\electronvolt} level. This provides a limit on the amount of precision which is physically useful. To be safe, we choose $\Delta x = 10$,
\footnote{This gives us an uncompressed bond dimension of order $5 \times 10^{4}$, close to what would be necessary for spinful/valleyful calculation.}
$\epsilon = 10^{-2}$ \si{\milli\electronvolt}, which results in post-compression bond dimensions of $D \approx 600-1000$, depending on the value of $w_0/w_1$. In conclusion, we have used MPO compression to reduce the Hamiltonian to a computational tractable size, incurring a precision error on the order of $10^{-2}$ \si{\milli\electronvolt} --- three orders of magnitude below the relevant energy scale of the problem. We now discuss the results of DMRG and the implications for the ground state physics of bilayer graphene.

\section{Ground State Physics at Half Filling}
\label{sec:half_filling_physics}

In this section we report the results of our DMRG calculations and discuss the ground state physics of the (spinless, single valley) IBM model at half filling. We will show there is a clear transition from a quantum anomalous Hall state at small $w_0/w_1$ to a nematic semimetallic state at large $w_0/w_1$. Furthermore, we will show that these ground states are almost exactly described by the $k$-space Slater determinants predicted by Hartree-Fock.

\begin{figure}
    \centering
	\includegraphics[width=\linewidth]{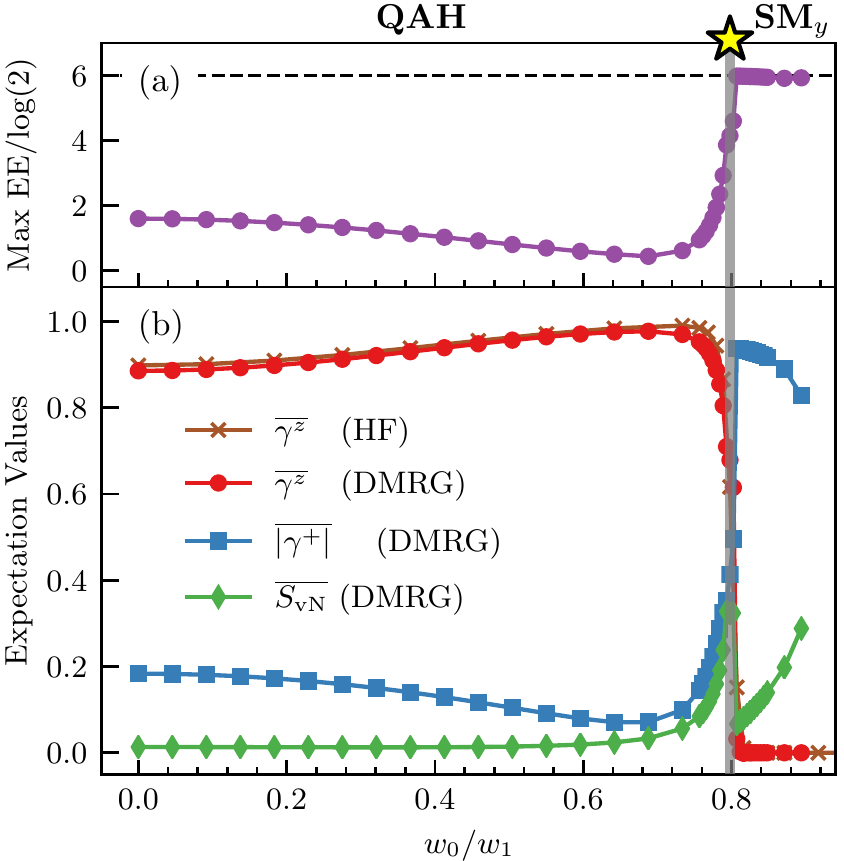}
    \caption{The phase diagram of the IBM at half filling as a function of $w_0/w_1$. There is a transition from a quantum anomalous hall (QAH) phase to a semimetallic (SM) phase at $w_0/w_1 = 0.798 \approx 0.8$, represented by a yellow star. (a) Entanglement entropy of the DMRG ground state with hybrid Wannier orbitals ordered according to their polarization, maximized over all $12$ entanglement cuts dividing the system into left and right halves. (b) Expectation values of various observables (defined in the text) in the DMRG or Hartree-Fock(HF) ground states. The polarization in Chern band space $\overline{\gamma^z}$ is an order parameter for the transition. Its drop across the transition is accompanied by a commensurate increase in $\overline{|\gamma^+|}$, such that the DMRG ground state remains close to a Slater determinant. DMRG is performed at bond dimension $\chiMPS = 1024$, $\epsilon_{\mathrm{MPO}} = 10^{-2}$ \si{\milli\electronvolt} and is convergent away from the transition. (Gray shading indicates where DMRG is not well converged.).
    }
    \label{fig:half_filling_w0_scan}
\end{figure}

\subsection{Single particle projector and order parameter}

We start by defining several crucial observables and order parameters. Because we find the DMRG ground state is translation invariant, all one-body expectation values can be obtained from the correlation matrix
\begin{equation}
    P(\v{k}) :=
    \begin{pmatrix}
\langle c^\dagger_{+, k_x, k_y}c_{+, k_x, k_y} \rangle & \langle c^\dagger_{-, k_x, k_y}c_{+, k_x, k_y} \rangle   \\ 
\langle c^\dagger_{+, k_x, k_y}c_{-, k_x, k_y} \rangle & \langle c^\dagger_{-, k_x, k_y}c_{-, k_x, k_y} \rangle    \\ 
\end{pmatrix}.
\label{eq:single_particle_density_matrix}
\end{equation}
 This matrix is a projector when the expectation values are taken with respect to a Slater determinant, and it is the central variational object for $k$-space Hartree-Fock calculations. For DMRG in mixed-$xk$ space, we calculate $P(\v{k})$ by Fourier transforming two-point correlation functions.
 \footnote{Explicitly, $P(\v{k})$ is defined for $\v{k}$ on a $108 \times L_y$ grid of $\v{k}$ points in the mBZ by computing expectations $\langle c^\dagger_{\pm, 0, k_y}c_{\pm, n, k_y} \rangle$ with respect to the DMRG ground state on the mixed-$xk$ space cylinder for $-53 \le n \le 54$ and performing a discrete Fourier transform with respect to $\mathbf{L}_1$.}
 
The one-body observables are  spanned by the expectation values of Pauli matrices $\sigma$ in the $\pm$ band space,
\begin{equation}
    \gamma^{z}(\v{k}) := \tr[P(\v{k}) \sigma^z],
    \label{eq:_gamma_observable_band_space}
\end{equation}
and similarly for $\gamma^x, \gamma^y, \gamma^+ = \gamma^x + i  \gamma^y$.  We denote mBZ averages by \begin{equation}
    \overline{\gamma^{\alpha}}~:=~\frac{1}{A_{\textrm{mBZ}}}\int_{\mathrm{mBZ}} d^2 \v{k} \; \gamma^{\alpha}(\v{k}).
    \label{eq:amma_observable_BZ_average}
\end{equation}
where $A_{\textrm{mBZ}}$ is the area of the mBZ. We will focus particularly on $\overline{\gamma^z}$ --- which is an order parameter for $C_2 \mathcal{T}$ and $C_{2x}$, as follows from Eq.~\eqref{eq:symmetry_action_cylinder} which implies that $\gamma^z(\v{k}) = 0$ for a $C_2 \mathcal{T}$ symmetric state, and $\gamma^z(C_{2x}\v{k}) = - \gamma^z(\v{k})$ for a $C_{2x}$ symmetric state.

In the case where the state is indeed a momentum-diagonal Slater determinant, $P(\v{k})$ acquires several special properties. In particular, if a momentum mode $\v{k}$ is occupied by one electron, $P(\v{k})$ takes values on the unit sphere and can be parametrized in spherical coordinates as 
\begin{equation}
    P(\v{k}) = \frac{1}{2}(\sigma^0 + \cos \theta_{\v{k}} \sigma^z + \sin\theta_{\v{k}}\cos\varphi_{\v{k}}\sigma^x + \sin\theta_{\v{k}}\sin\varphi_{\v{k}}\sigma^y) \label{eq:projector_parametrization}
\end{equation}
which implies $|\gamma^+|^2 + |\gamma^z|^2 = 1$. If the projector respects $C_2\mathcal{T}$ and $C_{2x}$ symmetries, then respectively $\theta_{\v{k}} = \pi/2$ and $\varphi_{\v{k}} = -\varphi_{C_{2x}\v{k}} + \pi$ at all $\v{k}$. Finally, since $P(\v{k})$ is a projector for momentum-diagonal Slater determinants, it satisfies $S_{\text{vN}}(\v{k}) :=  -\Tr[P(\v{k})\log P(\v{k})] = 0$. In general, then, $S_{\text{vN}}(\v{k}) \ge 0$ measures the deviation of a state from a translationally-invariant Slater determinant.

\subsection{iDMRG details and parameter choices}

Infinite DMRG (iDMRG) calculations were performed using the open source \texttt{TeNPy} package \cite{hauschild2018efficient}.  The numerical parameters and physical energy scales of the problem are summarized in Table~\ref{tab:parameters}. In particular, we take $N_y = 6$ and $\Phi_y = \pi$ as the ``default" values. The MPO bond dimension was compressed down to $600-1000$, such that the expected error is of order $10^{-2}$ \si{\milli\electronvolt}, as described in Sec.~\ref{sec:cylinder_model}. To ensure that iDMRG was converged, we varied the MPS bond dimension $\chiMPS$ between $200$ and $1024$. We found that DMRG converged well even at very low bond dimensions, except near the transition. We also allowed ground states with broken translation invariance with a doubled unit cell, but we found a fully translation invariant ground state for all parameters we tested. 

\begin{table}[]
    \begin{center}
    \begin{ruledtabular}
    \begin{tabular}{ll}
        Parameter & Value(s) \\[0.2em] \colrule
        $\theta_{BM}$ & $\sim{}$\ang{1.05} \\
        $w_1$ &  $\sim{}$\SI{109}{\milli\electronvolt} \\
        $w_0/w_1$ & [0, 1] \\
        Gate distance & \SI{10}{\nano\meter} \\
        Relative permitivity & 12 \\[0.3em] \colrule
        $N_y$ & 6 \\
        $\Phi_y$ & $\pi, \pi/10$\\
        $\chiMPS$ & $\leq 1024$ \\
        $\Delta x$ & 10 \\
        $\epsilon_{\mathrm{MPO}}$ & $< 10^{-2}$ \si{\milli\electronvolt}\\[0.3em] \colrule
        Kinetic energy scale ($t$) & $<$ \SI{1}{\milli\electronvolt}\\
        Interaction energy scale ($V$) & $<$ \SI{10}{\milli\electronvolt}
    \end{tabular}
    \end{ruledtabular}
    \end{center}
    \caption{Parameters of the IBM model, DMRG calculation, and relevant energy scales. See main text for the definition of each entry.}
    \label{tab:parameters}
\end{table}

\subsection{Ground State Transition and the QAH phase}

\begin{figure*}
    \centering
    \includegraphics[width=\textwidth]{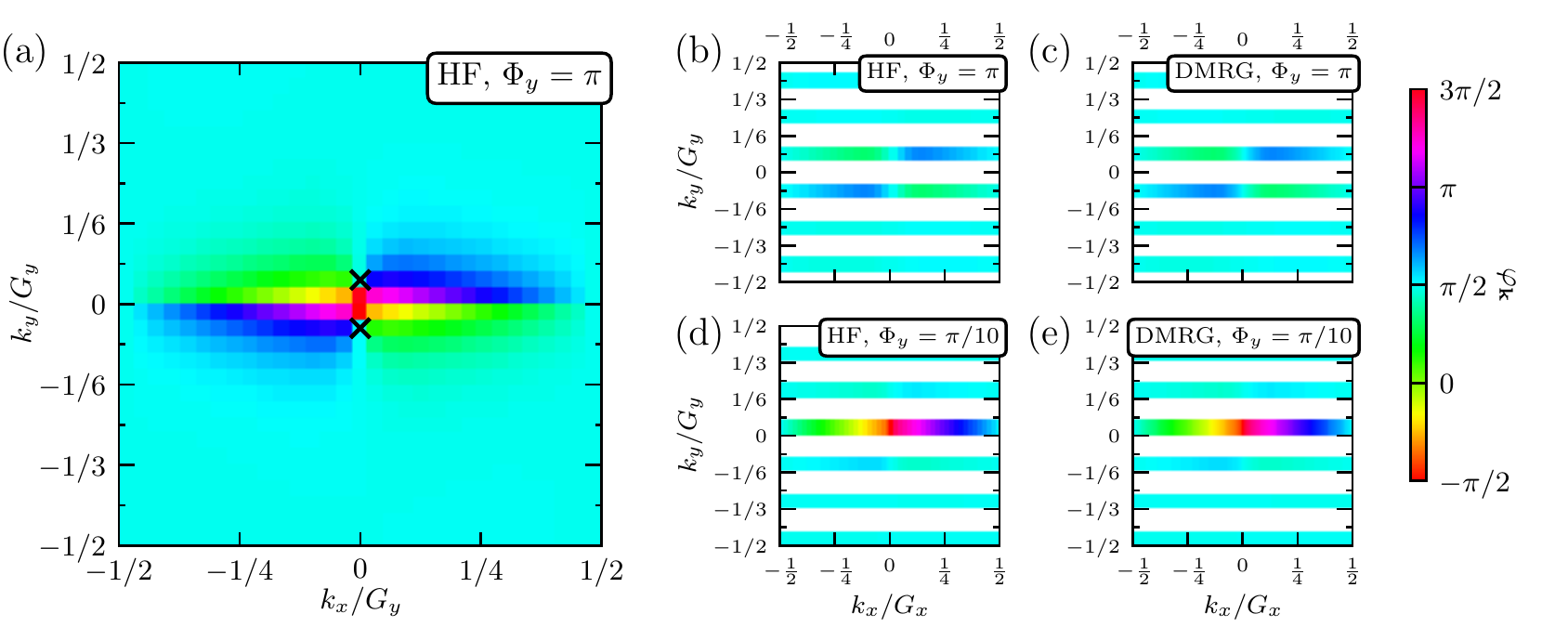}
    \caption{ Comparison of HF and DMRG calculations of $\varphi_{\v{k}} = \operatorname{arg}[\gamma^{+}(\v{k})] \in [-\pi,\pi]$ over the mBZ at $w_0/w_1 = 0.85$.  (a) HF on a $30 \times 29$ grid with $\Phi_y=\pi$. The crosses represent the approximate location of Dirac points. (b) HF on a $30 \times 6$ grid at $\Phi_y = \pi$. (c) DMRG correlation function on a $108 \times 6$ grid at $\Phi_y=\pi$. (d) HF on a $30\times 6$ grid at $\Phi_y=\pi/10$. (e) DMRG correlation function on a $108\times6$ grid at $\Phi_y=\pi/10$. The horizontal bands in (b) -- (d) are centered on the $k_y$ cuts used, given by Eq.~\eqref{eq:ky_cuts}. One can see that the DMRG and HF calculations are virtually identical. DMRG are performed using $\chiMPS=1024$. The discontinuity at $k_y = 0^+$ is a gauge choice, described in \ref{sec:cylinder_model}.
    }
    \label{fig:DMRG_HF_comparison}
\end{figure*}

We performed iDMRG at $44$ values of $w_0/w_1$ in the range $[0,1]$. Fig.~\ref{fig:half_filling_w0_scan} (b) shows that the order parameter $\overline{\gamma^z}$ is non-zero for $w_0/w_1 \le 0.8$, and vanishes  for larger values of $w_0/w_1$, signaling a transition from a $C_2\mathcal{T}$ and $C_{2x}$ broken phase to a $C_{2}\mathcal{T}$ and $C_{2x}$ symmetric phase.

For low $w_0/w_1$, not only is $C_2 \mathcal{T}$ broken, but the state is almost perfectly polarized, with $\overline{\gamma^z} \approx 1$. This implies that the state has a large overlap with the product state in which all ``$+$'' orbitals are occupied:
\begin{equation}
\ket{\mathrm{QAH}} \approx    \prod_{x,k_y} \hat{c}_{+,x,k_y}^\dagger \ket{0}.
\label{eq:approx_QAH_wavefunction}
\end{equation}
Since the $\pm$ bands carry Chern number $C = \pm 1$, this state is a quantum anomalous hall (QAH) insulator \cite{XieMacDonald,bultinck2020Mechanism,YaHuiChern}. This approximation is quite good: the QAH state is well described by a product state plus small corrections, $\n{\braket{\Psi_0|\Psi_{QAH}}} \approx 0.846$ per unit cell at $w_0 = 0$. Consequently, the QAH state has low entanglement entropy (Fig.~\ref{fig:half_filling_w0_scan}) and DMRG converges at quite moderate bond dimensions. 

Above $w_0/w_1  \approx 0.8$, $\overline{\gamma^z} = 0$ and the state instead develops a large expectation value for $\gamma^{+} = \gamma^x + i \gamma^y$. Section \ref{sec:SM_phase} below is devoted to the large $w_0/w_1$ phase, and we will see that it is a  nematic semimetal \cite{liu2020Nematic}, which we refer to as ``\SM{}'', in reference to the ordering in the $x/y$ plane. First, however, we analyze a surprising structure in the ground state correlations. 

\subsection{The remarkable accuracy of Hartree-Fock}

The ground states of the strongly interacting IBM model are -- quite surprisingly -- very well described by $k$-space Slater determinants. For all values of $w_0/w_1$ away from the transition, the difference between the ground state and a Slater determinant as quantified by $S_{\text{vN}}(\v{k})$ is small. In particular, Fig.~\ref{fig:half_filling_w0_scan} shows that $\overline{S}_{\text{vN}}$ is low in the QAH phase, increases or diverges near the transition, and is relatively small but growing in the \SM{} phase. In the QAH phase this behavior is expected due to the large overlap with the simple Chern band polarized Slater determinant Eq.~\eqref{eq:approx_QAH_wavefunction}.

To provide further evidence that the ground state is essentially a Slater determinant, we compare DMRG results with Hartree-Fock (HF) calculations. Hartree-Fock determines an optimal Slater determinant approximation to the ground state of a many-body problem through a self consistent equation.
Computationally, HF scales only polynomially in the number of $k_y$ cuts (rather than exponentially for DMRG), so it provides a much cheaper alternative --- when it is applicable.
When the  ground state of the IBM model is close to a Slater determinant, the HF ground state should be quite accurate and would have high overlap with the true ground state. We performed HF calculations on a $N_x \times N_y$ grid in the mBZ; numerical details of our HF calculations have been reported elsewhere \cite{bultinck2019Ground}. 

We find that HF and DMRG results are nearly identical. The $C_2\mathcal{T}$ order parameter $\gamma^z$ differs by around 2\% (Fig.~\ref{fig:half_filling_w0_scan}).
Fig.~\ref{fig:DMRG_HF_comparison} shows a side-by-side comparison of the DMRG and HF predictions for $\varphi_{\v{k}} = \operatorname{arg}[\gamma^+(\v{k})] $ in the \SM{} phase, where it completely specifies the Slater determinant because $\theta \equiv \pi/2$ is fixed by $C_2\mathcal{T}$ symmetry (See Eq~\eqref{eq:projector_parametrization}). Panel (a) shows a high-resolution HF calculation with $N_x = 30, N_y = 29$, which shows that $\varphi_{\v{k}} \approx \pi/2$ over most of the mBZ, but winds through $2\pi$ for $k_y$ cuts that go near the $\Gamma$ point. Panels (b) -- (d) demonstrate that the same pattern appears with only $N_y = 6$ discrete momentum cuts. Both HF and DMRG produce a $C_{2x}$ symmetric $\varphi_{\v{k}}$ and the results obtained from both methods are almost indistinguishable. Other observables are similarly accurate in HF. We may therefore use HF to study large system sizes or observables that are not easily accessible in DMRG. For instance, Koopman's Theorem implies that the energies of single-fermion excited states are given by the self-consistent Hartree-Fock spectrum. 
We found the Hartree-Fock spectrum at the chiral limit has a gap of order \SI{20}{\milli\electronvolt}, showing the QAH state is gapped.  
We conclude that DMRG and HF agree to a remarkable degree and may be used almost interchangeably in this regime.

\section{The Nematic Semimetal}
\label{sec:SM_phase}

 \begin{figure*}
    \centering
    \includegraphics[width=\linewidth]{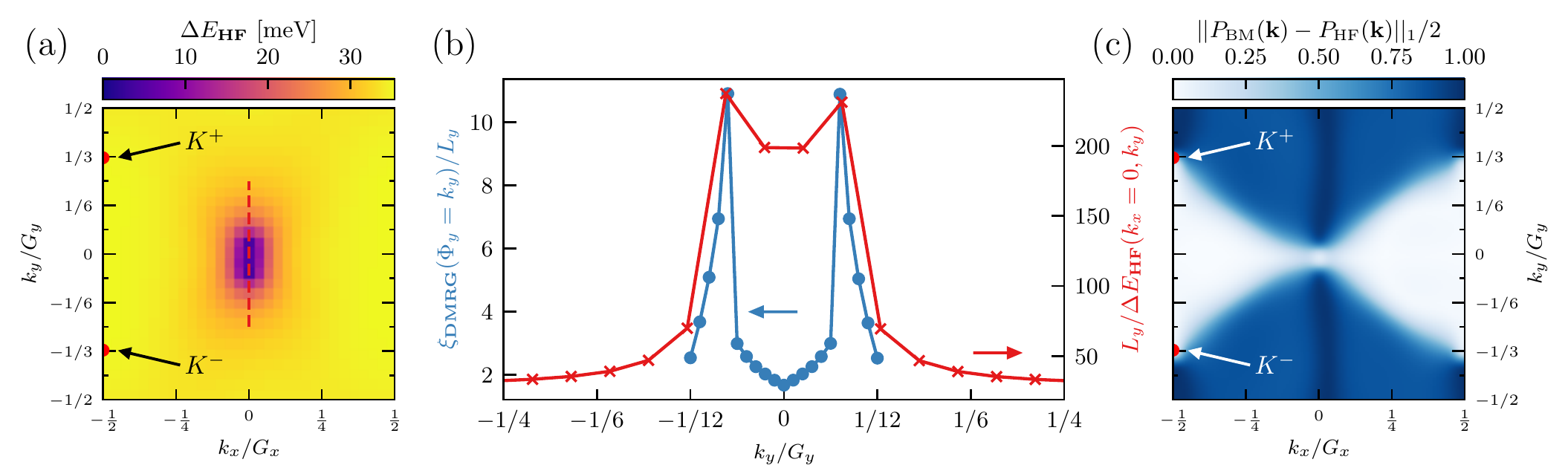}
	\caption{Result of HF calculation at $N_x \times N_y = 30 \times 29$ .  (a) Energy gap between the top band and the bottom band. We see there are two Dirac points near the gamma point. (b) Correlation length obtained from thin cylinder DMRG at different flux values and the inverse HF gap size as a function of $k_y$. We see the correlation length becomes large
	near the Dirac points found in HF calculation.
	(c) The trace distance between the density matrices for the BM ground state and Hartree-Fock ground state. One can see that they are unrelated over much of the mBZ.
	}
    \label{fig:HF_gap}
\end{figure*} 

We now show that the large-$w_0/w_1$ phase is a nematic semimetal, first described in Ref.~\cite{liu2020Nematic}, with energetics  governed by the Berry curvature of the flat bands. This is an altogether different state than the Dirac semimetal which appears in the non-interacting BM model. Our analysis is based on combination of DMRG
(at $L_y = 6$) and HF (at $L_y \sim 30$), which agree wherever they can be compared. After establishing the nature of the nematic semimetal, we make contact with recent ideas in the literature \cite{bultinck2020Mechanism, liu2020Nematic,KangVafek}. Namely, we explain how the Ginzburg-Landau-like functional for the interband coherence $\varphi$ proposed in Ref.~\cite{liu2020Nematic}   provides an intuitive description of the nematic state and the transition, and also confirm that the stripe state proposed in Ref.~\cite{KangVafek} is extremely competitive, with an energy only $0.2$ meV / electron  above the DMRG ground state.

\subsection{The Large $w_0/w_1$ Phase is Nematic}

The large $w_0/w_1$ phase is a nematic state which breaks $C_3$ but preserves $C_2\mathcal{T}$. 
$C_2 \mathcal{T}$ requires $\gamma^z = 0$, but allows for finite $\gamma^{x/y}$, so within the spherical coordinate description of Eq.~\eqref{eq:projector_parametrization} the state is characterized by $\theta = \pi / 2$ and an azimuthal angle $\varphi_{\mathbf{k}}$.
This is clearly visible in Fig.~\ref{fig:DMRG_HF_comparison} (a): throughout \textit{most} of the mBZ, including at the mini-$K^{\pm}$ points, the state is  $\gamma^y \approx 1$  ($\varphi \sim \pi/2$). 
A state with finite $\gamma^x + i \gamma^y$ at the $K^{\pm}$ points breaks $C_3$ symmetry (nematicity) because $C_3$ acts as $C_3 = e^{i \pi \sigma^z/3}$ there.
This is in contrast to the BM ground state, which has  Dirac nodes at $K^{\pm}$: the BM Dirac structure $h(K^{\pm} + \v{q}) \propto q_x \sigma_x + q_y \sigma_y$ instead causes $\gamma^{x} + i \gamma^y$ to wind by $+2\pi$. 

Consequently we denote the large $w_0/w_1$ state ``$\mathrm{SM}_y$.'' Presumably the other $C_3$-rotated versions are not found  because the cylinder geometry weakly breaks the $C_3$ symmetry for finite $N_y$.

While the \SM{} phase is close to a Slater determinant, it is not a small perturbation to the non-interacting ground state. To quantify this, Fig.~\ref{fig:HF_gap}(c) shows the trace distance between the BM ground state projector and the \SM{} projector over the mBZ.
Around the $K^{\pm}$ points, the trace distance rotates between complete agreement and orthogonality, consistent with the winding of $\gamma^+$ in the BM state versus the fixed $\gamma^y \approx 1$ in $\textrm{SM}_y$. The trace distance also provides us with a gauge invariant way to identify the nematicity of the phase: the BM and $\textrm{SM}_y$ projectors achieve near complete agreement along the $y$-axis.

\subsection{$C_2\mathcal{T}$ protected Dirac points}
\label{sec:c2t_dirac}

Near the $\Gamma$ point, however $\varphi_{\mathbf{k}}$ deviates from $\varphi = \pi / 2$. We now show that this is because the \SM{} phase features two Dirac points in the vicinity of $\Gamma$, which cause $\varphi_{\mathbf{k}}$ to wind there. 
This behaviour is in fact enforced by topological properties \cite{Po, bultinck2020Mechanism}. For a generic two band problem in the presence of $C_2\mathcal{T}$, any Wilson loop is quantized: $W(C) := i \int_C \v{A} \cdot d\mathbf{k} = n\pi$ with $n\in \mathbb{Z}$ \cite{kim2015Dirac}.\footnote{The Berry phase is computed for the filled band only.}
In particular, $W(C) = (2n+1)\pi$ if and only if it encloses a Dirac cone.

This is the well-known topological protection of Dirac cones. In the case of the single valley BM model, Dirac cones at the mini $K^\pm$ points have the \textit{same} chirality $W(C) = +\pi$.
Therefore, not only are the Dirac cones locally protected, but even if they move away from $K^{\pm}$ they cannot meet and annihilate, enforcing the existence of either a pair of Dirac points or quadratic band touching.\footnote{Note however that this ``global'' protection implicitly assumes translation symmetry: if the unit cell doubles, the bands fold and the band count doubles. Beyond two bands, $W(C)$ is only defined modulo $\pi$, so Dirac points can meet and annihilate. This mechanism underlies the 
$C_2 \mathcal{T}$ stripe phase\cite{KangVafek}.}

The semimetallic nature of \SM{} is borne out in both HF and DMRG numerics.
The spectrum of the self-consistent HF Hamiltonian, shown in Fig.~\ref{fig:HF_gap} (a), has a large ($\Delta \approx $ \SI{30}{\milli\electronvolt}) gap across most of the mBZ, except near the $\Gamma$ point.
The structure of $\Delta$ near the $\Gamma$ point is consistent with two Dirac points at $\v{k}_{\pm} := (k_x, k_y) \approx (0, \pm0.05 G_y)$ (Fig.~\ref{fig:HF_gap}(b)).
In contrast, the BM model is gapless at the $K^{\pm}$ points, but gapped near $\Gamma$. 

DMRG numerics can also detect these nodes, but special care is required. This is because the allowed momentum cuts, Eq.~\eqref{eq:ky_cuts}, generically avoid $\v{k}_{\pm}$. 
To confirm their existence, we continuously adjust the flux $\Phi_y$ through the cylinder (see Fig.~\ref{fig:IBM_model} (e)) and monitor the behavior of the DMRG ground state as the allowed momenta pass through the putative Dirac points. 
Fig.~\ref{fig:HF_gap}(b) shows that the DMRG correlation length appears to diverge right as the allowed momenta pass through the location of the Dirac points $\mathbf{k}^\pm$ found in HF, consistent with the gap closing. 

We conclude that the large $w_0/w_1$ phase is a nematic semimetal: \SM{} preserves $C_2 \mathcal{T}$, breaks $C_3$, and has two Dirac nodes on the $y$ axis near $\Gamma$.

\subsection{Ginzburg-Landau-like Description of the \SM{} Phase}
There is a very appealing Hartree-Fock picture for why the Coulomb interactions reconstruct the single particle Dirac semimetal into the nematic \SM{} semimetal.
When $C_2 \mathcal{T}$ is preserved, the state is specified entirely by the phase of the inter-band coherence  $\varphi_\mathbf{k}$, so the HF energy is a functional $E_{\mathrm{HF}}[\varphi_\mathbf{k}]$.
Ref.~\cite{liu2020Nematic} analytically computed this functional for the IBM model, Eq.~\eqref{eq:IBM}, and found that the dominant  contribution takes the form
     \begin{align}
      E_{\mathrm{ HF}}[\varphi_\mathbf{k}] &= E^{\mathrm{QAH}}_{\mathrm{ HF}} + \frac{1}{2} \int g_{\mathbf{k}} ( \nabla_{\mathbf{k}} \varphi_{\mathbf{k}}- 2 \v{a}_{\v{k}})^2  d^2\v{k} + \cdots  \notag \\
       g_{\mathbf{k}} &= \frac{1}{A} \sum_{\mathbf{q}} V_{\mathbf{q}}  \, \mathbf{q}^2 |\Lambda_{\mathbf{q}}( \mathbf{k} )|^2 
       \label{eq:sm_ehf}
      \end{align} 
Here $g_{\mathbf{k}}$ is a $E_C$-scale function independent of $\varphi_{\v{k}}$, and $E^{\mathrm{QAH}}_{\mathrm{ HF}}$ is the Coulomb energy of the QAH state.
Finally, $\v{a}_{\v{k}}$ is a U(1) vector potential which encodes the band geometry: due to the $C_2 \mathcal{T} = \sigma^x K$ symmetry,  the SU(2) Berry connection of the Bloch states is constrained to take the diagonal form $\mathbf{A}(\mathbf{k}) = \sigma^z \v{a}_{\v{k}}$,  reducing it to a $U(1)$ connection.
\footnote{Note that $\v{a}_{\v{k}}$ here does not have a quantized Wilson loop, unlike $\v{A}_{\v{k}}$ discussed in Sec.~\ref{sec:c2t_dirac}.
This is because $\v{a}_{\v{k}}$ is defined in terms of Chern bands, which are not invariant under $C_2\mathcal{T}$.}

We see that the energy is similar to the Ginzburg-Landau functional for a superconductor in a magnetic field $ F_{\mathbf{k}} = \mathrm{d} \v{a}_{\v{k}}$.
This isn't a coincidence: $\nabla_{\mathbf{k}} \varphi_{\mathbf{k}}$ can only appear via a gauge-covariant derivative because $\varphi_{\mathbf{k}}, \v{a}_{\v{k}}$ transform as a gauge pair under a $C_2 \mathcal{T}$-preserving phase redefinition of the Bloch states, $\hat{c}_{\pm, \mathbf{k}} \to e^{ \pm i \phi_\mathbf{k}} \hat{c}_{\pm, \mathbf{k}}$.
However, there is no exact U(1) symmetry, so the small ``$\cdots$'' terms we neglect (for example the dispersion $h$) do couple directly to $\varphi$.

The superconducting analogy can be made more concrete by applying a particle-hole transformation to only the $C=-1$ band,  so that the coherence $\varphi$ between $C= \pm 1$ bands maps to ``superconducting pairing'' between two $C=1$ bands \cite{bultinck2020Mechanism}. The Berry curvature $F_{\mathbf{k}}$ then appears with the same form as a magnetic field, albeit in $k$-space, similar to how Berry curvature manifests as a ``$k$-space magnetic field'' in the semiclassical equations of motion for Bloch electrons \cite{PhysRevB.59.14915}. 

If we treat the mBZ as the unit cell, the Chern number $\frac{1}{2 \pi} \int F_{\mathbf{k}} \, d^2 \v{k} = 1$ implies that there is one flux quantum per unit cell.
Just like the vortex lattice of a superconductor in a magnetic field $F_{\mathbf{k}}$, this forces $\varphi$ to have two vortices per unit cell.
Each vortex ($+2 \pi$ winding) is  equivalent to a Dirac point, so this recovers the topological protection of the Dirac points discussed earlier.  
In the BM ground state, the two vortices are pinned to the $K^{\pm}$ points, while in the \SM{} state they lie near $\Gamma$ (Fig.~\ref{fig:DMRG_HF_comparison} (a)).

The vortices lead to an energy penalty relative to the QAH state, explaining Eq.~\eqref{eq:sm_ehf}.
However, in our case the Berry curvature $F_{\mathbf{k}}$ is not uniform: instead, $F_{\mathbf{k}}$ is concentrated near the $\Gamma$ point (Fig.~\ref{fig:berry_curvature}(a)).
By  analogy to a superconductor in a non-uniform field, the lowest energy configuration of Eq.~\eqref{eq:sm_ehf} will place the vortices in the region of concentrated $F_{\mathbf{k}}$, explaining their shift from $K^{\pm} \to \Gamma$.
In Fig.~\ref{fig:berry_curvature}(b), we confirm that $ \nabla_{\mathbf{k}} \varphi_{\mathbf{k}}- 2 \v{a}_{\v{k}} \approx 0$ in the region where $F_{\mathbf{k}}$ is small, but is finite near $\Gamma$ where $F_{\mathbf{k}}$ is concentrated.
Accordingly, most of the energy penalty comes from near the $\Gamma$ point. Increasing $w_0 / w_1$ makes $F_{\mathbf{k}}$ increasingly concentrated, reducing the Coulomb penalty of \SM{} relative to QAH. 

The final ingredient driving the finite $w_0/w_1$ transition are the small terms like the dispersion $h$ hidden in ``$\cdots$'', which  slightly prefer the \SM{} phase.\footnote{ For example, in the \SM{} phase, $\varphi_{\mathbf{k}}$ can perturbatively deform to follow the dispersion $h$, particularly in the vicinity of $\Gamma$.} 
As $w_0/w_1$ increases, the Coulomb penalty for the \SM{} phase decreases due to  the Berry curvature concentration, and these subleading  terms win out. 
Consequently, while increasing  $w_0/w_1$ does slightly increase the bandwidth, its primary effect is actually via the redistribution of Berry curvature, which enters at the Coulomb scale ($g_{\mathbf{k}} \sim E_C$).

 \begin{figure}
 \begin{center}
     \includegraphics[width=\linewidth]{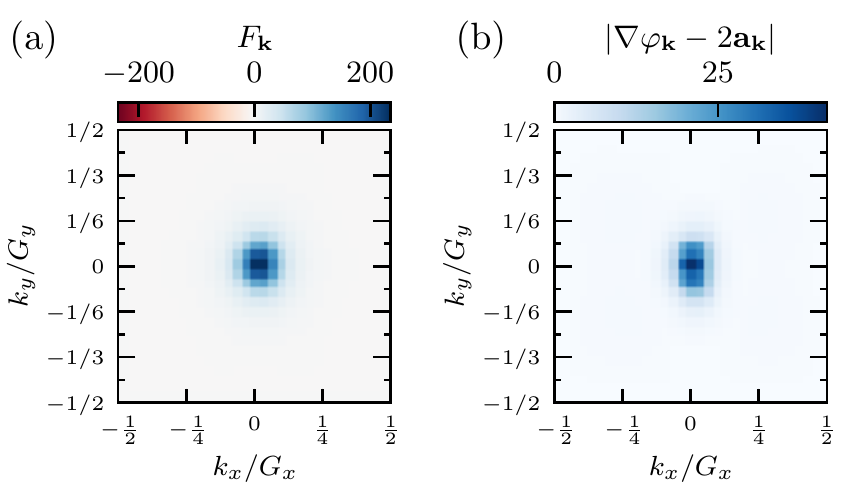}
 \end{center}
	\caption{(a) The Berry curvature $F_{\v{k}}$ for $w_0/w_1 = 0.85$ shows concentration at the $\Gamma$-point. 
	(b) The $k$-space ``supercurrent'' $|\nabla \varphi_{\v{k}} - 2\v{a}_{\v{k}}|$ is concentrated in the same region as $F_{\v{k}}$.
	Both (a) and (b) are calculated in the units $G_x = G_y = 1$, $A_{BZ} = 1$.
	}
    \label{fig:berry_curvature}
\end{figure} 

\subsection{``Thin cylinder''  DMRG Analysis}\label{sec:thin_cylinder}

The DMRG ground state at $N_y = 6$ can be approximated by a particularly simple ``thin cylinder'' ansatz, provided none of the momentum cuts cross the region of large Berry curvature near $\Gamma$. We ensure this by taking $\Phi_y = \pi$ and focus on $w_0/w_1 \approx 0.85$. In this case, the DMRG ground state has a relatively small particle number fluctuation in the unit cell. This is because away from $\Gamma$, $P(\v{k})$
varies slowly with $k_x$, so in the xk-space the correlations are local (intra-unit cell).

Table~\ref{tab:occupation_number} shows the probability for each momentum mode in the unit cell to be occupied by $0$, $1$, or $2$ electrons. All modes have $p(N_{k_y} = 1)$ larger than 0.9, and many of them larger than 0.95. This suggests there is a simple ``thin-cylinder'' ansatz with no particle number fluctuation per momentum mode: following Ref.~\cite{KangVafek}, we define
\begin{equation}
    \begin{aligned}
    \ket{\Psi_{\textrm{TC}}} &= \prod_{x, k} \left(\cos\frac{\theta(x, k_y)}{2} \hat{c}^\dagger_{+, x, k_y} \right.\\
    &\hspace{3em} +\left.\sin\frac{\theta(x, k_y)}{2} e^{i\varphi(x, k_y)}\hat{c}^\dagger_{-, x, k_y}\right) \ket{0}.
    \end{aligned}
    \label{eq:SP_ansatz}
\end{equation}
It is easy to check that if $\theta$ and $\varphi$ are independent of $x$, these $\theta$ and $\varphi$ corresponds to those in Eq.~\eqref{eq:projector_parametrization}. By inspecting Fig.~\ref{fig:DMRG_HF_comparison} (b), we find $\theta = \pi/2$ and $\varphi = \pi/2$ is a good approximation for the \SM{} state\footnote{Taking our different gauge convention into account, this is in agreement with \cite{KangVafek}.}
so long as the momentum cuts are not close to $k_y = 0$.
The fidelity per unit cell of the resulting ansatz $\ket{\Psi_{\textrm{SM}}}$ and the DMRG ground state is $\mathcal{F} \sim{} 0.98$ --- remarkably accurate for such a simple ansatz. This ansatz also captures well the behavior of entanglement entropy in Fig.~\ref{fig:half_filling_w0_scan}: each momentum mode contributes a ``Bell pair'' to entanglement, making the maximum entanglement $6\log 2$ with our choice of orbital ordering.

\begin{table}
    \begin{center}
    \begin{ruledtabular}
    \begin{tabular}{r*{6}{c}}
    $k_y$           & $-\frac{5}{12}$ & $-\frac{1}{4}$ & $-\frac{1}{12}$ & $\frac{1}{12}$ & $\frac{1}{4}$ & $\frac{5}{12}$ \\[0.3em] \colrule
    $p(N_{k_y} = 0)$ & 0.021 & 0.021 &0.049& 0.049& 0.021& 0.021   \\
    $p(N_{k_y} = 1)$ & 0.958 & 0.959 & 0.902& 0.902& 0.959& 0.958  \\
    $p(N_{k_y} = 2)$ & 0.021 & 0.021 & 0.049 & 0.049 & 0.021& 0.021 \\[0.2em]
    \end{tabular}
    \end{ruledtabular}
    \end{center}
    \caption{Probability that some states are occupied by 0, 1, or 2 electrons in a given unit cell at $w_0/w_1 = 0.85$, $\Phi_y = \pi$, and $\chiMPS = 1024$. $N_{k_y}$ is the number of electrons with momentum $k_y$ in the unit cell.
    }
    \label{tab:occupation_number}
\end{table}

We can further motivate the thin-cylinder ansatz from the polarization $P_x(\pm, k_y)$. (Fig.~\ref{fig:w0_polarization}). At large $w_0/w_1$, $P_x$ is close to 0 for most $k_y$ values, and orbitals in one unit cell are well separated from those in neighboring unit cells. The interaction therefore strongly couples modes within the same unit cell, resulting in vanishing number fluctuation per unit cell.

We also see why the thin-cylinder ansatz breaks down near the $\Gamma$ point: $P(\v{k})$ changes rapidly there (Fig.~\ref{fig:DMRG_HF_comparison} (e)), leading to inter-unit cell correlations.
We stress that the ansatz is thus a crude approximation to the true HF ground state, since it fails to capture the complex Berry curvature contribution to the energy that is dominant near the $\Gamma$ point.
As $N_y \to \infty$, the momentum cuts unavoidably approach $\Gamma$, and the thin-cylinder ansatz will break down.

Finally, we comment on the energy competition between different candidate ground states. The simple form of the ansatz enabled Kang and Vafek \cite{KangVafek} to put forward another candidate for the ground state, which breaks translation symmetry in the $\v{L}_1$ direction in favor of a period-2 stripe state with screw symmetry $C_{2x}T_{L_1}$. To crude approximation,  this state corresponds to $\theta(n, k_y) = \pi/2$ and $\varphi(n, k_y) = (-1)^n \pi/2 + \pi/2$ in the parametrization of Eq.~\eqref{eq:SP_ansatz}. 
In order to test this ansatz, we computed the energy of the SM ansatz and  the stripe ansatz and compared it to the DMRG ground state energy (Table~\ref{tab:energy_comparison}) and the QAH ansatz ($\theta = 0$). We also computed the  energy of the ground state of the BM model with respect to the IBM model to establish the relevant Coulomb energy scale. We find that the energy of the QAH, \SM{}, and the stripe ansatz are within $1/3$ \si{\milli\electronvolt} of the DMRG energy.
While the stripe is not the true ground state for the parameter values studied here,\footnote{We verify this by initializing the DMRG using the doubled unit cell stripe ansatz, and find the stripe reverts to the \SM{} phase at convergence.}
 this confirms the assertion in Ref.~\cite{KangVafek} that the stripe is a viable candidate in the wider phase diagram.

\begin{table}
    \begin{center}
    \begin{ruledtabular}
    \begin{tabular}{lr}
    State & Energy [\si{\milli\electronvolt}]\\[0.2em]\colrule
    DMRG Ground State (\SM{}) & $-28.24$\\
    QAH Ansatz (Eq.~\ref{eq:SP_ansatz}) & $-28.04$\\
    \SM{} Ansatz (Eq.~\ref{eq:SP_ansatz}) & $-27.92$ \\
    $C_2 \mathcal{T}$ - Stripe Ansatz (Eq.~\ref{eq:SP_ansatz}) & $-28.08$ \\
    Dirac (BM Ground State) & $-20.62$\\
    \end{tabular}
    \end{ruledtabular}
    \end{center}
    \caption{Energy per electron of various trial states, evaluated with respect to the IBM Hamiltonian at $w0/w_1 = 0.85$.  Here ``Dirac" refers to ground state of the single-particle BM Hamiltonian, and the parameters for SM and Stripe Ans\"atze are described in the text. We see that the Stripe is very close to the DMRG ground state, and the Dirac state is well-separated from the rest, reflecting the dominant importance of the Coulomb interactions. Note that the energies are negative because we have subtracted off the $q=0$ part of the Coulomb interaction.}

    \label{tab:energy_comparison}
\end{table}

\section{Conclusions}\label{sec:conclusion}

In this work we have introduced a method for studying \tBLG{} (and any other moir\'e material) using DMRG.
Our method (Fig.~\ref{fig:flowchart}) starts with the BM model, adds interactions, and compresses the resulting MPO down to a reasonable size so that the DMRG is computationally tractable. We carefully verified the correctness of our approach and showed that it is sufficiently precise to capture the ground state physics of the IBM model. To benchmark our approach we focused mostly on the spinless, single-valley case. However, we showed in Section \ref{sec:MPOs} that our method can be extended to the spinful, two-valley case with only moderately greater computational resources. Therefore we have identified a method to study the ground state physics of \tBLG{} using DMRG.

Even though the spinless, single-valley model is not strictly physical, our results  have several important conceptual implications for the study of \tBLG{}. Remarkably, we found that the DMRG ground state is well-approximated as a $k$-space Slater determinant for all $w_0/w_1$. However, we stress that the ground state nevertheless has \textit{no} relation to the ground state of the single-particle BM model: it mixes states very far from the Fermi surface of the BM Hamiltonian (Fig.~\ref{fig:HF_gap} (c)). At least near the magic angle this suggests that weak-coupling approaches to \tBLG{}, which rely on various details of the BM Fermi surface, will miss the essential physics. Instead, the energetics are dominated by the exchange physics of the Coulomb interaction. As a result, the effective band structure (as would be computed from the self-consistent Hartree-Fock Hamiltonian) is entirely different than that of the BM model, with a width set by $E_C$ (Fig.~\ref{fig:HF_gap}(a)). This is true even in the large $w_0 / w_1$ nematic semimetal phase \SM{}, which may have some relation to the  $\nu=0$ semimetallic resistance peak found in experiment.

Furthermore, it is subtle to describe the observed ground states within a 2D Wannier-localized ``Mott insulating'' picture.
For  small $w_0 / w_1$ we have a QAH phase: the filled states have net Chern number, so the projector onto the filled states cannot be 2D Wannier localized. This is not to say that it is impossible to find these states within a  numerical approach which starts from 2D Wannier orbitals (which is just a change of basis), but rather that in such a basis the order would manifest as a set of coherences $\langle c^\dagger_i c_j \rangle$ between sites rather than an onsite order parameter. Presumably this would complicate any mean-field approach which depends on a site-local self energy.

Taken together, this supports the point of view that \tBLG{} is  more closely related to quantum Hall ferromagnetism, where symmetry breaking is driven by the combination of band topology and Coulomb exchange, than it is to  the Mott insulating physics of the Hubbard model. 
 But of course in contrast to quantum Hall systems, \tBLG{} comes with time-reversal symmetry, making it amenable to superconductivity. 
Future work will explore the physics of \tBLG{} upon restoring the spin and valley degrees of freedom.
\\ \\
\begin{acknowledgments}
    We thank Xiangyu Cao and Eslam Khalaf for guiding our understanding of MPO compression and the nematic semimetal, as well as discussions with Shubhayu Chatterjee, Frank Pollmann, Senthil Todadri and Ashvin Vishwanath.
    We also thank Jian Kang and Oskar Vafek for carefully explaining their DMRG results to us.

    DEP  acknowledges  support  from  the  NSF  Graduate  Research Fellowship Program Grant No. NSF DGE 1752814.
    MPZ was supported by the Director, Office of Science, Office of Basic Energy Sciences, Materials Sciences and Engineering Division of the U.S. Department of Energy under contract no. DE-AC02-05-CH11231 (van der Waals heterostructures program, KCWF16). 
    JH was funded by the U.S. Department of Energy, Office of Science, Office of Basic Energy Sciences, Materials Sciences and Engineering Division under Contract No. DE-AC02-05- CH11231 through the Scientific Discovery through Advanced Computing (SciDAC) program (KC23DAC Topological and Correlated Matter via Tensor Networks and Quantum Monte Carlo).
    This research used the Savio computational cluster resource provided by the Berkeley Research Computing program at the University of California, Berkeley (supported by the UC Berkeley Chancellor, Vice Chancellor for Research, and Chief Information Officer).
\end{acknowledgments}

\bibliographystyle{unsrtnat}
\bibliography{references}

\newpage
    
\appendix\textbf{}

\section{Interacting Bistritzer-MacDonald Model}
\label{app:IBM_model}

In this appendix, we review the interacting BM model projected into the flat bands. We use the conventions and definition of $\v{h}(\v{k})$ from Supp. Mat. I of \ref{bultinck2019Ground}.

Let us first consider the Coulomb interaction

\begin{eqnarray}
    \hat{H}_C & = & \frac{1}{2}\int \mathrm{d}\v{r}\int \mathrm{d}\v{r}'\, V(\v{r}-\v{r}') \psi^\dagger_\alpha(\v{r})\psi^\dagger_\beta(\v{r}')\psi_\beta(\v{r}')\psi_\alpha(\v{r}) \nonumber \\
    & = & \frac{1}{2A}\sum_{\v{k},\v{k}',\v{q}}V_{\v{q}}\, \psi^\dagger_{\alpha,\v{k}+\v{q}} \psi^\dagger_{\beta,\v{k}'-\v{q}} \psi_{\beta,\v{k}'} \psi_{\alpha,\v{k}}\, ,
\end{eqnarray}
where $A$ is the sample area, $V_{\v{q}} = \int \mathrm{d}\v{r}\,V(\v{r}) e^{i\v{q}\cdot\v{r}} $ and $\alpha,\beta$ are combined layer-sublattice indices. Summation over repeated indices is implicit. The Fourier components of the Fermi operators are defined to satisfy the canonical anti-commutation relations:

\begin{equation}
  \{\psi^\dagger_{\alpha,\v{k}},\psi_{\beta,\v{k}'}\} = \delta_{\alpha,\beta}\delta_{\v{k},\v{k}'}
\end{equation}
Next, we relabel the sums over the momenta $\v{k}$ and $\v{k}'$ as

\begin{equation}
    \sum_{\v{k}} \rightarrow \sum_{\v{k}\in \text{mBZ}}\sum_{\tau}\sum_{\v{G}}\, ,
\end{equation}
where $\tau = \pm$ is a valley label, and $\v{G}$ are the moir\'e reciprocal lattice vectors. We can now approximate the Coulomb interaction as

\begin{eqnarray}\label{eq:Coulombappr}
    \hat{H}_C & = & \frac{1}{2A}\sum_{\tau,\tau'}\sum_{\v{q}}\sum_{\v{k},\v{k}'\in \mathrm{mBZ}} \sum_{\v{G},\v{G}'} V_{\v{q}} \times \\  
    & & \psi^\dagger_{\alpha,\tau,\v{G}}(\v{k}+\v{q}) \psi^\dagger_{\beta,\tau',\v{G}'}(\v{k}'-\v{q}) \psi_{\beta,\tau',\v{G}'}(\v{k}') \psi_{\alpha,\tau,\v{G}}(\v{k})\, ,\nonumber
\end{eqnarray}
where $\psi^\dagger_{\alpha,\tau,\v{G}}(\v{k}) = \psi^\dagger_{\alpha,\v{k}+\tau \v{K}_{\Gamma} + \v{G}} $ and $\v{K}_{\Gamma}$ denotes the $\Gamma$ point of the mBZ centered at the $K$ points of the graphene layers. Note that by definition, $\psi^\dagger_{\alpha,\tau,\v{G}}(\v{k}+\v{G}') = \psi^\dagger_{\alpha,\tau,\v{G}+\v{G}'}(\v{k})$. Eq.~\eqref{eq:Coulombappr} is only an approximation to the complete Coulomb interaction, as inter-valley scattering terms have been neglected. This can be justified because of the long-range nature of the interaction, which suppresses inter-valley scattering by a factor of order $V_{2\v{K}_\Gamma}/ V_0$.

Next, we perform a unitary transformation to the BM band basis and define

\begin{equation}
    f^\dagger_{m,\tau,\v{k}} = \sum_{\alpha,\v{G}} u_{m,\tau;\alpha,\v{G}}(\v{k}) \psi^\dagger_{\alpha,\tau,\v{G}}(\v{k})\, ,
\end{equation}
where $m$ labels the bands of the single-valley BM model, and $u_{m,\tau;\alpha,\v{G}}(\v{k})$ are the periodic part of the Bloch states of the BM Hamiltonian. Note that $f^\dagger_{m,\tau,\v{k}+\v{G}'} = f^\dagger_{m,\tau,\v{k}}$ because the BM Bloch states satisfy $u_{m,\tau;\alpha,\v{G}}(\v{k}+\v{G}') = u_{m,\tau;\alpha,\v{G}+\v{G}'}(\v{k})$. With this definition, Eq.~\eqref{eq:Coulombappr} takes the following form in the BM band basis:

\begin{eqnarray}\label{eq:BMbasis}
    \hat{H}_C & = & \frac{1}{2A}\sum_{\tau,\tau'}\sum_{\v{q}}\sum_{\v{k},\v{k}'\in \mathrm{mBZ}} V_{\v{q}} \left[\Lambda^{\tau}_{\v{q}}(\v{k}) \right]_{mn} \left[\Lambda^{\tau'}_{-\v{q}}(\v{k}') \right]_{m'n'} \nonumber\\  
    & & \times f^\dagger_{m,\tau,\v{k}+\v{q}} f^\dagger_{m',\tau',\v{k}'-\v{q}} f_{n',\tau',\v{k}'} f_{n,\tau,\v{k}}\, ,
\end{eqnarray}
where the sums over band indices are implicit, and the form factors are given by

\begin{equation}
    \left[\Lambda^{\tau}_{\v{q}}(\v{k}) \right]_{mn} = \sum_{\alpha,\v{G}} u^*_{m,\tau;\alpha,\v{G}}(\v{k}+\v{q}) u_{n,\tau;\alpha,\v{G}}(\v{k})\, .
\end{equation}
In this work, we consider the single-valley model, which means that we fix all valley labels, i.e. $\tau = +$ everywhere. The single-valley Coulomb interaction is then given by

\begin{equation}
    \hat{H}_{C,sv} = \frac{1}{2A} \sum_{\v{q},\v{k},\v{k}'} V_{\v{q}} :\left[ \v{f}^\dagger_{\v{k}+\v{q}} \Lambda_{\v{q}}(\v{k}) \v{f}_{\v{k}}\right]
    \left[ \v{f}^\dagger_{\v{k}'-\v{q}} \Lambda_{-\v{q}}(\v{k}') \v{f}_{\v{k}'}\right]:
\end{equation}
where $\Lambda_{\v{q}}(\v{k}) = \Lambda^+_{\v{q}}(\v{k})$ and $\v{f}^\dagger_{\v{k}}=\v{f}^\dagger_{+,\v{k}}$ is a vector of creation operators running over the BM bands.

As a final step, we now project $\hat{H}_C$ into the subspace where all remote valence bands are occupied, and all remote conduction bands are empty. To do that, we first define following Hartree Hamiltonian functional:

\begin{align}
    \hat{H}_{h}[P(\v{k})] = \frac{V_0}{A}\sum_{\v{G}} \left[\sum_{\v{k'}} \text{tr} \left( P(\v{k}') \Lambda_{\v{G}}(\v{k}') \right) \right] \nonumber\\ 
    \times\sum_{\v{k}}\v{f}^\dagger_{\v{k}} \Lambda_{-\v{G}}(\v{k}) \v{f}_{\v{k}}\, ,
\end{align}
where the fermion operators are restricted to the flat bands, and which depends on a general Slater determinant correlation matrix $\langle f^\dagger_{m,\v{k}}f_{n,\v{k}'}\rangle = \sum_{\v{G}}\delta_{\v{k}+\v{G},\v{k}'} \left[P(\v{k})\right]_{nm}$. We also similarly define a Fock Hamiltonian functional:

\begin{equation}
    \hat{H}_f[P(\v{k})] = -\frac{1}{A}\sum_{\v{q},\v{k}} V_{\v{q}}\; \v{f}^\dagger_{\v{k}}\Lambda_{\v{q}}(\v{k}-\v{q})P(\v{k}-\v{q})\Lambda_{-\v{q}}(\v{k}) \v{f}_{\v{k}} \, ,
\end{equation}
where again the fermion operators are restricted to the flat bands. With these definitions, one can write the flat-band projected Coulomb interaction as

\begin{equation}
    \hat{H}_{C,sv}\Big|_{FB} = \tilde{H}_{C,sv} + \hat{H}_h[P_r(\v{k})] + \hat{H}_f[P_r(\v{k})]\, ,
\end{equation}
where $\tilde{H}_{C,sv}$ is obtained from $\hat{H}_{C,sv}$ by simply restricting all band indices to the flat bands, and $P_r(\v{k})$ is the correlation matrix of the Slater determinant where only the remote valence bands are filled.

Having obtained the flat-band projected single-valley Coulomb interaction, we now have to be careful not to double count certain interaction effects. In particular, the value of the hopping parameter in the tight-binding model of mono-layer graphene is chosen to best reproduce the experimentally observed Dirac velocity. Importantly, this Dirac velocity is already renormalized by the Coulomb interaction. So if we want to explicitly add back the complete Coulomb interaction, we must make sure not to forget to subtract off the renormalization of the dispersion. In practice, this means that we have to subtract off the following Hartree-Fock Hamiltonian:

\begin{equation}
    \hat{H}_{sub} = \hat{H}_h[P_0(\v{k})] + \hat{H}_f[P_0(\v{k})]\, ,
\end{equation}
where $P_0(\v{k})$ is the correlation matrix of the charge-neutrality Slater determinant of two \emph{decoupled} graphene layers \cite{XieMacDonald} restricted single spin and valley, expressed in the BM band basis. The complete projected single-valley BM model thus takes the form

\begin{equation}\label{eq:complete}
    \hat{H} = \hat{H}_{BM,sv} + \hat{H}_{C,sv}\Big|_{FB} - \hat{H}_{sub}\,.
\end{equation}
Because the inter-layer tunneling is only a small perturbation compared to the intra-layer hopping, it does not significantly change the remote bands. It thus holds to a very good approximation that

\begin{equation}
    \left[P_r(\v{k}) \right]_{mn} =  \left[P_0(\v{k}) \right]_{mn} \text{ for } m,n \in \text{remote bands}\, .
\end{equation}
Now combining all the single-particle terms in Eq.~\eqref{eq:complete}, one obtains the matrix $h(\v{k})$ defined in Eq.~\eqref{eq:IBM}. The remaining interaction Hamiltonian of Eq.~\eqref{eq:complete} is then exactly the second term of Eq.~\eqref{eq:IBM}.

Finally, let us also comment on the value of the dielectric constant $\epsilon_r$ that we use for the projected model. The dielectric constant of the unprojected model is determined by the insulating hexagonal Boron-Nitride (hBN) substrate. In particular, it is given by the geometric mean of the dielectric constants of hBN parallel and orthogonal to the atomic plane: $\epsilon_{\textrm{hBN}} = \sqrt{\epsilon_\perp \epsilon_\parallel} \approx 4.4$. When we restrict to the flat bands we have to take into account that the filled remote bands which have been projected out act as another insulating background, whose polarization will also contribute to the dielectric constant. We phenomologically incorporate this effect by screening the Coulomb potential with the static RPA polarization of eight Dirac cones at neutrality. In doing so, we ignore the small inter-layer tunneling terms (whose effect on the remote bands is expected to be small), and we overcount a small contribution coming from the states in the flat bands themselves. The polarization bubble of Dirac fermions in mono-layer graphene was calculated in Ref.~\cite{HwangDasSarma}. Using the results from that paper, we obtain a dielectric constant $\epsilon_r = \epsilon_{\textrm{hBN}} + N_D*0.73 \approx 10.3$, where $N_D = 8$ is the number of Dirac cones. This gives an order-of-magnitude estimate for the dielectric constant of the projected model. Given the many approximations used in deriving this estimate, it is safest to assume that the true dielectric constant lies somewhere in the interval $\epsilon_r \sim 6 - 12$.

We also note that the effective gate distance $d$ appearing in $V_\mathbf{q} = \frac{e^2}{4 \pi \epsilon q} \tanh(q d)$ is modified by the anisostropy of the substrate dielectric constant. Specifically, if  $d_g$ is the physical distance to the gate, then $d = \sqrt{\frac{\epsilon_\perp}{\epsilon_\parallel} } d_g$, due to the bending of field lines in an anisotropic substrate.

\section{Wannier Localization, Gauge fixing, and Symmetrization}\label{app:gauge_fixing}
In this section, we list several invariants enforced by the maximal localization of Wannier orbitals and gauge fixing. We also comment on the different between our gauge choice and the gauge choice in \cite{KangVafek}, and how we can map states from one gauge to another.

We denote the periodic part of momentum space orbitals $c^\dagger_{\pm, \v{k}}$ by $\ket{u_\pm(k_x, k_y)}$. We define $2\times 2$ overlap matrix as

\begin{equation}
    O_{\alpha\beta}(\v{k}, \v{k}') = \braket{u_\alpha(\v{k})|u_\beta(\v{k}')}.
\end{equation}
For each overlap matrix $O$, we define unitary overlap matrix $\tilde{O}$ by the unitary part of the polar decomposition of $O$.
Wannier localization \cite{marzari1997Maximally} guarantees that the unitary overlap matrix in the $k_x$ direction is always given by

\begin{equation}
    \tilde{O}_{\alpha\beta}(\v{k}, \v{k} + \Delta k_x) = \delta_{\alpha\beta} e^{iP_x(\alpha, k_y)},
\end{equation}
where $\Delta k_x$ is the unit of discretization in the $x$ direction. Physically, this corresponds to choosing constant $A_x = P_x/G_x$. 

Wannier localization fixes relative phases within each $k_y$ mode. To fix relative phases between different $k_y$ modes, we demand a continuity criterion in the $k_y$ direction. One natural choice is to demand the unitary overlap matrix in the $y$ direction  be the identity matrix:
\begin{equation}
    \tilde{O}_{\alpha\beta}(\v{k}, \v{k} + \Delta k_y) = \delta_{\alpha\beta},
\end{equation}
where the $x$ component of $\v{k}$ is some fixed $k_{x0}$. The condition notably does \textit{not} wraparound the mBZ: the mode near $k_y/G_y = 0.5$ and the mode near $k_y/G_y = -0.5$ are not subject to the continuity condition with each other.

This leaves us with a global $U(1) \times U(1)$ phase ambiguity. Fortunately, the Wannier localization and continuity conditions are compatible with the symmetry requirements

\begin{equation}
    \begin{split}
        C_2\mathcal{T}\ket{u_\pm(\v{k})} &= \ket{u_\mp(\v{k})},\\
        C_{2x}\ket{u_\pm(\v{k})} &= \mp i\ket{u_\mp(-\v{k})},
    \end{split}
\end{equation}
which fixes the gauge up to an overall minus sign.

We note that the gauge fixing condition is different from \cite{KangVafek} on two accounts: we use a different continuity criterion, and we fix $C_{2x}$ to act as $\sigma_y$ rather than $\sigma_x$. The latter is easy to account for by a $e^{i\pi\sigma_x/4}$ rotation, which maps $\sigma_y$ to $\sigma_x$, but there is no guarantee that the gauge is the same even after such rotation.

Luckily, in the case of $L_y = 6, \Phi_y = \pi, w_0=0.185$, $e^{i\pi\sigma_x/4}$ rotation maps our ground state ansatz in Sec.~\ref{sec:thin_cylinder} to their ground state ansatz, suggesting our gauge choice is the same up to the rotation. In particular, this means we can go from their parametrization to our parametrization simply by changing $\varphi \to \varphi + \pi/2$. We make use of this fact when we write down the Stripe ansatz in our gauge.

	\section{An Uncompressed MPO for BLG}
	\label{app:uncompressed_MPO_construction}

This Appendix details the construction of the uncompressed infinite MPO for bilayer graphene. More concretely, we show how to construct an iMPO which encodes arbitrary 4-fermion interactions up to range $R$.
\footnote{For the construction of \textit{finite} MPO for interacting fermions, as is relevant to quantum chemistry applications, we refer the reader to \cite{ChanKeselman}.}
Schematically, what is the iMPO for the Hamiltonian
\begin{equation}
	\widehat{H} = \sum_{ij} t_{ij} c^{\dagger}_{i} c_{j} + \sum V_{ijk\ell} c_{i}^{\dagger} c_{j}^{\dagger} c_k c_\ell,
\end{equation}
for arbitrary interactions $t$ and $V$?

We proceed in several stages. First we provide a straightforward construction of such an MPO of size $D = O(R^3)$. However, this gives an iMPO which is too large to even begin compressing ($D \approx 8,000,000$ for our standard Hamiltonian parameters). Second, therefore, we provide a more efficient MPO which represents the same operator at size $D = O(R^2)$, which will give $D \approx 100,000$ --- small enough to be compressed.

\subsection{A Straightforward MPO Construction}

This section will give a relatively straightforward way to construct an MPO for a  long-range 4-fermion Hamiltonian. For simplicity, we restrict ourselves to a 1D spin chain of spinless fermions and no internal degrees of freedom and construct a iMPO representation for a class of ``toy" 4-body Hamiltonians
\begin{equation}
	\widehat{H}_{\text{simple}} = \sum_{i<j<k<\ell} V_{ijk\ell} c_{i}^{\dagger} c_j^{\dagger} c_k c_{\ell}
	\label{eq:H_4_bdy_simple}
\end{equation}
with  $\n{i-j}, \n{j-k},\n{k-\ell} \le R$. \footnote{This is slightly different from how we implemented $\Delta x$ cutoff for BLG, where we demanded $ \ell - i < R$. This will only change the complexity by a constant factor, so we stick to the simpler cutoff} In particular, we have imposed the artificial properties that $H$ contains only terms of the form $c^{\dagger} c^{\dagger} c c$ (which makes this non-Hermitian), is completely translationally invariant, and has $i < j < k < \ell$ with strict inequalities. These restrictions will be lifted below.

\begin{figure}[h]
	\includegraphics{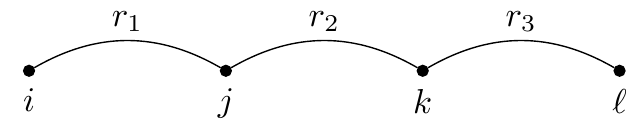}
\end{figure}

\begin{figure}
	\includegraphics[width=\linewidth]{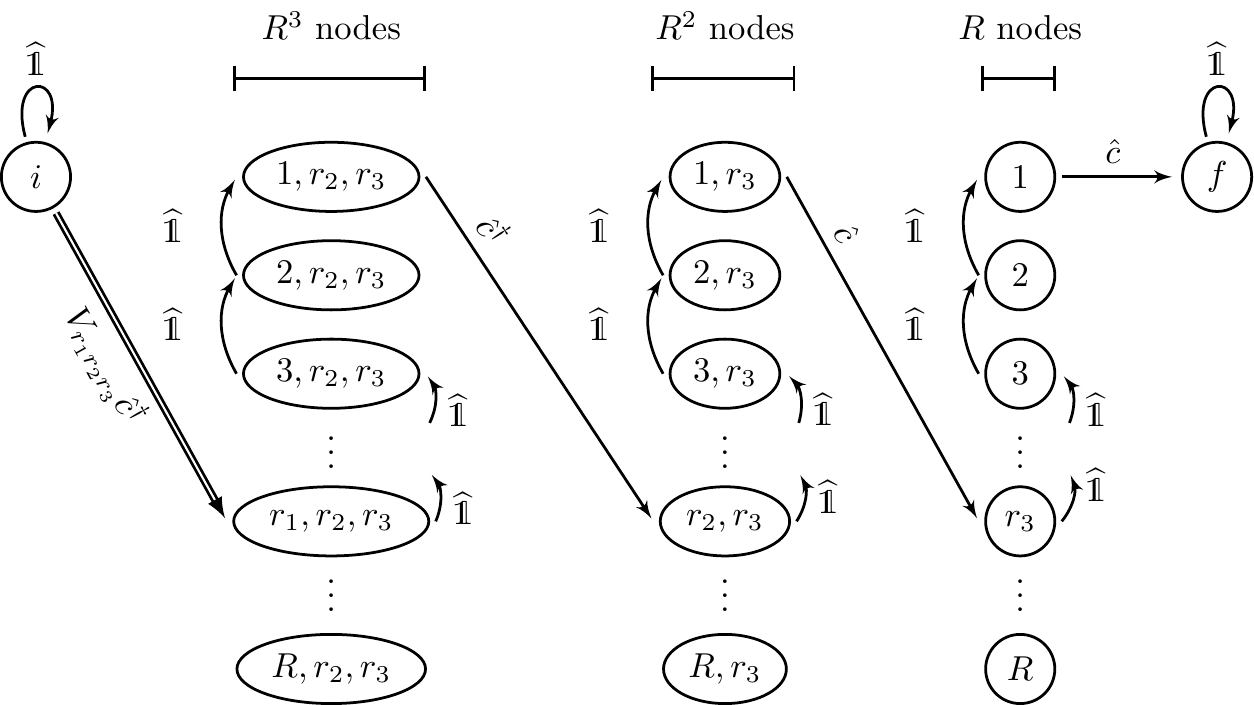}
	\caption{A straightforward but rather inefficient MPO for $\widehat{H}_{\text{simple}}$.}
	\label{fig:straightforward_4_bdy_MPO}
\end{figure}

Let us work in the finite state machine picture for the iMPO. For convenience, define $r_1 = j-i$, $r_2 = k-j$, $r_3 = \ell-k$. As $0 < r_1, r_2, r_3 \le R$, our operator can have up to $R^3$ terms. For each one, we must first place a $c^{\dagger}$, then place $r_1 -1$ identity operators $\1$, then place another $c^{\dagger}$, and so on. To encode these into the finite state machine, we make a unique path for each term from the initial to final nodes, as shown in Fig.~\ref{fig:straightforward_4_bdy_MPO}. For each term in the Hamiltonian, there is an edge $V_{r_1 r_2 r_3}$ from the initial node $(i)$ to node $(r_1,r_2,r_3)$. After that, there is a unique path from node $(r_1,r_2,r_3)$ to node $(f)$. The nodes are labeled by the distances to non-identity nodes that have yet to be placed. In each column, the path simply ``counts down" in the first index, until it reaches 1. At that site a non-identity on-site operator is placed, and the path goes on to the next column. As a matrix, the operation of counting down is encoded by a ``skipping matrix", with identities on the first superdiagonal:

\[
	\widehat{\mathds{S}} = \begin{pmatrix} 
0 & \1 & & & &\\
& 0 & \1 & & & \\
& & \ddots & \ddots & \\
& & & 0 & \1\\
& & &  & 0\\
	\end{pmatrix}. 
\]
We therefore adopt a concise diagrammatic notation where \textit{rectangular} nodes represent nodes that count down, as shown in Fig.~\ref{fig:straightforward_4_bdy_MPO_notation}. In this notation, $\widehat{\mathds{S}}_i$ acts on a node as\footnote{This is slight abuse of notation, since it doesn't convey that $\widehat{\mathds{S}}_i$ places an identity operator upon the transition.}

\begin{equation}
    \widehat{\mathds{S}}_i(\widehat{O}[r_i...]) = (\widehat{O}[r_i-1...]).
\end{equation}

We will use this notation below. We also note that some transitions are deterministic, in the sense that the next node is fully specified by the current node (e.g.~$(2, r_2, r_3) \rightarrow (1, r_2, r_3)$). On the other hand, the transition from the initial node is highly branching, giving rise to the $R^3$ different terms in the Hamiltonian. We show such highly branching transitions in the figure by double arrows, while deterministic transitions are shown by single solid arrows.

\begin{figure*}
	\includegraphics{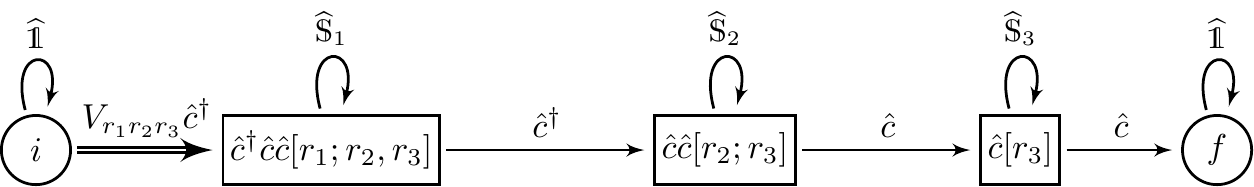}\\
	\vspace{2em}
	\includegraphics{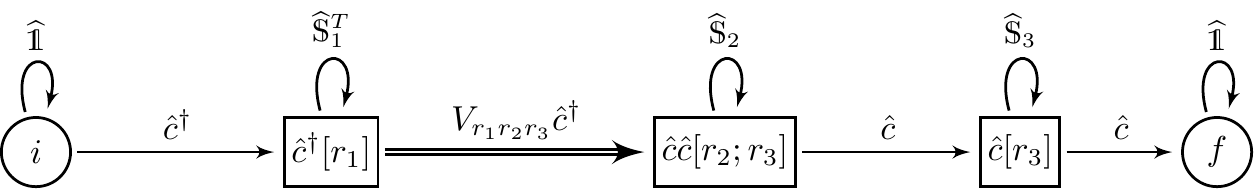}
	\caption{(Top) Definition of the compact state machine notation. This is the same state machine as in Fig.~\ref{fig:straightforward_4_bdy_MPO}, but where columns have been replaced by rectangular nodes. Rectangular nodes label the on-site operators that are yet to be placed, as well as the distances to them. The first rectangle contains $R^3$ nodes, the second contains $R^2$ and the third contains $R$. The self-loop $\widehat{\mathds{S}}$ means one should place an identity $\1$ and decrement the first index of the node. (Bottom) Another MPO which represents the same Hamiltonian with only $D = O(R^2)$ nodes instead of $O(R^3)$. Rectangular nodes to the right of the branching arrow are labeled similarly to the (Top), while the rectangular node to the left of the branching arrow is labeled by which operator it has already picked up, and the distance to it.}
	\label{fig:straightforward_4_bdy_MPO_notation}
\end{figure*}

Overall, this construction requires $C_1\left(R \right) := R^3 + R^2 + R + 2$ nodes. This is already somewhat efficient, as multiple terms will partially reuse the same paths. For instance, the paths through the state machine for $V_{5, r_2, r_3}$ and $V_{6,r_2,r_3}$ will be the same after the first few edges. To add other types of terms, such as $c c^{\dagger} c c^{\dagger}$ or $c c c^{\dagger} c^{\dagger}$, one must duplicate all these nodes, giving  $4C_1(R)$ nodes. For bilayer graphene, we use a unit cell of size 12 and interaction cutoff of 10 unit cells, giving $D = 6,970,088$ just for the four-body terms. This makes the $O(D^3)$ compression algorithm impractical, so we must seek a more efficient way to encode the MPO.

\subsection{An Efficient MPO Construction}

We now describe a more efficient way to encode the uncompressed MPO, and give explicit state machines for 2-body, 3-body, and 4-body interaction terms. The key idea is to place the coefficient $V_{r_1r_2r_3}$ in the \textit{middle} of the path rather than at the beginning.

Fig.~\ref{fig:straightforward_4_bdy_MPO_notation} (Bottom) shows the more efficient construction of the MPO for the same operator as before, Eq.~\eqref{eq:H_4_bdy_simple}. Let us unpack how it works. For each term $V_{r_1 r_2 r_3} c^{\dagger} c^{\dagger} c c$, we start at node $(i)$, jump to the node ($\hat{c}^\dagger[1]$) by placing $c^\dagger$, then ``count up'' to $r_1$, at which point we place the on-site operator $V_{r_1 r_2 r_3} c^{\dagger}$. The state machine then ``counts down" for distance $r_2$, places a $c$ operator, counts down for distance $r_3$, places a second $c$, and reaches the final node.

To distinguish ``count up" skipping matrix and "count down" skipping matrix, we introduced ``transposed skipping matrix'' $\widehat{\mathds{S}}^T_i$ such that

\begin{equation}
    \widehat{\mathds{S}}^T_i (\widehat{O}[r_i...]) = (\widehat{O}[r_i+1...]).
\end{equation}

The advantage over the previous method is that, instead of having a unique path for each term in the Hamiltonian, many of the paths are partially shared. For example the terms $V_{r_1 5 r_3}$ and $V_{r_1 7 r_3}$ will share the same first $r_1$ and last $r_3$ steps in their path through the state machine. This means that instead of requiring columns of sizes $R^3, R^2$, and $R$, we instead only need $R, R^2$, and $R$, which is vastly more efficient. Another way to see this is to observe when the highly-branching transitions occur. In this more efficient algorithm, branching occurs between $\hat{c}^\dagger$ to $\hat{c}^\dagger\hat{c}^\dagger$), thereby reducing the number of paths to keep track of to $R^2$. 
Of course, one could continue to further optimize the layout of the state machine, but we do not need a generic or completely optimal solution to this problem, only one that will render the BLG Hamiltonian small enough to fit in memory to be compressed.

Now that we have described the technique for writing down a sufficient efficient MPO construction, we will relax the artificial assumptions. Let us first relax the assumption that only $c^{\dagger} c^{\dagger} c c$ terms appear.  In general, we will have a Hamiltonian of the form
\begin{align}
	\label{eq:full_H_234_bdy}
    H \ &=\ H_{\text{hop}} + H_{\text{int}} = H_{\text{hop}} + H_2 + H_3 + H_4\\
	\nonumber
    H_{\text{hop}} &= \sum_{i<j} \tilde{V}^{\dot{c}c}_{ij} c^\dagger_i c_j + \sum_{i<j} \tilde{V}^{c\dot{c}}_{ij}c_i c^\dagger_j\\ \nonumber
    H_2 &=  \sum_{i<j} \tilde{V}^{nn}_{ij} n_i n_j\\
	\nonumber
	H_3 &= \sum_{i<j<k} 	\tilde{V}^{\dot{c}nc}_{ijk} c^\dagger_i n_j c_k
	+ \tilde{V}^{n\dot{c}c}_{ijk} n_i c^\dagger_j c_k
	+ \tilde{V}^{\dot{c}cn}_{ijk} c^\dagger_i c_j n_k\\
	\nonumber
	\ &\hspace{3em}+ \tilde{V}^{nc\dot{c}}_{ijk} n_i c_j c^\dagger_k
	+ \tilde{V}^{cn\dot{c}}_{ijk} c_i n_j c^\dagger_k
	+ \tilde{V}^{c\dot{c}n}_{ijk} c_i c^\dagger_j n_k\\
	\nonumber
    H_4 &= \sum_{i<j<k<\ell}
    \tilde{V}^{\dot{c}\dot{c}cc}_{ijk\ell} c^\dagger_i c^\dagger_j c_k c_\ell
    + \tilde{V}^{\dot{c}c\dot{c}c}_{ijk\ell} c^\dagger_i c_j c^\dagger_k c_\ell\\
	\nonumber
	&\hspace{3em}+ \tilde{V}^{\dot{c}cc\dot{c}}_{ijk\ell}c^\dagger_i c_j c_k c_\ell^\dagger
    + \tilde{V}^{cc\dot{c}\dot{c}}_{ijk\ell}c_i c_j c_k^\dagger c_\ell^\dagger\\
	\nonumber
	&\hspace{3em}+ \tilde{V}^{c\dot{c}c\dot{c}}_{ijk\ell}c_i c_j^\dagger c_k c_\ell^\dagger
    + \tilde{V}^{c\dot{c}\dot{c}c}_{ijk\ell}c_i c_j^\dagger c_k^\dagger c_\ell,
\end{align}
with all $2$-body, $3$-body, and $4$-body interactions involving $c^{\dagger}, c$, and $n$ operators, and $c^\dagger$ is represented by $\dot{c}$ for concision. We can make this representation more amenable to finite state machine description by mapping $\tilde{V}_{ijkl}$ to $V_{r_0r_1r_2r_3}$ such that $V_{i(j-i)(k-j)(\ell-k)} = \tilde{V}_{ijkl}$(See Fig.~\ref{fig:4-body-efficient-real}). By combining all of these, one can generate the full Hamiltonian Eq.~\eqref{eq:full_H_234_bdy} with $D(R) = 4R^2 + 6R+2$ nodes. In practice, then,  the spinless, single-valley BLG Hamiltonian on a cylinder with 6 momentum cuts (i.e. a unit cell of 12 sites) and interactions of range 10 unit cells is size $D \approx 58,000$ before compression. 

\begin{figure}
    \centering
    \includegraphics[width=\linewidth]{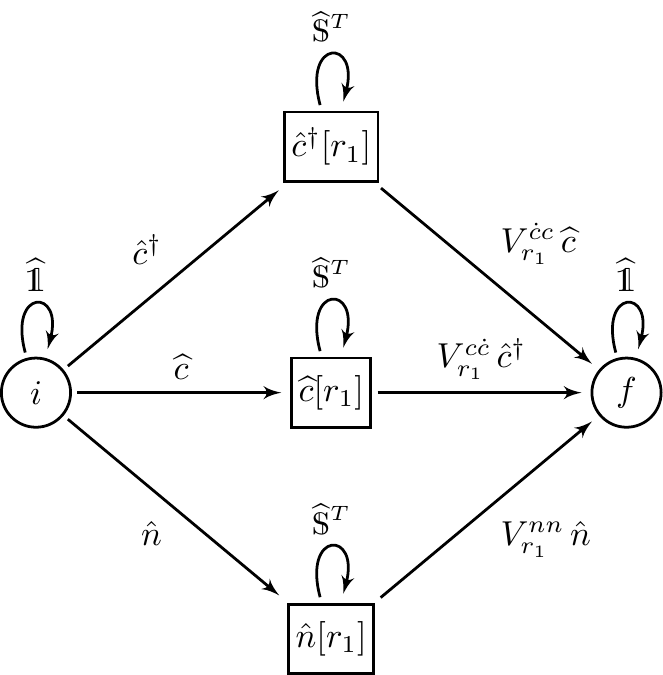}
    \includegraphics[width=\linewidth]{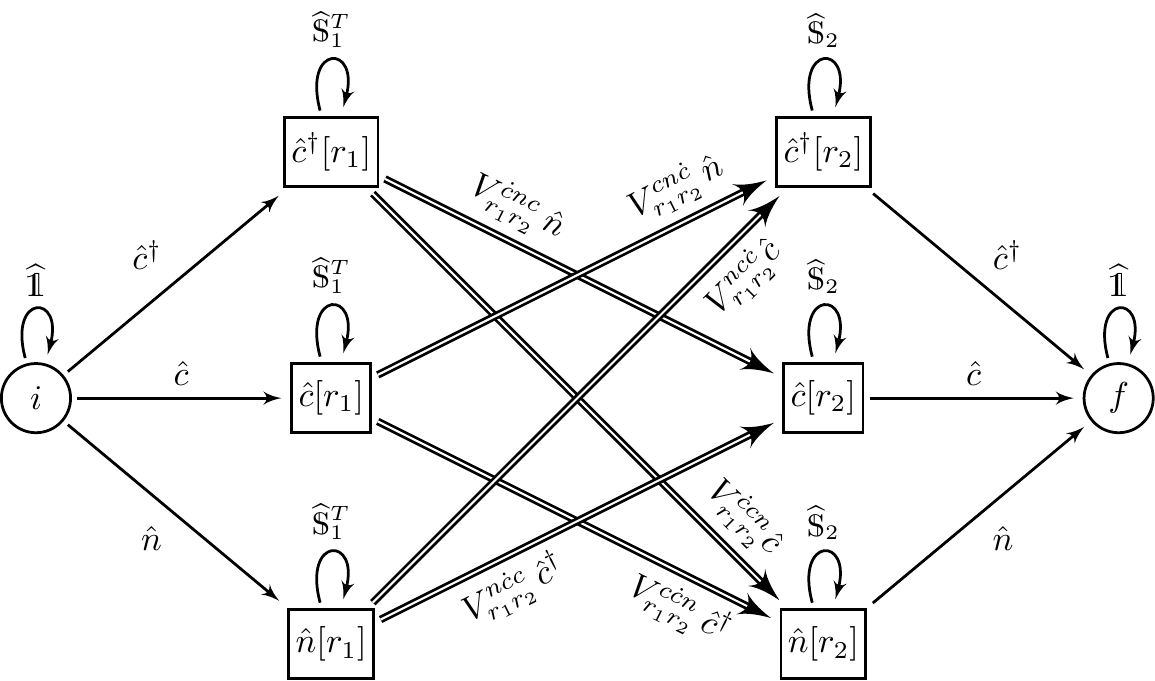}
    \includegraphics[width=\linewidth]{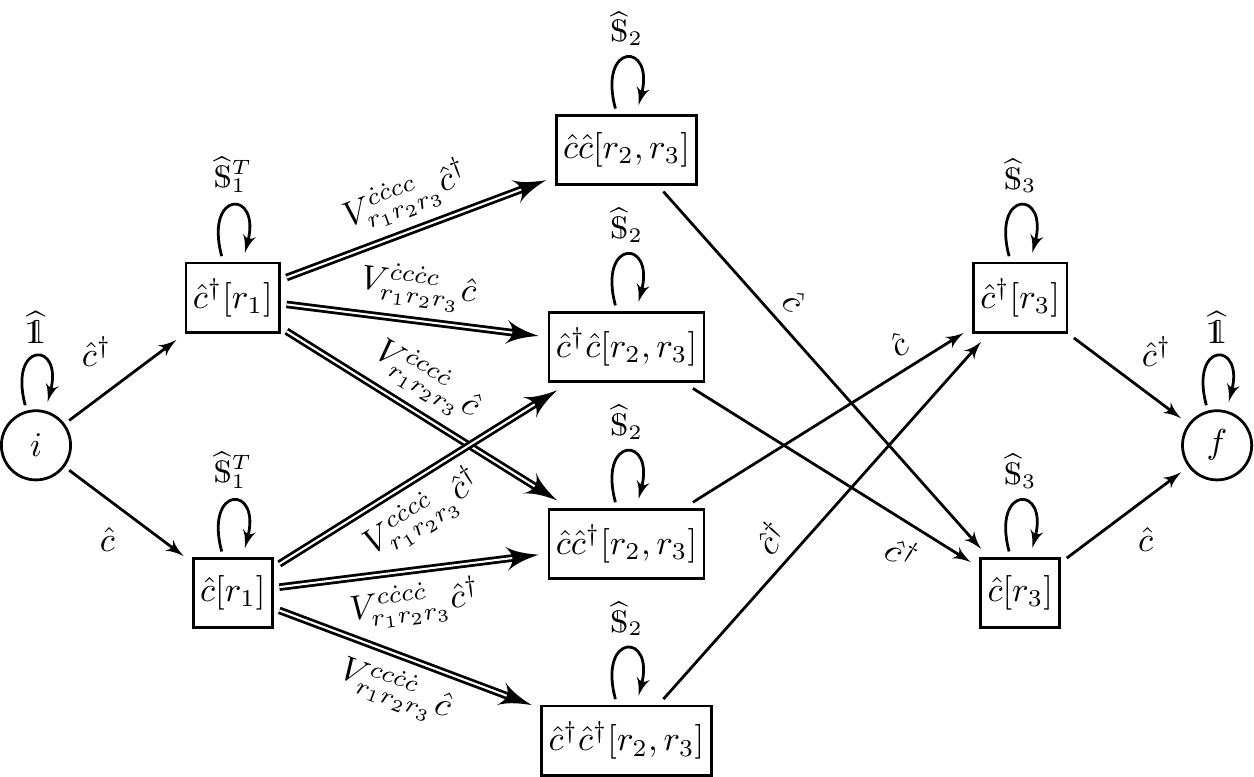}
	\caption{Finite state machines for constructing the 2-body, 3-body, and 4-body interactions of Eq.~\eqref{eq:full_H_234_bdy}. The notation is the same as in Fig.~\ref{fig:straightforward_4_bdy_MPO_notation}: boxes stand for collections of many nodes with $1 \le r_i \le R$. For concision, the operator $c^{\dagger}$ is represented by $\dot{c}$ in superscripts. Note that some of the nodes can be reused among $2$, $3$, and $4$-body paths. We have suppressed the $r_0$ index for clarity on the initial node and for the $V$ coefficients; this shows the case of unit cell of size $N=1$.}
    \label{fig:4-body-efficient-real}
\end{figure}

\section{Compression of infinite MPOs with general unit cells}
\label{app:UCMPO_compression}

This Appendix will present an algorithm for compressing infinite Matrix Product Operators (iMPOs) with non-trivial unit cells, a generalization of Ref. \cite{parker2019local}. The main application for these algorithms to compress the Hamiltonian for bilayer graphene down to a sufficiently small bond dimension that DMRG may be performed easily, while retaining sufficient precision to determine its physical properties. However, as the technique may be of independent interest, we will keep our discussion sufficiently general that the results apply to any iMPO that shows up in 2D DMRG.

Let us briefly review the context in which this algorithm is useful. Suppose that you have a Hamiltonian $\widehat{H}$ for a 2D system, and wish to find the ground state with DMRG \cite{white1992density,liang1994approximate}. A standard technique is to use a `thin cylinder' geometry of circumference $N_y$ sites and length $N_x \to \infty$. One then chooses a linear (1D) ordering for the sites on the cylinder by wrapping around in a helical pattern. This effectively reduces the problem to a 1D chain, but at a cost: interactions at distance $r$ in 2D can be as far as $N_y \times r$ in 1D. Furthermore, the resource cost grows hugely, as the matrix product state (MPS) bond dimension needed to accurately capture a 2D area law state grows as $\chiMPS \sim e^{N_y}$ \cite{liang1994approximate}. In practice, therefore, 2D DMRG is often limited to around $N_y=6-12$, even with bond dimensions of $\chiMPS \sim 10,000$ or more. For sufficiently long-range interactions in 2D, however, the bottleneck is not the MPS bond dimension, but rather the MPO bond dimension needed to encode the Hamiltonian. For example, long-range 4-body interactions of range $R$ result in iMPOs of bond dimension $D \sim R^2$ (see App.~\ref{app:uncompressed_MPO_construction}), and hence DMRG scales as $D^2 \sim R^4$, which becomes quickly impractical. The algorithms given below allow one to proceed by finding the best approximation for the iMPO of bond dimension $D' < D$. For many physical Hamiltonians, this compression incurs only a minor penalty (say, $10^{-4}$) in the precision of the eventual ground state. In the case of the single-valley IBM model we have used in this paper, the bond dimension may be reduced by a factor of $10^3$, vastly improving the speed of DMRG.

The rest of this Appendix is organized as follows. We first give an overview of iMPO compression in the case without unit cells to set notation. We then discuss how an iMPO may be ``topologically ordered", how this vastly speeds up compression, and present a practical compression algorithm. Afterwards we examine the properties of compressed Hamiltonians: under reasonable assumptions (1) Hermiticity is retained, (2) the fidelity of the compressed ground state versus the true ground state is high, and (3) their ground state energy and expectations values are accurate.

	\subsection{Lightning Review of iMPO Compression}

	We briefly review the notion of iMPO compression from \cite{parker2019local}. It is well-known that the optimal way to compress a 1D matrix product \textit{state} is to perform a Schmidt decomposition and drop the smallest singular values (see \cite{schollwock2011density} or \cite{hauschild2018efficient} for a review). For iMPOs, we employ the same basic technique with a few modifications to preserve the locality of the operator.

	We first present the algorithm in terms of (non-matrix product) local operators. Consider a local operator $\widehat{H}$ on an infinite 1D chain. We can split $\widehat{H}$ into left and right halves at some bond:
	\begin{equation}
		\widehat{H} = \widehat{H}_L \1_R + \1_L \widehat{H}_R + \sum_{a,b=1}^D  \widehat{h}_{L}^a \mathsf{M}_{ab} \widehat{h}_R^b
		\label{eq:H_bipartite_form}
	\end{equation}
	where a subscript $L$ or $R$  means the operator is supported entirely on the left or right half, respectively, and the matrix $\mathsf{M}$ keeps track of the terms which straddle the cut. This decomposition is not unique; we have the freedom to apply arbitary unitaries on the left and right. We can take advantage of this freedom to put $\widehat{H}$ into \textbf{almost-Schmidt form}
	\begin{equation}
		\widehat{H} = \widehat{H}_L \1_R + \1_L \widehat{H}_R + \sum_{a=1}^\chiMPO  \widehat{O}_{L}^a s_{a} \widehat{O}_R^a
		\label{eq:H_almost_Schmid_form}	
	\end{equation}
	where both $\{\widehat{O}_L^a\}$ and $\{\widehat{O}_R^a\}$ are orthonormal collections under the scaled Frobenius norm
	\begin{equation}
	    \IP{A,B} := \Tr[\widehat{A}^\dagger \widehat{B}]/\Tr[I],
	    \label{eq:scaled_Frobenius_norm}
	\end{equation}
	and where $s_1 \ge s_2 \ge \cdots \ge s_\chiMPO \ge 0$ are referred to as the singular values.\footnote{We note that the singular values resulting from an almost-Schmidt decomposition and a true Schmidt decomposition are slightly different. Their relation is described in Section 8 of \cite{parker2019local}.}
	We further require that the operators $\widehat{O}_{L, R}^a$ are \textbf{identity free}, i.e. $\langle \1_{L, R}, \widehat{O}_{L, R}^a\rangle = 0 \; \forall a$.  The \textbf{compressed operator} is then simply the truncation of the sum to the largest $\chiMPO' < \chiMPO$ singular values
	\begin{equation}
		\widehat{H}' := \widehat{H}_L \1_R + \1_L \widehat{H}_R + \sum_{a=1}^{\chiMPO'}  \widehat{O}_{L}^a s_{a} \widehat{O}_R^a.
		\label{eq:H_compressed}	
	\end{equation}
	One can show that $\widehat{H}'$ is the best approximation to $\widehat{H}$ with only $\chiMPO'$ `terms' straddling the cut \cite{parker2019local}. Furthermore, as we shall see below, the accuracy of the approximation is controlled by the \textbf{truncated singular value weight}
	\begin{equation}
	\epsilon^2(\chiMPO') := \sum_{a=\chiMPO'+1}^{\chiMPO} s_a^2.
		\label{eq:truncated_singular_value_weight}
	\end{equation}

	To \textit{compute} almost-Schmidt forms and compress operators, we work in the framework of matrix-product operators. We now recall their definitions to set notation and a few essential properties. Suppose we have a space of on-site operators with an orthonormal basis $\{\1, \widehat{O}_2,\ldots,\widehat{O}_d\}$ where $\IP{\widehat{O}_\alpha,\widehat{O}_\beta}= \delta_{\alpha\beta}$
	 is an inner product.\footnote{For example, a spin-$\frac{1}{2}$ chain has an on-site basis of the Pauli matrices $\{\1, \widehat{X},\widehat{Y},\widehat{Z}\}$.} Any translation-invariant operator, with unit cell of size $N$, can be written as\footnote{This is notation for multiplication along the virtual indices: $[\widehat{W}^{(1)} \widehat{W}^{(2)}]_{ac} := \sum_{b=1}^{\chiMPO^{(2)}} \left[W^{(1)}\right]^\alpha_{ab} \left[W^{(2)}\right]^\beta_{bc} \widehat{O}_\alpha \otimes \widehat{O}_\beta$.}
	\begin{equation}
		\widehat{H} = \cdots \left[\W^{(1)} \W^{(2)} \cdots \W^{(N)}\right]  \left[\W^{(1)} \cdots \W^{(N)}\right]\cdots	
		\label{eq:H_UCMPO_form}
	\end{equation}
	where each $\W^{(n)} = \sum_{\alpha=1}^d [W^{(n)}]^\alpha \widehat{O}_\alpha$ is an operator-valued matrix of size $\chiMPO^{(n)} \times \chiMPO^{(n+1)}$. We require each $\widehat{W}^{(n)}$ to have blocks of size $(1,\chiMPO^{(n)}-2,1) \times (1,\chiMPO^{(n+1)}-2,1)$
	\begin{equation}
\W = 
\left(
\begin{array}{c|c|c}
\1 & \CC & \DD \\ \hline
& \widehat{A}  & \BB \\ \hline
 & & \1
\end{array}
\right)
	\label{eq:block_structure}
\end{equation}
This ensures that each operator is a sum of terms that are the identity far enough to the left or right --- a physical and mathematical necessity for a local operator.

	An MPO is said to be in \textbf{left-canonical form} if all but the last column of each $\W^{(n)}$ are mutually orthonormal:\footnote{More explicitly, if $\sum_{\alpha=1}^d \sum_{a =1}^{\chiMPO-1} [ W^{(n)} ]_{ab}^{\alpha*} [W^{(n)}]_{ac}^\alpha = \delta_{bc}$ for each $\W^{(n)}$.}
	\begin{equation}
		\sum_{a=1}^{\chiMPO-1} \IP{\W^{(n)}_{ab},\W^{(n)}_{ac}} = \delta_{bc}, \quad \forall n \in \Z/N\Z, 1 \le b,c \le D-1.
		\label{eq:left_canonical_form}
	\end{equation}
Similarly, an MPO is \textbf{right-canonical} if all the rows except the first are mutually orthonormal.	
	
	The representation \eqref{eq:H_UCMPO_form}  is not unique, which is a manifestation of gauge freedom. Two MPOs $\{\widehat{W}^{(n)}\}$ and $\{\widehat{W}^{(n)'}\}$ are \textbf{gauge equivalent} if there gauge exist matrices $\{G_n\}$ such that
	\begin{equation}
		G_{n-1} \widehat{W}^{(n)'} = \widehat{W}^{(n)} G_n, \quad n \in \Z/N\Z.
	\end{equation}
	So long as $\widehat{H}$ is sufficiently local, one can show\cite{parker2019local} there exist a gauge where the $\W^{(n)}$'s are left-canonical (and another gauge for right canonical).

	Now that we have set definitions, the next section describes the compression algorithm for unit cell MPOs.

	\subsection{The Unit Cell Compression Algorithm}
	
	The rough idea of the compression algorithm for unit cell MPOs $\{\W^{(n)}\}$ is as follows. 
	\begin{enumerate}
		\item Compute the right-canonical form $\{\W_R^{(n)}\}$.
		\item Find the gauge transform $\{G_n\}$ needed to transform to left-canonical form $\{\W_L^{(n)}\}$.
		\item Take the SVD decomposition of the gauge transformation matrix: $G_n~=~U_n S_n V^\dagger_n$ and absorb the unitaries into the $\W$'s. This realizes the almost-Schmidt decomposition of Eq.~\eqref{eq:H_almost_Schmid_form}.
		\item Truncate the number of singular values in $S_n$ from $\chiMPO^{(n)}$ to $\chiMPO^{(n)'}$ and correspondingly reduce the bond dimensions of the $\W$'s, producing the compressed Hamiltonian.
	\end{enumerate}
	The rest of this section is devoted to showing the correctness of this procedure and filling in the details. It turns out that the most subtle part by far is canonicalization --- the algorithm for putting an MPO into left/right canonical form. We therefore delay the discussion of canonicalization to Appendix \ref{sec:UCMPO_canonicalization} below and for now simply assume it can be done.

We specialize to the case of $N=2$ sites in the unit cell for concision, as larger unit cells are a direct generalization. Suppose that 
\begin{equation}
	R_{n-1} \W_R^{n} = \W^{(n)} R_n
	\label{eq:left_can}
\end{equation}
is a gauge transformation so that the $\W_R^{(n)}$'s are right-canonical. Then
\begin{align}
	\widehat{H} 
	&= \cdots \Wa \Wb \Wa \Wb \cdots\\
	&= \cdots \Wa_R \Wb_R \Wa_R \Wb_R \cdots\\
\end{align}
by introducing $R_2$ at $\infty$ and sweeping to the right. We can then impose a further gauge transformation to make the first row of each $\W_R^{(n)}$ simultaneously identity-free. This is done by 
\begin{equation}
	R_n' := \begin{pmatrix}
		1 & \v{t}_n & 0\\
		0 & I & 0\\
		0 & 0 & 1
	\end{pmatrix},
\end{equation}
where the $1\times (\chiMPO^{(n)}-2)$-dimensional vectors $\v{t}_n$  are chosen such that $0 = \v{c}_0^{(n)} + \v{t}_n - \v{t}_{n+1} A_0^{(n)}$ where $A_0^{(n)}$ and $\v{c}_0^{(n)}$ are the $\1$-components of $\AA^{(n)}$ and $\v{c}^{(n)}$ respectively.
\begin{equation}
\begin{aligned}
	&\begin{pmatrix}
		\v{c}_0^{(1)} & \cdots & \v{c}_0^{(N)}
	\end{pmatrix}
	=\\
	&\begin{pmatrix}
	\v{t}_1 & \cdots & \v{t}_N	
\end{pmatrix}
\begin{pmatrix}
	I & & \cdots & -A_0^{(n)}\\
	-A_0^{(1)} & I & & &\\
	& \ddots & \ddots\\
	& & -A_0^{(n-1)} & I
\end{pmatrix}.
\end{aligned}
\label{eq:first_row_id_free_condition}
\end{equation}
In practice, one should solve this by imposing the identity free condition column-by-column. For each new column, this requires solving a linear equation of size $N$. Using the same technique from Eq.~(71) of \cite{parker2019local}, the total operation can be performed in $O(N \chiMPO^2)$ operations.

Imposing this gauge we may assume $\W_R^{(n)}$ has no identity components in its first row. This implies that the first column of $\W_R^{(n)}$ is already orthogonal to all the other columns, such that the gauge transformation to the left-canonical form can be written as
\begin{equation}
	C_{n-1} \W_R^{(n)} = \W_L^{(n)} C_n\,,
	\label{eq:center_gauge_transform}
\end{equation}
with block-diagonal gauge transformation matrices $C_n = \diag(1 \; \mathsf{C}_n \; 1)$. Putting in a $C_2$ matrix at $-\infty$ and sweeping it to the center, we arrive at a mixed canonical form
\begin{align}
	\widehat{H} &= \cdots \Wa_L \Wb_L C_2 \Wa_R \Wb_R \cdots\\
	 &= \cdots \Wb_L \Wa_L C_1 \Wb_R \Wa_R \cdots
\end{align}
As $C_n$ are block diagonal, we can compute their SVD's
\begin{equation}
	C_n = U_n S_n V_n^\dagger 
	\label{eq:SVD_block_diagonal}
\end{equation}
which will also be block-diagonal. Define
\begin{align}
	\Q^{(n)} &:= U_{n-1}^\dagger \W_L^{(n)} U_n\\
	\P^{(n)} &:= V_{n-1}^\dagger \W_R^{(n)} V_n
\end{align}
for $n \in \Z/N\Z$. Then, since $U_n U_n^\dagger = I = V_n V_n^\dagger$, we have
\begin{align}
	\widehat{H} &= \cdots \Qa \Qb S_2 \Pa \Pb \cdots\\
	&= \cdots \Qb \Qa S_1 \Pb \Pa \cdots
\end{align}
which is analogous to center-canonical form for MPS. The center bond can be swept back and forth via the gauge relation
\begin{equation}
	S_{n-1} \P^{(n)} = \Q^{(n)} S_n, \quad n \in \Z/N\Z.
	\label{eq:center_canonical_gauge_condition}
\end{equation}

\begin{figure}
\begin{algorithm}[H]
	\caption{Unit Cell iMPO Compression}
	\label{alg:iMPO_compression}
	\setstretch{1.35}
	\begin{algorithmic}[1]
		\Require{$\{\W^{(n)}\}$ is a first-order (see \cite{parker2019local}) unit cell iMPO.}
		\Procedure{UnitCellCompress}{$\W^{(n)},\eta$} \Comment{Cutoff $\eta$}
		\State $\W_R^{(n)} \gets \textsc{RightCan}[\W^{(n)}]$ 
		\State $\W_R^{(n)} \gets R_{n-1}' \W_R^{(n)} R_n^{'-1}$ so that $\CC_0^{(n)} = 0$  \Comment{Use $\v{t_n}$ from Eq.~\eqref{eq:first_row_id_free_condition}.}
		\State $\W_L^{(n)}, C_n \gets
		\textsc{LeftCan}[\W_R^{(n)}] $
		\State $(U_n, S_n, V_n^\dagger) \gets \textsc{SVD}[C_n]$
		\State $\Q^{(n)}, \P^{(n)} \gets U_{n-1}^{\dagger} \W_L^{(n)} U_n \,,\, V_{n-1}^{\dagger} \W_R  V $
		\State $\chiMPO^{(n)} \gets  \max_a \{1 \le a \le \chiMPO^{(n)}]: S^{(n)}_{aa} > \eta \}$ \Comment{Defines the projector $\PP_n$.}
		\State{$\Q^{(n)} \gets \PP_{n-1}^\dagger \Q_n  \PP_n$}
		\State{$S_n\hspace{0.8em} \gets \PP_{n}^\dagger S^{(n)} \PP_n$}
		\State{$\P^{(n)} \gets \PP_{n-1}^\dagger \P^{(n)}  \PP_n$}
		\State \textbf{return} $\Q^{(n)}$
		\EndProcedure	
	\end{algorithmic}
\end{algorithm}
\end{figure}

To see how compression works, we adopt the technique of assuming that the operator can be represented \textit{exactly} by an MPO of lower bond-dimension, i.e. that a number of the singular values vanish exactly. Finding the lower bond dimension MPO uses the same algorithm as compression when the small singular values are truncated, so this shows the correctness of the algorithm. 

Thus we assume, temporarily, that only $\chiMPO^{(n)'}$ of the $\chiMPO^{(n)}$ singular values of $\mathsf{S}_n$ are non-zero. Hence there are projection operators $\PP_n$ from bond dimension $\chiMPO^{(n)}$ to bond dimension $\chiMPO^{(n)'}$ with $\PP_n \PP_n^\dagger$ a projector and $\PP_n^\dagger \PP_n = I_{1+\chi^{(n)'}+1}$ and
\begin{align}
	S_n = S_n \PP_n \PP_n^\dagger = \PP_n S_n' \PP_n^\dagger = \PP_n \PP_n^\dagger S_N
	\label{eq:projection_of_singular_values}
\end{align}
where $S_n'$ is the projected diagonal matrix of non-zero singular values. 

We can then introduce pairs of projectors on each bond:
\begin{align}
	\widehat{H}
	&= \cdots \Qa \Qb S_2 \PP_2 \PP_2^\dagger \Pa \Pb \cdots\\
	\nonumber
	&= \cdots \Qa S_1 \Pb \PP_2 \PP_2^\dagger \Pa \Pb \cdots\\
	\nonumber
	&= \cdots \Qa S_1 \PP_1 \PP_1^\dagger \Pb \PP_2 \PP_2^\dagger \Pa \Pb \cdots\\
	\nonumber
	&= \cdots \PP_2^\dagger \Qa \PP_1 \PP_1^\dagger \Pb \PP_2 S_2' \PP_2^\dagger \Pa \PP_1 \PP_1^\dagger \Pb \PP_2\cdots\\
	\nonumber
	&= \cdots \PP_1^\dagger \Qb \PP_2 \PP_2^\dagger \Pa \PP_1 S_1' \PP_1^\dagger \Pb \PP_2 \PP_2^\dagger \Pa \PP_1\cdots
\end{align}
It is now clear how to define a new representation for $\widehat{H}$ with a reduced bond dimension:
\begin{align}
	\P^{(n)'} := \PP_{n-1}^\dagger \P^{(n)} \PP_n\\
	\Q^{(n)'} := \PP_{n-1}^\dagger \Q^{(n)} \PP_n\\
\end{align}
whereupon
\begin{equation}
	\widehat{H} = \cdots \Q^{(1)'} \Q^{(2)'} S_2 \P^{(1)'} \P^{(2)'} \cdots
	\label{eq:compressed_H}
\end{equation}
is a representation of $\widehat{H}$ with lower bond dimension. If we now relax the requirement that the truncated singular values were exactly zero, the strict equality of the new representation becomes approximate. 

\subsection{Canonicalization \& Topological Sorting for Unit Cell MPOs}
\label{sec:UCMPO_canonicalization}

\begin{figure}
	\includegraphics{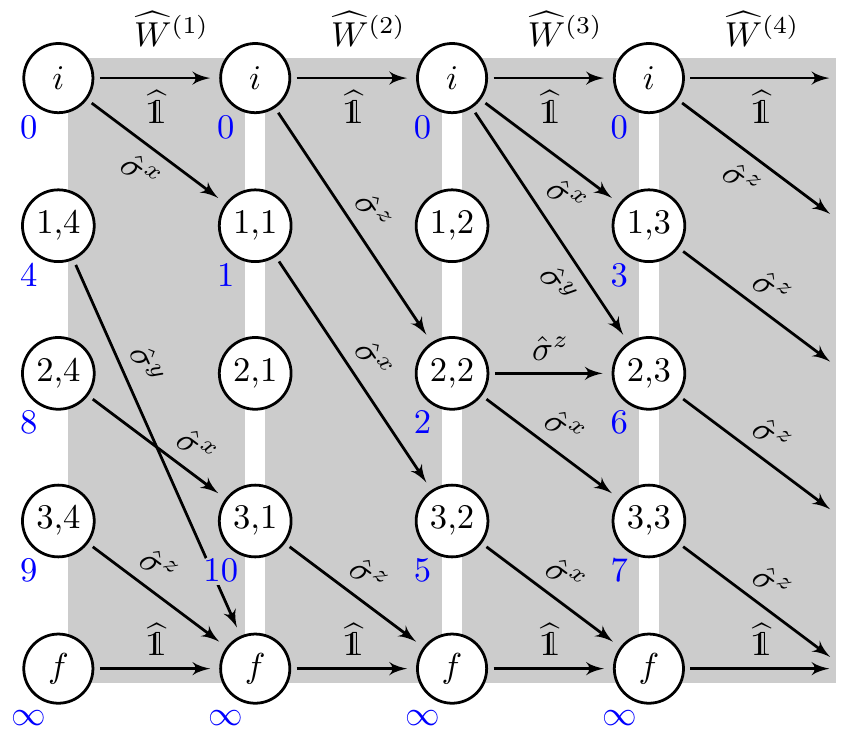}
	\caption{An example of the finite state machine for an MPO with unit cell size $4$. One on-site operator is placed for each arrow, and the arrows wrap around from right to left. Each gray box represents the data stored in one tensor. The UCMPO has bond dimension $(\chiMPO_4,\chiMPO_1,\chiMPO_2,\chiMPO_3) = (5,4,4,5)$ and is loop free. The blue numbers are a (non-unique) topological ordering for the nodes.}
	\label{fig:UCMPO_FSM_example}
\end{figure}

	In this section we provide the ``missing link'' needed to complete the compression procedure: a canonicalization algorithm. Any unit cell MPO (UCMPO) can be put into left or right canonical form using QR iteration~\cite{parker2019local} with cost $O(N \chiMPO^3)$. As many as $40$ iterations can be necessary to reach high precision, making this quite slow in practice. However, the MPOs for Hamiltonians in DMRG have a special property, a ``topological ordering'', which enables canonicalization to be performed with cost $O(N \chiMPO^3)$ but  \textit{without iteration}. For large MPOs such as the one for BLG with $\chiMPO\sim 100,000$, this is a crucial speed-up. We first define a ``topological ordering,'' then provide the canonicalization algorithm and a proof of its correctness and runtime. We conclude the section with a few remarks on practical implementation details.

	An MPO can be thought of as a finite state machine (FSM) for placing on-site operators in a certain order \cite{crosswhite2008finite}. For MPOs with $N$ tensors in a unit cell, the FSM gains an additional structure: the FSM has $N$ parts, with the nodes of part $n$ corresponding to the bond between $\W^{(n-1)}$ and $\W^{(n)}$ and edges between parts $n-1$ and $n$ corresponding to tensor elements $\W_{ab}^{(n)}$. See Fig.~\ref{fig:UCMPO_FSM_example} for an example.

	When one writes down an (non-unit cell) MPO $\W$ for a Hamiltonian ``by hand", then the MPO generally has a special structure: $\W$ is upper-triangular as a matrix. In Ref.~\cite{parker2019local}, the upper triangular structure was shown to permit a fast canonicalization algorithm. However, this does not immediately generalize to a unit cell MPO, for a simple reason:  if a unit cell MPO $\UCMPO$ has bond dimensions $\chiMPO^{(1)}, \chiMPO^{(2)} \dots \chiMPO^{(n)}$, not all equal, then the matrices are rectangular and cannot all be upper triangular. To find a good generalization of triangularity, we must look to the finite state machine.

	A UCMPO $\UCMPO_{n=1}^N$ is said to be \textbf{loop free} if its finite state machine contains no loops \textit{after} the initial and final nodes are removed from the graph.  For $N=1$, then an upper-triangular MPO is always loops free, and a loop-free MPO is always upper-triangular (up to permutation). We stress that both the upper triangular and loop free conditions are gauge-\textit{dependent}. Furthermore, for any $N$, if each $\W^{(n)}$ is square and upper-triangular, then the UCMPO is loop free. The converse is almost true as well; any loop free UCMPO is an upper-triangular MPO ``in disguise''. To see this, we need a definition, which will be at the heart of this section.

	\begin{defn}
		A \textbf{topological ordering} for a UCMPO $\UCMPO_{n=1}^N$ is an ordering of the nodes of the FSM (excluding the initial and final nodes)
		\begin{equation}
			O = \st{(a_1,n_1) \prec (a_2,n_2) \prec \cdots \prec (a_\chiMPO,n_\chiMPO)}
			\label{eq:top_ordering}
		\end{equation}
		such that
		\begin{equation}
			\W^{(n)}_{ab} = 0 \text{ whenever } (a,n) \succeq (b,n+1),	
			\label{eq:top_ordering_condition}
		\end{equation}
		where $n \in \Z/N\Z$ indexes the bonds (explicitly, bond $n+1$ connects $\W^{(n)}$ to $\W^{(n+1)}$), $a \in \N$ indexes the node within the bond, and $\chiMPO = \sum_n \chiMPO^{(n)}$ is the total number of nodes. 
	\end{defn}
If an MPO is loop free, then its finite state machine (excluding the initial and final nodes) is a directed acyclic graph, and thus contains at least one topological ordering. This is easily computed by Kahn's algorithm (a standard result in graph theory) with cost linear in the number of nodes plus edges in the FSM. With this, we can show that loop free UCMPOs are upper triangular ones ``in disguise'' and then use this ordering as the basis for an efficient canonicalization algorithm.

\begin{lemma}
	Suppse $\UCMPO_{n=1}^N$ is a loop free MPO. Then, by inserting rows and columns of zeros and permuting the rows and columns of the matrices (which is a gauge transform), $\UCMPO$ can be made upper triangular.
\end{lemma}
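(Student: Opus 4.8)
\medskip

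The plan is to read the desired upper-triangular form directly off a topological ordering of the finite-state machine, built up using two operations that both leave the represented operator $\widehat H$ invariant: (i) a common permutation $\pi_n$ of the columns of $\W^{(n-1)}$ and the rows of $\W^{(n)}$ fixing the first and last index of each bond --- a genuine gauge transform with $G_n$ the permutation matrix of $\pi_n$; and (ii) insertion of a zero column into $\W^{(n-1)}$ together with a matching zero row into $\W^{(n)}$ at a common interior position --- this adds a virtual state that is never entered, so the matrix product, and hence $\widehat H$, is unchanged. Both operations preserve the block structure of Eq.~\eqref{eq:block_structure} provided the initial index ($1$) and final index (last) of each bond are never moved or split, so the whole procedure stays within the allowed class of manipulations.

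First I would use the hypothesis. ``Loop free'' means that after deleting the initial and final nodes of every bond, the FSM of $\UCMPO$ has no directed cycle, i.e.\ it is a DAG on its interior nodes; Kahn's algorithm then yields a topological ordering $O$ of these $M$ interior nodes satisfying Eq.~\eqref{eq:top_ordering_condition}. Next I would pad every bond to a common size $M+2$, indexing slots by $0,1,\dots,M,M{+}1$, with $0$ the initial node, $M{+}1$ the final node, and $1,\dots,M$ identified with the $M$ positions of $O$. An interior slot $k\in\{1,\dots,M\}$ is declared \emph{occupied} in bond $n$ iff the $k$-th node of $O$ lies on bond $n$; for every unoccupied slot I insert a zero row into $\W^{(n)}$ and the matching zero column into $\W^{(n-1)}$ via operation (ii), and I apply operation (i) to reorder the genuine interior indices of each bond into the order dictated by $O$. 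Since each node of $O$ carries a unique bond label, each interior slot is occupied in \emph{exactly one} bond, so the occupied-slot sets of distinct bonds are pairwise disjoint.

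It then remains to check that every padded $\W^{(n)}$ is upper triangular. The $(0,0)$ and $(M{+}1,M{+}1)$ entries equal $\1$, and every entry of column $0$ below the diagonal and of row $M{+}1$ left of the diagonal vanishes, by the block structure Eq.~\eqref{eq:block_structure}. For an interior--interior entry $\W^{(n)}_{ab}$ with $1\le a,b\le M$: its row index refers to bond $n$ and its column index to bond $n+1$, so $\W^{(n)}_{ab}\ne 0$ forces slot $a$ occupied in bond $n$ and slot $b$ occupied in bond $n+1$; by disjointness $a\ne b$, and by Eq.~\eqref{eq:top_ordering_condition} a nonzero entry forces $(a,n)\prec(b,n+1)$ in $O$, i.e.\ $a<b$. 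The ``$\CC$''-type entries ($a=0$), the ``$\BB$''-type entries ($b=M{+}1$), and the ``$\DD$''-type entry ($a=0$, $b=M{+}1$) all trivially satisfy $a\le b$. Hence every nonzero entry of each $\W^{(n)}$ lies on or above the diagonal, which is exactly the claim.

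The step I expect to be the main obstacle is the bookkeeping of the second paragraph: one only has \emph{per-bond} gauge freedom (independent permutations and zero-padding of each bond), yet one must install the \emph{single} global slot labelling coming from $O$ in all $N$ matrices simultaneously --- and it is precisely the zero-padding that reconciles these, by giving every bond the same index set with the ``foreign'' positions left inert. A secondary point requiring care is confirming that acyclicity of the \emph{whole} interior FSM across bond boundaries (not of each $\W^{(n)}$ separately) is the correct reading of ``loop free'', since this is what both lets Kahn's algorithm run globally and forces the occupied-slot disjointness used to rule out nonzero interior diagonal entries.
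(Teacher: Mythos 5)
Your proposal is correct and follows essentially the same route as the paper: both take a topological ordering of the (interior) FSM nodes, embed every bond's index set into the single globally ordered set by zero-padding the foreign positions (the paper packages your insertions-plus-permutation into one rectangular gauge matrix $\mathbb{P}_n$ with $\mathbb{P}_n^\dagger\mathbb{P}_n = I$), and then read off upper triangularity directly from Eq.~\eqref{eq:top_ordering_condition}. Your treatment is, if anything, slightly more careful than the paper's about keeping the initial and final slots fixed and about why distinct bonds occupy disjoint slots.
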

\begin{proof}
	Suppose the bond dimension of $\W^{(n)}$ is $\chiMPO^{(n)}$ on the left and $\chiMPO^{(n+1)}$ on the right, with $\chiMPO = \sum_n \chiMPO^{(n)}$. Let $O$ be a topological ordering for $\UCMPO$ of the form \eqref{eq:top_ordering}. Define a gauge matrix $\mathbb{P}_n$ of dimension $\chiMPO^{(n)}\times \chiMPO$ with matrix elements
	\begin{equation}
		\left[\mathbb{P}_{n}\right]_{b,i} 
		= \begin{cases}
			1 & \text{ if $(b,n+1) = O_i$}\\
			0 & \text{ otherwise.}
		  \end{cases}
	\end{equation}
	This ``blows up" $\W^{(n)}$ on the right to bond dimension $\chiMPO > \chiMPO^{(n+1)}$ by inserting zeros, and puts the indices into topological order. One can check that $\mathbb{P}_n^\dagger \mathbb{P}_n = I_{\chiMPO^{(n+1)}}$, so we may define $\W^{(n)'} := \mathbb{P}_{n-1}^{\dagger} \W^{(n)} \mathbb{P}_n$ of size $\chiMPO \times \chiMPO$ (which obeys $\mathbb{P}_{n-1} \W^{(n)'} = \W^{(n)} \mathbb{P}_n$, making it a gauge transformation). 
	
	The new MPO $\W^{(n)'}$ is upper-triangular. To see this, take $i \ge j$. Then either $\W^{(n)'}_{ij} = W_{ab}^{(n)}$ for $O_i = (a,n)$ and $O_j = (b,n+1)$, or $\W^{(n)'}_{ij} = 0$. But $O_i = (a,n) \succeq (b,n+1) = O_j$, so $W_{ab}^{(n)} = 0$ regardless. Therefore $\W^{(n)'}$ is upper triangular.
\end{proof}

We present the algorithm for (left) canonicalization of loop free UCMPOs in Alg \ref{alg:UCMPO_UT_canonicalization}. We now prove its correctness, then analyze its cost.

\begin{figure}
\begin{algorithm}[H]
	\caption{Unit Cell iMPO (Left) Canonicalization}
	\label{alg:UCMPO_UT_canonicalization}
	\setstretch{1.35}
	\begin{algorithmic}[1]
		\Require{$\{\W^{(n)}\}_{n=1}^N$ is a loop free UCMPO.}
		\Procedure{UnitCellLeftCanonical}{$\W^{(n)},\eta$}
		\State $O =$ \textsc{Kahn'sAlgorithm}[FSM[$\{\W^{(n)}\}$]]
		\For{$(b,n+1) \in O$}
		\State $P \gets \setc{a}{ O \ni (a,n+1) \prec (b,n+1)}$
		\State $r_a \gets \sum_c \IP{\W_{ca}, \W_{cb}},\quad \forall a \in P$
		\State $R \gets I_{\chiMPO_n}$, $R_{ab} \gets r_a, \quad \forall a \in P$
		\State $\W^{(n)} \gets \W^{(n)}R$, $\W^{(n+1)} \gets R^{-1} \W^{(n+1)}$
		\State $R \gets I_{\chiMPO_n}$, $R_{bb} \gets \left( \sum_c \IP{\W_{cb}^{(n)},\W_{cb}^{(n)}} \right)^{-1/2}$ 
		\State $\W^{(n)} \gets \W^{(n)}R$, $\W^{(n+1)} \gets R^{-1} \W^{(n+1)}$
		\EndFor
		\State \textbf{return} $\UCMPO$
		\EndProcedure
	\end{algorithmic}
\end{algorithm}
\end{figure}

\begin{prop}
	Suppose $\UCMPO_{n=1}^N$ is loop free. The output of Alg. \ref{alg:UCMPO_UT_canonicalization} is a left canonical UCMPO.
\end{prop}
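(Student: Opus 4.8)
The plan is to recognize Algorithm~\ref{alg:UCMPO_UT_canonicalization} as a modified Gram--Schmidt orthonormalization of the virtual ``states'' of the MPO, carried out in the order supplied by a topological sort, and to show that this particular order is exactly what guarantees that orthonormalizing one column never spoils the columns already done. First I would set up the objects. Since $\UCMPO$ is loop free, deleting the initial and final nodes of its finite state machine leaves a directed acyclic graph, so Kahn's algorithm returns a topological ordering $O$ as in Eq.~\eqref{eq:top_ordering}, which by definition satisfies $\W^{(n)}_{ab}=0$ whenever $(a,n)\succeq(b,n+1)$ (Eq.~\eqref{eq:top_ordering_condition}); this is the one structural fact the proof uses. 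Each pass of the loop treats one ``middle'' node $(b,\beta)$, which is simultaneously a column of $\W^{(\beta-1)}$ and a row of $\W^{(\beta)}$ (the first and last columns --- the initial and final nodes --- are never processed). The first gauge matrix $R$ equals the identity except in column $b$, where it carries the coefficients $r_a=\sum_c\IP{\W_{ca},\W_{cb}}$ for the strictly earlier same-bond nodes $a\in P$; right-multiplication $\W^{(\beta-1)}\to\W^{(\beta-1)}R$ subtracts from column $b$ its components along the columns in $P$, and the companion left-multiplication $\W^{(\beta)}\to R^{-1}\W^{(\beta)}$ keeps the represented operator fixed. The second $R$ simply rescales column $b$ to unit norm. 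Each $R$ is unipotent or diagonal, hence invertible, so every pass is a legitimate gauge transform. (The degenerate case in which a column becomes exactly zero after Gram--Schmidt just signals a redundant virtual state, which can be removed, reducing the bond dimension.)

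The core of the argument is the following invariant, proved by induction along $O$: immediately after the pass for $(b,\beta)$, the columns $\{(a,\beta):(a,\beta)\preceq(b,\beta)\}$ of $\W^{(\beta-1)}$ are orthonormal under the row-summed scaled-Frobenius product $\sum_a\IP{\W_{a\,\cdot},\W_{a\,\cdot}}$ of Eq.~\eqref{eq:left_canonical_form}. Granting the inductive hypothesis, the current pass makes column $b$ orthonormal to the (already orthonormal) columns of $P$ by the usual Gram--Schmidt and normalization identities. What remains is to check that no \emph{later} pass disturbs this. A later pass, treating some node $(b',\beta')\succ(b,\beta)$, alters $\W^{(\beta-1)}$ only if $\beta'=\beta$ (a right-multiplication, i.e.\ a column operation) or $\beta'=\beta-1$ (the accompanying left-multiplication by $R^{-1}$, a row operation). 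In the first case $R$ differs from the identity only in column $b'\neq b$, so column $b$ of $\W^{(\beta-1)}$ is untouched. In the second case $R^{-1}$ subtracts multiples of row $b'$ of $\W^{(\beta-1)}$ from earlier rows, which changes column $b$ only through the factor $\W^{(\beta-1)}_{b'b}$; but $(b',\beta-1)\succ(b,\beta)$ forces $\W^{(\beta-1)}_{b'b}=0$ by Eq.~\eqref{eq:top_ordering_condition}, so again column $b$ is untouched. Hence the orthonormality of each processed column persists to the end of the loop. I expect this persistence-under-row-operations step to be the main obstacle, and it is precisely where loop-freeness is essential: without Eq.~\eqref{eq:top_ordering_condition} killing $\W^{(\beta-1)}_{b'b}$, handling bond $\beta-1$ after bond $\beta$ would reintroduce non-orthogonality and one would be forced back to the iterative QR scheme.

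Finally I would collect the leftover columns so as to match Eq.~\eqref{eq:left_canonical_form} exactly. The only columns of $\W^{(\beta-1)}$ not processed by the loop are the identity-in column $(\1,0,\dots,0)^{\mathsf T}$ and the final-node column. The former already has unit norm and is orthogonal to every middle column because the top-row block $\CC$ is identity free (as for any first-order MPO), a property preserved by the passes since $R^{-1}$ never acts on row~$1$ and the column operations only form identity-free combinations of identity-free entries; the latter is exactly the column that the restriction $1\le b,c\le\chiMPO-1$ in Eq.~\eqref{eq:left_canonical_form} exempts. Therefore, after the loop, Eq.~\eqref{eq:left_canonical_form} holds for every $n\in\Z/N\Z$, i.e.\ the output is a left-canonical UCMPO, which is the claim.
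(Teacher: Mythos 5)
Your proof is correct and follows essentially the same route as the paper's: Gram--Schmidt orthonormalization of the columns in topological order, with Eq.~\eqref{eq:top_ordering_condition} guaranteeing that the compensating row operation $R^{-1}$ cannot disturb already-orthonormalized columns. Your invariant is phrased forward (``later passes don't touch this column'') rather than the paper's backward phrasing (``this pass doesn't ruin earlier columns''), but that is the same argument reindexed, and your explicit treatment of the identity column via identity-freeness of $\CC$ is just a detail the paper leaves implicit.
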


\begin{proof}
The main idea is to iterate over the columns of $\UCMPO$ in topological order, orthogonalizing each column against all the previous ones as in Gram-Schmidt.

Let $O$ be a topological order for the nodes as in \eqref{eq:top_ordering}. As each $\W^{(n)}$ has the form \eqref{eq:block_structure}, the first column of each is already orthonormal. We proceed by induction. Suppose that we have orthogonalized columns up to $(d,n+1) \in O$. Then for $(a,m), (b,m) \prec (d,n+1)$,
\begin{equation}
	\sum_{c}	\IP{\W_{ca}^{(m)},\W_{cb}^{(m)}} = \delta_{ab}.
	\label{eq:induction_assumption}
\end{equation}
Let the \textit{predecessor} nodes be $P := \setc{(a,n+1)}{(a,n+1) \prec (d,n+1)}$ and for each $(a,n)~\in~P$, define the inner products with all previous columns as
\begin{subequations}
\begin{align}	
	r_a &:=  \sum_{c}	\IP{\W_{ca}^{(n)},\W_{cd}^{(n)}},\\
	R &:= I_{\chiMPO^{(n+1}} - \sum_{a \in P} r_a \v{e}_{ad}
\end{align}
\end{subequations}
where $\v{e}_{ad}$ is the elementary matrix where entry $ad$ is $1$ and the rest are zero: $(\v{e}_{ad})_{ij} = \delta_{ai} \delta_{dj}$. Here $R$ is only non-identity in column $d$, and it performs elementary column operations when acting to the left and elementary row operations acting to the right. In particular, we have chosen it to perform one Gram-Schmidt step, orthogonalizing column $d$ against previous columns of $\W^{(n)}$. It is easy to invert $R$,  $R^{-1} =  I_{\chiMPO^{(n)}} + \sum_{a \in P} r_a \v{e}_{ad}$, so we can cast this Gram-Schmidt step as a gauge transform with a single non-identity gauge matrix:
\begin{subequations}
	\begin{align}
		\W^{(n)'} &:= \W^{(n)} R,\\
		\W^{(n+1)'} &:= R^{-1} \W^{(n+1)}.
	\end{align}
	\label{eq:gram_schmidt_step_gauge}
\end{subequations} We then have two things to show: (i) that this gauge transform really does orthogonalize column $d$ of $\W^{(n)}$ against previous columns and (ii) that the gauge transform does not ruin the orthogonality condition of Eq.~\eqref{eq:induction_assumption}. Both are easy computations.

For (i), the effect of $R$ acting on $\W^{(n)}$ on the right is to add column $c$ to column $d$ with coefficient $r_c$:
\[
	\W^{(n)}R = \W^{(n)} - \sum_{c\in P} r_c \W^{(n)} \v{e}_{cd}. 
\]
The matrix $\W^{(n)} \v{e}_{cd}$ is the matrix with only column $d$ non-zero, and whose values are those from column $c$ of $\W^{(n)}$. Let $\W_{:,a}$ denote the $a$th column vector of $\W$, as usual. Then
\[
	\IP{\W^{(n)}_{:,a}, [\W^{(n)} \v{e}_{cd}]_{:,d}} = \IP{\W^{(n)}_{:,a}, \W^{(n)}_{:,c}} = \delta_{ac}
\]
by \eqref{eq:induction_assumption}. Therefore, for any $(a,n) \prec (d,n)$,
\begin{align*}
	&\IP{\W^{(n)'}_{:,a},\W^{(n)'}_{:,d}}
	\\
	&\hspace{1em} =\IP{\W^{(n)}_{:a},\W^{(n)}_{:d}}
	-\sum_{c \in P} r_c  \IP{\W^{(n)}_{:,a},[\W^{(n)} \v{e}_{cd}]_{:,d}}\\
	&\hspace{1em} =r_a - \sum_{c \in P} r_c \delta_{ca} = 0.
\end{align*} Therefore column $d$ of $\W^{(n)}$ is orthogonal to each previous column.

For (ii), the effect of $R^{-1}$ acting to the left on $\W^{(n+1)}$ is to add row $d$ to row $c$ with coefficient $r_c$: 
\[
	\W^{(n+1)'} = \W^{(n+1)} + \sum_{c \in P} r_c \sum_e \left(\W^{(n+1)}_{de}\right) \v{e}_{ce} =: \W + \delta\W.
\]
Take $(a,n+1), (b,n+1) \prec (d,n)$. Then
\[
	[\delta\W]_{:,a} = \sum_{c \in P} r_c \left( \W^{(n+1)}_{da} \right) \v{e}_{ca} = 0 
\]
since $\W_{da}^{(n+1)} = 0$ as $(d,n) \succeq (a,n+1)$ by \eqref{eq:top_ordering_condition}, and similarly $[\delta\W]_{:,b} = 0$.
Therefore,
\begin{align*}
	\IP{\W^{(n+1)'}_{:,a}, \W^{(n+1)'}_{:,b}} 
	\ &=\ \IP{\W_{:,a}+\delta \W_{:,a}, \W_{:,b} + \delta\W_{:,b}}\\
	\ &=\ \IP{\W_{:,a},\W_{:,b}} + 0 = \delta_{ab},
\end{align*}
so the induction hypothesis \eqref{eq:induction_assumption} holds for $\{\W^{(n)'}\}$.

As the gauge transform $R$ adds previous columns to column $d$ of $\W^{(n)}$ and adds row $d$ of $\W^{(n+1)}$ to previous rows, the transformed UCMPO is also loop free. Thus after this gauge transform, column $d$ of $\W^{(n)'}$ is orthogonal to all previous columns and all of the structure of the UCMPO is preserved.

A similar, simpler gauge transform
\[
	R = I_{\chiMPO^{(n+1)}} + \Big( \sum_c \IP{\W_{cd}^{(n)'},\W_{cd}^{(n)'}} \Big)^{-1/2} \v{e}_{dd}
\]
can then be used to normalize column $d$ of $\W^{(n)'}$. Repeating the previous arguments, one can show that this similarly does not disrupt the orthogonality of $\W^{(n+1)'}$ or the loop free condition. Therefore we have made one more column orthonormal to the previous ones, completing the proof.
\end{proof}

Algorithm \ref{alg:UCMPO_UT_canonicalization} is quite efficient, with cost that scales as $O(\sum_n [\chiMPO^{(n)}]^3)$. This is somewhat surprising, as it seems we are doing a total of $\chiMPO = \sum_n \chiMPO_n$ gauge transformations, each of which is a matrix multiplication. However, the $R$ matrices are particularly simple: they only differ from the identity in a single column. The transformations $\W^{(n)'} = \W^{(n)}R$ and $\W^{(n+1)'} = R \W^{(n+1)}$ to orthogonalize a column may be performed with rank-$1$ matrix updates whose cost is only $O([\chiMPO^{(n)}]^2)$. Similarly, the gauge transform to normalize a column, which simply scales a row or column, costs only $O(\chiMPO^{(n)})$. As we must iterate over every column of every tensor, the total cost is then $O(\sum_n [\chiMPO^{(n)}]^3)$. However, each iteration requires only elementary matrix operations, for which highly optimized libraries are available, and a low constant factor on the algorithm. One can also employ these algorithms with charge-conserving MPOs, which vastly decreases the runtime in practice.

\subsection{Properties of Compressed Hamiltonians}
\label{sec:properties_of_compressed_Hamiltonians}

We now show that compressed Hamiltonians are accurate approximations to the original Hamiltonian. This will give us guarantees that the (ground state) physics we are interested in is unchanged by compression. In fact, just as with matrix product \textit{states}, the error is controlled by the weight of the truncated singular values. We demonstrate three properties of the compressed Hamiltonian $H'$ \textit{when we truncate a single bond}:

\begin{equation}
	\widehat{H}_{\mathrm{BLG}} \to \widehat{H}'(\epsilon)
\end{equation}
where $\widehat{H}'$ satisfies the following:
\begin{enumerate}
	\item $\widehat{H}'$ is Hermitian
	\item The ground state energy is accurate: $\delta E \le 4^2\, \epsilon$ 
	\item Observables are accurate: $\Delta \braket{\widehat{O}} \le \frac{4^3}{\Delta E} \dn{\widehat{O}} \, \epsilon$
	\item The ground state wavefunction is accurate: $\n{1 - \mathcal{F}} \le \frac{4^4}{\Delta E^2}\, \epsilon^2$,
\end{enumerate}

We reiterate that these are \textit{local} bounds, corresponding to truncating a single bond. It is reasonable to expect that these results can be generalized to \textit{global} bounds which apply when all bonds are truncated simulataneously, just as they can for matrix product states. However, such generalizations are often highly technical and therefore beyond the scope of this work. As a practical matter, Fig.~\ref{fig:MPOfidelity} demonstrates that the global errors in the ground state energy, fidelity, and expectation values are small and decrease as $\epsilon(D)\to 0$.

	\subsubsection{Compressed Hamiltonians are Hermitian}

All Hamiltonians in quantum mechanics are Hermitian. We now show that Hermiticity is preserved by dropping singular values of an Hamiltonian. There is just one caveat: if the spectrum contains a set of degenerate singular values, then one must drop either all of them or none of them: 

\begin{prop}
    Let $\widehat{H}$ be a Hermitian operator with the following almost-Schmidt form:
    \begin{equation}
		\widehat{H} = \widehat{H}_L \1_R + \1_L \widehat{H}_R + \sum_{a=1}^{N_s} \sum_{i=1}^{\chiMPO_a}  \widehat{O}_{L}^{a, i} s_{a} \widehat{O}_R^{a, i}
    \end{equation}
    where $a$ labels degenerate singular values, $N_s$ is the number of distinct singular values,  and $\chiMPO_a$ is the degeneracy of the $a$th Schmidt value. Then, for any subset $\mathcal{A} \subset \{1, 2,...,N_s\}$, the compressed operator
    \begin{equation}
		\widehat{H}' = \widehat{H}_L \1_R + \1_L \widehat{H}_R + \sum_{a \in \mathcal{A}} \sum_{i=1}^{\chiMPO_a}  \widehat{O}_{L}^{a, i} s_{a} \widehat{O}_R^{a, i}
    \end{equation}
    is Hermitian.
\end{prop}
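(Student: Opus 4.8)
The plan is to reduce the claim to a statement about the straddling part of the almost-Schmidt form and then invoke uniqueness of the singular value decomposition. First I would observe that, because the $\widehat{O}_{L}^{a,i}$ and $\widehat{O}_{R}^{a,i}$ are identity-free and the $\{\widehat{O}_{L}^{a,i}\}$ (resp. $\{\widehat{O}_{R}^{a,i}\}$) are orthonormal under the scaled Frobenius inner product $\IP{\widehat{A},\widehat{B}}=\Tr[\widehat{A}^{\dagger}\widehat{B}]/\Tr[\1]$, the splitting of $\widehat{H}$ into a piece supported purely on the left half, a piece supported purely on the right half, and a ``genuinely straddling'' remainder that is identity-free on both halves is unique (extract the purely-left part of any operator by a partial trace against $\1$ on the right, then the purely-right part, then the remainder). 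Conjugation by $\dagger$ manifestly preserves this three-way classification, so $\widehat{H}=\widehat{H}^{\dagger}$ forces $\widehat{H}_{L}=\widehat{H}_{L}^{\dagger}$, $\widehat{H}_{R}=\widehat{H}_{R}^{\dagger}$, and, writing the straddling term as an element $\Psi:=\sum_{a,i}s_{a}\,\widehat{O}_{L}^{a,i}\otimes\widehat{O}_{R}^{a,i}$ of $V_{L}\otimes V_{R}$, where $V_{L}$ (resp. $V_{R}$) is the finite-dimensional span of the left (resp. right) operators,
\begin{equation}
	(J_{L}\otimes J_{R})\,\Psi = \Psi ,
	\label{eq:herm_cross_invariance}
\end{equation}
with $J_{L},J_{R}$ the maps induced by $\dagger$ on $V_{L},V_{R}$ (that $\dagger$ restricts to these spaces follows by comparing the marginals of the two sides of Eq.~\eqref{eq:herm_cross_invariance}).

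Next I would analyze $J_{L}$ and $J_{R}$. A short computation gives $\IP{\widehat{A}^{\dagger},\widehat{B}^{\dagger}}=\overline{\IP{\widehat{A},\widehat{B}}}$ and $(\widehat{A}^{\dagger})^{\dagger}=\widehat{A}$, and $\dagger$ sends identity-free left (right) operators to identity-free left (right) operators; hence $J_{L}$ and $J_{R}$ are antiunitary involutions. Such a map carries a real structure: one can choose orthonormal bases of $V_{L}$ and of $V_{R}$ consisting entirely of Hermitian operators, in which $J_{L}$ and $J_{R}$ act as complex conjugation. In those bases Eq.~\eqref{eq:herm_cross_invariance} says exactly that the coefficient matrix $\mathsf{M}$ of $\Psi$ --- the same $\mathsf{M}$ as in Eq.~\eqref{eq:H_bipartite_form} --- is a \emph{real} matrix. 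But the almost-Schmidt data is just the SVD of $\mathsf{M}$: $W_{L}^{(a)}:=\mathrm{span}\{\widehat{O}_{L}^{a,i}\}_{i}$ and $W_{R}^{(a)}:=\mathrm{span}\{\widehat{O}_{R}^{a,i}\}_{i}$ are the singular subspaces for the value $s_{a}$, i.e.\ the eigenspaces of the reduced operators $\mathsf{M}\mathsf{M}^{T}$ on $V_{L}$ and $\mathsf{M}^{T}\mathsf{M}$ on $V_{R}$. Since $\mathsf{M}$ is real, these reduced operators are real symmetric and therefore commute with complex conjugation, so $J_{L}$ preserves every $W_{L}^{(a)}$ and $J_{R}$ every $W_{R}^{(a)}$; equivalently $J_{L}$ commutes with the orthogonal projector $\Pi_{L}^{(a)}$ onto $W_{L}^{(a)}$, and similarly on the right.

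The conclusion is then immediate. Given $\mathcal{A}\subseteq\{1,\dots,N_{s}\}$, put $\Pi_{L}=\sum_{a\in\mathcal{A}}\Pi_{L}^{(a)}$ and $\Pi_{R}=\sum_{a\in\mathcal{A}}\Pi_{R}^{(a)}$, so the compressed straddling term is $\Psi_{\mathcal{A}}=(\Pi_{L}\otimes\Pi_{R})\,\Psi$. Since $J_{L}\Pi_{L}=\Pi_{L}J_{L}$, $J_{R}\Pi_{R}=\Pi_{R}J_{R}$, and $\Psi$ obeys Eq.~\eqref{eq:herm_cross_invariance}, we get $(J_{L}\otimes J_{R})\,\Psi_{\mathcal{A}}=\Psi_{\mathcal{A}}$, which is precisely the assertion that $\sum_{a\in\mathcal{A}}\sum_{i}\widehat{O}_{L}^{a,i}s_{a}\widehat{O}_{R}^{a,i}$ is self-adjoint. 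Adding back the untouched, already-Hermitian pieces $\widehat{H}_{L}\1_{R}+\1_{L}\widehat{H}_{R}$ yields $\widehat{H}'=(\widehat{H}')^{\dagger}$.

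The step I expect to be the main obstacle is the middle one: showing that $\dagger$ cannot mix distinct singular blocks and, within a degenerate block, preserves the entire block. The real-structure argument settles this, and it also pinpoints why the hypothesis matters --- $J_{L}$ preserves $W_{L}^{(a)}$ but generally not a proper subspace spanned by only some of the $\widehat{O}_{L}^{a,i}$, so dropping part of a degenerate block can destroy Hermiticity, exactly the caveat in the statement. The only other point I would take care with is the uniqueness claim in the first paragraph; after that, everything is routine linear algebra.
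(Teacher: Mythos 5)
Your proof is correct and takes essentially the same route as the paper's (which is only a sketch): both arguments reduce to showing that Hermitian conjugation maps each degenerate singular block of the straddling term to itself, the paper by appealing to uniqueness of the singular value decomposition in operator space, and you by exhibiting Hermitian bases in which $\dagger$ becomes complex conjugation and the coefficient matrix $\mathsf{M}$ is real. Your real-structure argument is in effect a complete proof of the SVD-uniqueness step that the paper invokes as a black box, and it correctly isolates why whole degenerate blocks must be kept or dropped together.
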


For concision, we sketch the proof. Due to the orthonormality condition, $\widehat{H}_L \1_R$, $\1_L \widehat{H}_R$, and $\sum_{a=1}^{N_s} \sum_i^{\chiMPO_a}  \widehat{O}_{L}^{a, i} s_{a} \widehat{O}_R^{a, i}$ must be independently Hermitian. This implies $\sum_{a=1}^{N_s} \sum_i^{\chiMPO_a}  \widehat{O}_{L}^{a, i} s_{a} \widehat{O}_R^{a, i} = \sum_{a=1}^{N_s} \sum_i^{\chiMPO_a}  (\widehat{O}_{L}^{a, i})^\dagger s_{a} (\widehat{O}_R^{a, i})^\dagger$. Note that both the LHS and the RHS of this equation can be regarded as a singular value decomposition in operator space. It follows from the uniqueness of singular value decomposition that $\sum_i^{\chiMPO_a}  \widehat{O}_{L}^{a, i} s_{a} \widehat{O}_R^{a, i} =  \sum_i^{\chiMPO_a}  (\widehat{O}_{L}^{a, i})^\dagger s_{a} (\widehat{O}_R^{a, i})^\dagger$ for any $a$. The proposition follows.

In practice, this means that one should always drop singular values by imposing a minimum value to retain rather than a maximum number.

We note that this result readily generalizes to a case when we truncate all bonds at the same time. To see this, we note that the proof above shows the action of Hermitian conjugation commutes with the singular value matrix. Then, just like in the case of symmetric MPS, singular values can be dropped without ruining Hermiticity.

	\subsubsection{Compressed Hamiltonians are Accurate}

	The accuracy of a compressed Hamiltonian is controlled by the weight of the truncated singular values in almost-Schmidt form, Eq.~\eqref{eq:truncated_singular_value_weight}. Conceptually, one should think of truncation as introducing a small perturbation to the Hamiltonian. If the truncated weight is small, then the perturbation is small, and its effects to the ground state energy, the fidelity, and other observables are also small. 

	To quantify these effects, we employ the sup norm, an operator norm well-suited for ground state properties. If $\widehat{H}$ is an operator, then its \textbf{sup norm} $\dn{\widehat{H}}$ is given by
\begin{equation}
	\dn{\widehat{H}}^2 := \sup_{\ket{\psi}} \frac{\braket{\psi|\widehat{H}\widehat{H}|\psi}}{\braket{\psi|\psi}}.
	\label{eq:sup_norm}
\end{equation}
As the sup norm is extensive, we work with the sup norm per unit cell, so that it is finite.

	We first quote a result from \cite{parker2019local}: the change in the ground state energy is small under truncation. 
	\begin{prop}[Prop. 5 of \cite{parker2019local}]
	Suppose $\widehat{H}$ is a $k$-body Hamiltonian with on-site dimension $d$.\footnote{e.g.~$d=4$ for spin-$\frac{1}{2}$'s or spinless fermions.} If $\widehat{H}$ is compressed from bond dimension $\chiMPO$ to $\widehat{H}'$ with $\chiMPO'$ with truncated weight
	\begin{equation}
		\epsilon^2 := \sum_{a=\chiMPO+1}^{\chiMPO'} s_a^2,
		\label{eq:epsilon_truncated_weight}
	\end{equation}
	Then the change in the ground state energy is bounded by
	\begin{equation}
		\delta E \le \dn{\widehat{H}-\widehat{H}'} \le d^{\frac{k}{2}} \epsilon.
		\label{eq:GS_energy_bound}
	\end{equation}
	\end{prop}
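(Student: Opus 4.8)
The statement is the concatenation of two inequalities, $\delta E \le \dn{\widehat{H}-\widehat{H}'}$ and $\dn{\widehat{H}-\widehat{H}'}\le d^{k/2}\epsilon$, and the plan is to treat them separately. For the first I would use the standard variational (min--max) argument: writing $\ket{\psi_0},\ket{\psi_0'}$ for the normalized ground states of $\widehat{H}$ and $\widehat{H}'$ and $E_0,E_0'$ for their energies, one has $E_0 \le \braket{\psi_0'|\widehat{H}|\psi_0'} = E_0' + \braket{\psi_0'|(\widehat{H}-\widehat{H}')|\psi_0'} \le E_0' + \dn{\widehat{H}-\widehat{H}'}$, and the symmetric inequality with the roles reversed, so that $\delta E \le \dn{\widehat{H}-\widehat{H}'}$. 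This uses only that $\widehat{H}-\widehat{H}'$ is a bounded Hermitian operator, which holds because $\widehat{H}'$ is Hermitian by the Hermiticity-preservation proposition above; it is otherwise routine.

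The substance is the bound $\dn{\widehat{H}-\widehat{H}'}\le d^{k/2}\epsilon$. First I would identify the perturbation explicitly: truncating a single bond from $\chiMPO$ to $\chiMPO'$ in the almost-Schmidt form Eq.~\eqref{eq:H_almost_Schmid_form} leaves $\widehat{H}_L\1_R+\1_L\widehat{H}_R$ untouched and deletes precisely the tail,
\begin{equation}
  \widehat{H}-\widehat{H}' = \sum_{a=\chiMPO'+1}^{\chiMPO} \widehat{O}_L^a\, s_a\, \widehat{O}_R^a ,
\end{equation}
a local operator straddling the cut. Since $\{\widehat{O}_L^a\}$ and $\{\widehat{O}_R^a\}$ are orthonormal in the scaled Frobenius inner product of Eq.~\eqref{eq:scaled_Frobenius_norm}, so are the products $\{\widehat{O}_L^a\widehat{O}_R^a\}$, and hence the \emph{scaled Frobenius norm} of the perturbation is exactly $\big(\sum_{a>\chiMPO'}s_a^2\big)^{1/2}=\epsilon$, the truncated weight of Eq.~\eqref{eq:truncated_singular_value_weight}. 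The remaining task is to convert this into a bound on the sup norm $\dn{\cdot}$. Here I would invoke the $k$-body structure: the perturbation lies in the span of operator strings carrying at most $k$ nontrivial on-site factors, a unit-norm on-site operator from the basis $\{\1,\widehat{O}_2,\dots,\widehat{O}_d\}$ has sup norm at most $\sqrt d$, and tracking how these factors combine — e.g.\ bounding the sup norm by the (unscaled) Hilbert--Schmidt norm on the support and using orthonormality to evaluate the latter — produces the conversion factor $d^{k/2}$, so that $\dn{\widehat{H}-\widehat{H}'}\le d^{k/2}\epsilon$. Combined with the first inequality this is the claim.

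The hard part will be exactly this last step, the passage from the scaled Frobenius norm to the sup norm: a sum of many mutually orthonormal strings can in general have sup norm far exceeding its Frobenius norm, so one must genuinely exploit that here every term straddles a \emph{single} cut and involves at most $k$ on-site operators (with a locality/range hypothesis implicitly packaged into ``$k$-body''). This is precisely the bookkeeping carried out in the proof of Prop.~5 of \cite{parker2019local} that we are quoting; for the present purposes I would either reproduce that argument verbatim or simply cite it, noting that Fig.~\ref{fig:MPOfidelity} confirms the resulting error scaling $\delta E \lesssim \epsilon$ empirically.
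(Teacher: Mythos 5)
The first thing to note is that the paper does not prove this proposition at all: it is explicitly introduced with ``We first quote a result from \cite{parker2019local}'' and no argument is given. So there is no in-paper proof to compare against, and your decision to ultimately defer the hard step to Prop.~5 of \cite{parker2019local} is in fact exactly what the paper does. Within that framing, the parts you do carry out are correct: the variational bound $\delta E \le \dn{\widehat{H}-\widehat{H}'}$ is the standard two-sided min--max argument and needs nothing beyond Hermiticity of the perturbation; and your identification of $\widehat{H}-\widehat{H}' = \sum_{a>\chiMPO'} \widehat{O}_L^a s_a \widehat{O}_R^a$, together with the observation that the products $\widehat{O}_L^a\widehat{O}_R^a$ are orthonormal under the scaled Frobenius inner product so that the perturbation has scaled Frobenius norm exactly $\epsilon$, is right and is the correct starting point.

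The gap is precisely where you say it is, but your sketch of how to close it does not actually work as written, and it is worth being concrete about why. The two obvious routes both fail to produce $d^{k/2}\epsilon$: (i) applying the triangle inequality term by term, using $\dn{\widehat{O}_L^a\widehat{O}_R^a}\le d^{k/2}$ for a string with at most $k$ non-identity unit-norm factors, gives $\dn{\widehat{H}-\widehat{H}'}\le d^{k/2}\sum_a s_a$ --- an $\ell^1$ combination of the truncated singular values, which can be much larger than the $\ell^2$ quantity $\epsilon$; (ii) bounding the sup norm by the unscaled Hilbert--Schmidt norm on the support gives $d^{m/2}\epsilon$ where $m$ is the size of the \emph{total} support of the truncated tail, and for a long-range $k$-body Hamiltonian the terms straddling a cut are spread over a region of size set by the interaction range, so $m\gg k$ in general. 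Getting the stated $d^{k/2}\epsilon$ requires an argument that simultaneously exploits the $\ell^2$ structure (orthonormality of the $\widehat{O}_{L,R}^a$) and the $k$-body structure --- e.g.\ an operator Cauchy--Schwarz over the left/right factors --- and this is the actual content of Prop.~5 of \cite{parker2019local}. So either reproduce that argument or cite it, but do not present ``tracking how the factors combine'' as if it were a proof; as stated it proves only the weaker $\ell^1$ bound.
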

	In practice, the singular values for an Hamiltonian fall off quite quickly --- often exponentially, or as a power law at worse. So retaining only a small number of singular values can produce a highly accurate approximation for the ground state energy. 

It is natural to assume that if the ground state energy is accurate, then the other ground state properties --- such as expectation values of observables and even the entire ground state wavefunction --- are accurate as well. Unfortunately, there is a rare but severe failure of this assumption. Near a first-order phase transition, a tiny perturbation to a Hamiltonian can push the system across the phase transition, changing the properties of the ground state in a discontinuous manner (except for the energy). However, as long as the competing states have large energy difference away from the transition, this will only cause infinitesimal shift of critical parameters. We can therefore understand the generic case by simply assuming we are far from a phase transition and the ground state changes continuously.

To do this, we work in first order perturbation theory. Suppose $\widehat{H}$ is a $k$-body Hamiltonian with a unique ground state with gap $\Delta E$. Suppose we write $\widehat{H} = \widehat{H}' + \delta \widehat{H}$ with truncated weight $\epsilon^2$ as in \eqref{eq:epsilon_truncated_weight}, and consider an observable of interest $\widehat{O}$. Then we can write the new ground state as
\begin{align*}
	\ket{E_0(\delta)'} 
	&= \ket{E_0} + \ket{\delta E_0} + O(\epsilon^2),\\
	\ket{\delta E_0} 
	&= \sum_{\lambda \neq 0} \frac{\braket{E_\lambda|\delta \widehat{H}|E_0}}{E_\lambda - E_0} \ket{E_\lambda}.
\end{align*}
Then
\begin{align*}
	\Delta O &:= \n{\braket{E_0(\delta)'|\widehat{O}|E_0(\delta)'} - \braket{E_0|\widehat{O}|E_0}}\\
	& \; = 2 \n{\mathrm{Re} \braket{E_0|\widehat{O}| \delta E_0}} + O(\epsilon^2), 
\end{align*}
so
\begin{align*}
	&\n{\braket{E_0|\widehat{O}|\delta E_0}}\\
	& \hspace{1em} \le \n{\sum_{\lambda\neq 0} \frac{\braket{E_0|\delta\widehat{H}|E_\lambda}\braket{E_\lambda|\widehat{O}|E_0}}{E_\lambda-E_0}}\\
	& \hspace{1em}\le \frac{1}{\Delta E}  \n{\sum_{\lambda\neq 0} \braket{E_0|\widehat{O}|E_\lambda}\braket{E_\lambda|\delta\widehat{H}|E_0}}\\
	& \hspace{1em} \le \frac{1}{\Delta E}  \n{
		\braket{E_0|\widehat{O}\, \delta \widehat{H}|E_0}
		- \braket{E_0|\widehat{O}|E_0} \braket{E_0|\delta \widehat{H}|E_0}
}\\
& \hspace{1em} \le \frac{2}{\Delta E} \dn{\widehat{O}}  \cdot \dn{\delta \widehat{H}}
\end{align*}
where we have used $\sum_{\lambda\neq 0} \ket{E_\lambda} \bra{E_\lambda} = I - \ket{E_0}\bra{E_0}$ and submultiplicativity of the norm. Using \eqref{eq:GS_energy_bound}, the change in the expectation value is bounded by
\begin{equation}
	\Delta O \le \frac{4 d^{\frac{k}{2}}}{\Delta E} \dn{\widehat{O}} \; \epsilon.
	\label{eq:expectation_value_bound}
\end{equation}
We may therefore conclude that the error in expectation values should be small, provided that the uncompressed Hamiltonian is sufficiently far from a first-order phase transition. The condition of a gapped ground state may be relaxed, in which case the error will be controlled by the matrix elements of $\widehat{O}$ between the ground state and low-lying excited states.

	\subsubsection{Compressed Hamiltonians have High Fidelity}

	We have now seen that the ground state energy and expectation values of observables are accurately captured by the approximate, compressed Hamiltonian. In fact, the entire ground state wavefunction $\ket{E_0'}$ of $\widehat{H}'$ is very close to the original ground state wavefunction $\ket{E_0}$ of $\widehat{H}$. This allows us to use structural properties of $\ket{\psi'}$, such as its correlation length as a function of MPS bond dimension, as an accurate stand-in for the true ones and use them to e.g.~diagnose the scaling properties of phase transitions.

To see this, we again work in perturbation theory, this time to second order. Let $\widehat{H} = \widehat{H}' + \delta \widehat{H}$ and take the same assumptions as above. Then we write
\[
	\ket{E_0(\delta)'} = \ket{E_0} + \ket{\delta E_0} + \ket{\delta^2 E_0}  + O (\epsilon^3). 
\]
so
\[
	\braket{E_0|E_0(\delta)'} = 1 + 0 -\frac{1}{2} \sum_{\lambda \neq 0} \frac{\braket{E_0|\delta \widehat{H}| E_\lambda} \braket{E_\lambda|\delta \widehat{H}|E_0}}{(E_\lambda - E_0)^2}.
\]
By the same argument as above the error in the ground state fidelity is bounded as
\begin{equation}
	\n{1 - \braket{\psi'|\psi}} \le \frac{d^{k}}{\Delta E^2} \; \epsilon^2. 
	\label{eq:fidelity_bound}
\end{equation}
In conclusion, we have now seen that the compressed MPOs should accurately reproduce the true ground state physics and provided error bounds on the precision. This justifies our use of compressed Hamiltonians to study twisted bilayer graphene.

\section{Numerical cross checks}
\label{app:numerical_details}

In addition to the analytic error bounds from the previous section, we also performed extensive numerical checks to verify that our computations were correct in practice as well as in principle. As mentioned above, we used the standard \texttt{TeNPy} library \cite{hauschild2018efficient}, written by one of us, for all DMRG calculations. The MPO compression code was carefully verified by unit testing, benchmarking, and a variety of cross-checks. The two primary cross-checks, which we now describe, verify the accuracy of the compression algorithm and the accuracy of the transformations between the various representations of the Hamiltonian.

\begin{table}
    \begin{center}
    
    \begin{ruledtabular}
    \begin{tabular}{rccc}
           & $\v{k}$ space & $xk$ space & MPO \\ \colrule
    $E_\text{kin}$ (\si{\milli\electronvolt}) &        -69.099        &         -69.095         &   -69.095           \\
    $E_\text{int}$ (\si{\milli\electronvolt}) &         32.263      &       32.257           &         32.257 \\
    $\Delta E_\text{kin}$ (\si{\milli\electronvolt}) &         -      &       $4.2 \times 10^{-3}$           &         $4.2 \times 10^{-3}$ \\
    $\Delta E_\text{int}$ (\si{\milli\electronvolt}) &         -      &       $7.0 \times 10^{-3}$         &         $6.4 \times 10^{-3}$ \\
    \end{tabular}
    \end{ruledtabular}
    \end{center}
    \caption{Energy of $\ket{\psi_{\text{kin}}}$ per momentum per band at $N_y = 2$, $w_0/w_1 = 0.825, d = 30 \si{nm}$. MPO compression was performed with singular value truncation cutoff at $10^{-3} \si{\milli\electronvolt}$. The energy difference is calculated against the $k$ space result.}
    \label{tab:correctness}
\end{table}

\subsection{Gauge Transform Verification}

It is crucial that the compression algorithm is not only precise, as we have shown in previous sections, but also accurate. That is, the output of the implementation of the compression algorithm is indeed the compressed MPO described analytically. To verify this, we use the fact that Algorithm \ref{alg:iMPO_compression} is a gauge transformation, up until the truncation step. This gauge transformation obeys gauge relations given in Sec.~\ref{sec:UCMPO_canonicalization}, and reproduced here for convenience:
\begin{equation}
    \begin{aligned}
        	R_{n-1} \W_R^{n} &= \W^{(n)} R_n,\\
C_{n-1} \W_R^{(n)} &= \W_L^{(n)} C_n,\\
S_{n-1} \P^{(n)} &= \Q^{(n)} S_n.\\
    \end{aligned}
\end{equation}
Due to the large number of small matrix elements, many indexing errors and other accuracy problems only manifest as small errors in the gauge relations. We therefore verified the gauge relations to precision $10^{-13}$, nearly the floating point limit. Together with checks for canonicality of $\W_{R, L}$, this constitutes a sufficient check for the correctness of the algorithmic implementation.

\subsection{Cross checks for \tBLG{} Hamiltonian}
In order to perform HF and DMRG calculations, one needs to represent the Hamiltonian in a variety of ways, as shown in Fig.~\ref{fig:flowchart}, and it is imperative to make sure there is no error when we transform one representation to another. This section reviews a list of numerical checks we performed to guarantee correctness.

We first start from the momentum space representation, which is suitable for HF calculations. This interaction is specified by $V_{\v{q}}$, together with the form factors $\Lambda_{\v{q}}$. This representation can be Wannier localized (i.e. Fourier transformed) to obtain the mixed-$xk$ space representation for iDMRG. Finally we use the mixed-$xk$ space representation to construct an MPO, and compress it down to a smaller bond dimension. The first transformation is a unitary transformation, and is in principle exact up to numerical precision, whereas the precision of the compression is limited by MPO singular value cutoff.

In order to check if these transformations are accurate, we calculate a physical observable using each representation. The kinetic part $H_{\mathrm{kin}}$ of the BM Hamiltonian $h(k)$ is gapped for a wide range of parameters for small $N_y$, and the ground state $\ket{\psi_{\text{kin}}}$ is easy to calculate in each representation. Therefore, we can easily evaluate the following energies in momentum space, mixed-$xk$ space, and DMRG.

\begin{equation}
\begin{aligned}
    E_{\text{kin}} & = \bra{\psi_{\text{kin}}} H_{\text{kin}} \ket{\psi_{\text{kin}}}\\
    E_{\text{int}} & = \bra{\psi_{\text{kin}}} H_{\text{int}} \ket{\psi_{\text{kin}}}
\end{aligned}
\end{equation}
where $H_\text{kin}$ and $H_\text{int}$ are the kinetic and interaction part of the Hamiltonian, respectively. The comparison for $N_y = 2$ is shown in Table~\ref{tab:correctness}. We see that the energy error is very small. We further check that these errors decrease as the accuracy of each calculation is increased (e.g.~by increasing the cutoff range for MPO creation). We may conclude that the transformations between Hamiltonian representations are sufficiently accurate.
	
\end{document}